\newtheorem{lemma}{Lemma}
\newtheorem{observation}{Observation}
\newtheorem{definition}{Definition}
\newtheorem{theorem}{Theorem}
\newtheorem{example}{Example}
\title{Physical Layer Network Coding for Two-Way Relaying with QAM}
\begin{document}

\author{
\authorblockN{Vishnu Namboodiri$^\dagger$, Kiran Venugopal$^\ddagger$ and B. Sundar Rajan$^\ddagger$}\\
\authorblockA{$^\dagger$Qualcomm India Private Limited, Hyderabad, India- 500081$^\ast$  \\
$^\ddagger$Dept. of ECE, Indian Institute of Science \\
Bangalore 560012, India\\
namboodiri.vishnu@gmail.com, \{kiran.v, bsrajan\}@ece.iisc.ernet.in\\
}
%\authorblockN{Kiran Venugopal}\\
%\authorblockA{Dept. of ECE, Indian Institute of Science \\
%Bangalore 560012, India\\
%Email: \\
%}
%\and
%\authorblockN{B. Sundar Rajan}
%\authorblockA{Dept. of ECE, Indian Institute of Science, \\Bangalore 560012, India\\
%Email: 
%}
}

\maketitle
\thispagestyle{empty}	
\let\thefootnote\relax\footnote{$^\ast$The work was done when the first author was with Indian Institute of Science, Bangalore.

 Part of the content of this paper appeared in IEEE Global Telecommunications Conference(GLOBECOM 2012), CA, USA, 3-7 Dec. 2012.
}
\vspace{-0.15 in} 
%%%%%%%%

%%%%%%%%%%%%%%%%%%%%%%%%%%%%%%%%%%%%%%%%%%%%%%%%%%%%%%%%%%%%%%%%%%%%%%%%%%%%%%%%%%%%%
\begin{abstract}
The design of modulation schemes for the physical layer network-coded two way relaying scenario was studied in \cite{ZLL}, \cite{PoY}, \cite{APT1} and \cite{APT2}. In \cite{NVR} it was shown that every network coding map that satisfies the exclusive law is representable by a Latin Square and conversely, and this relationship can be used to get the network coding maps satisfying the exclusive law. But, only the  scenario in which the end nodes use $M$-PSK signal sets is addressed in \cite{NVR} and \cite{VNR}. In this paper, we address the case in which the end nodes use $M$-QAM signal sets. In a fading scenario, for certain channel conditions $\gamma e^{j \theta}$, termed singular fade states, the MA phase performance is greatly reduced. By formulating a procedure for finding the exact number of singular fade states for QAM, we show that square QAM signal sets give lesser number of singular fade states compared to PSK signal sets. This results in superior performance of $M$-QAM over $M$-PSK. It is shown that the criterion for partitioning the complex plane, for the purpose of using a particular network code for a particular fade state, is different from that used for $M$-PSK. Using a modified criterion, we describe a procedure to analytically partition the complex plane representing the channel condition. We show that when $M$-QAM ($M >4$) signal set is used, the conventional XOR network mapping fails to remove the ill effects of $\gamma e^{j \theta}=1$, which is a singular fade state for all signal sets of arbitrary size. We show that a doubly block circulant Latin Square removes this singular fade state for $M$-QAM.
\end{abstract}
%\vspace{.1 in}
%%%%%%%%%%%%%%%%%%%%%%%%%%%%%%%%%%
\section{Preliminaries and Background}
%\vspace{-.1 in}
We consider the two-way wireless relaying scenario shown in Fig.1, where bi-directional data transfer takes place between the nodes A and B with the help of the relay R. It is assumed that all the three nodes operate in half-duplex mode. The relaying protocol consists of the following two phases: the \textit{multiple access} (MA) phase, during which A and B simultaneously transmit to R using identical square $M$-QAM signal sets and the \textit{broadcast} (BC) phase during which R transmits to A and B using either a square $M$-QAM signal set or a constellation of size more than $M$. Network coding is employed at R in such a way that A (B) can decode the message of B (A), given that A (B) knows its own message. 

%%%%%%%%%%%%%%%%%%%%%%%%%%%%%%%%%%
\begin{figure}[htbp]
\label{fig:relay_channel}
\centering
\subfigure[MA Phase]{
\includegraphics[totalheight=.85in,width=2in]{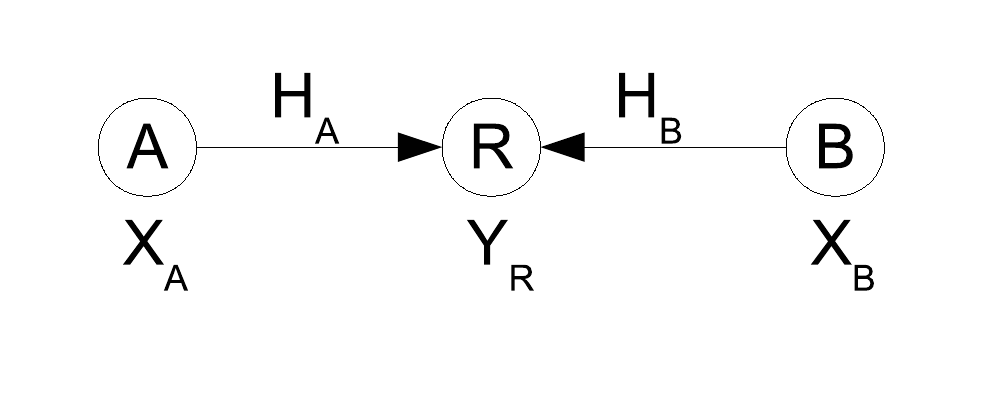}
\label{fig:phase1}
}
\vspace{-0.5 in}
\subfigure[BC Phase]{
\includegraphics[totalheight=.85in,width=2in]{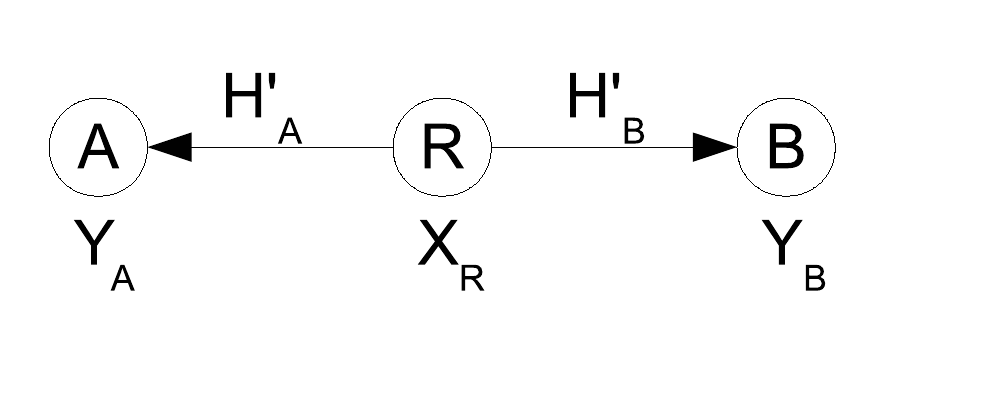}
\label{fig:phase2}
}
\vspace{.4 in}
\caption{The Two Way Relay Channel}
\end{figure}
%\vspace{-0.1 in}
%%%%%%%%%%%%%%%%%%%%%%%%%%%%%%%%
%%%%%%%%%%%%%%%%%%%%%%%%%%%%%% 

\subsection{Background}
 The concept of physical layer network coding has attracted a lot of attention in recent times. The idea of physical layer network coding for the two way relay channel was first introduced in \cite{ZLL}, where the multiple access interference occurring at the relay was exploited so that the communication between the end nodes can be done using a two stage protocol. Information theoretic studies for the physical layer network coding scenario were reported in \cite{KMT},\cite{PoY}. The design principles governing the choice of modulation schemes to be used at the nodes for uncoded transmission were studied in \cite{APT1}. An extension for the case when the nodes use convolutional codes was done in \cite{APT2}. A multi-level coding scheme for the two-way relaying scenario was proposed in \cite{HeN}.

It was observed in \cite{APT1} that for uncoded transmission, the network coding map used at the relay needs to be changed adaptively according to the channel fade coefficients, in order to minimize the impact of the multiple access interference. The Latin Square scheme was studied in \cite{NVR}, \cite{VNR} for two way relaying using $M$-PSK signal sets at the end nodes.

\subsection{Signal Model}
\subsubsection*{Multiple Access (MA) Phase}
Let $\mathcal{S}$ denote the square $M$-QAM constellation used at A and B, where $M=2^{2\lambda}$, $\lambda$ being a positive integer. Assume that A (B) wants to transmit a $2\lambda$-bit binary tuple to B (A). Let $\nu :\mathbb{F}_{2^{2\lambda}}   \rightarrow \mathcal{S}$ denote the mapping from bits to complex symbols used at A and B. Let $\nu(s_A)= x_A$, $\nu(s_B)=x_B \in \mathcal{S}$ denote the complex symbols transmitted by A and B respectively, where $s_A,s_B \in \mathbb{F}_{2^{2\lambda}}$. The received signal at $R$ is given by,
\begin{align}
\nonumber
y_R=h_{A} x_A + h_{B} x_B +z_R,
\end{align}
where $h_A$ and $h_B$ are the fading coefficients associated with the A-R and B-R links respectively. The additive noise $z_R$ is assumed to be $\mathcal{CN}(0,\sigma^2)$, which denotes the circularly symmetric complex Gaussian random variable with variance $\sigma ^2$. We assume a block fading scenario, with the ratio $ h_{B}/h_{A}$ denoted as $z=\gamma e^{j \theta}$, where $\gamma \in \mathbb{R}^+$ and $-\pi \leq \theta < \pi$, is referred as the {\it fade state} and for simplicity, also denoted by $(\gamma, \theta)$.
 
 Let $\mathcal{S}_{R}(\gamma,\theta)$ denote the effective constellation at the relay during the MA Phase and $d_{min}(\gamma e^{j\theta})$ denote the minimum distance between the points in $\mathcal{S}_{R}(\gamma,\theta)$, i.e., 
\begin{align} 
\nonumber
 \mathcal{S}_{R}(\gamma,\theta)=\left\lbrace x_i+\gamma e^{j \theta} x_j \vert x_i,x_j \in \mathcal{S}\right \rbrace,
 \end{align}
\begin{align}
%\nonumber
\label{eqn_dmin} 
d_{min}(\gamma e^{j\theta})=\hspace{-0.5 cm}\min_{\substack {{(x_A,x_B),(x'_A,x'_B)}{ \in \mathcal{S}^2 } \\ {(x_A,x_B) \neq (x'_A,x'_B)}}}\hspace{-0.5 cm}\vert \left(x_A-x'_A\right)+\gamma e^{j \theta} \left(x_B-x'_B\right)\vert.
\end{align}

 From \eqref{eqn_dmin}, it is clear that there exists values of $\gamma e^{j \theta}$ for which $d_{min}(\gamma e^{j\theta})=0$. Let $\mathcal{H}=\lbrace \gamma e^{j\theta} \in \mathbb{C} \vert d_{min}(\gamma,\theta)=0 \rbrace$. The elements of $\mathcal{H}$ are said to be {\it the singular fade states} \cite{NVR}. The set $\mathcal{H}$ depends on the signal set used. For example when $\gamma e^{j \theta}=(1+j)/2$, the effective constellation $\mathcal{S}_{R}(\gamma,\theta)$ has only 12 ($<$16) points when 4-QAM signal set is used at nodes A and B. Hence $(1+j)/2  \in \mathcal{H}$ for 4-QAM.

Let $(\hat{x}_A,\hat{x}_B) \in \mathcal{S}^2$ denote the Maximum Likelihood (ML) estimate of $({x}_A,{x}_B)$ at R based on the received complex number $y_{R}$, i.e.,
 \begin{align}
 (\hat{x}_A,\hat{x}_B)=\arg\hspace{-0.5 cm}\min_{({x}'_A,{x}'_B) \in \mathcal{S}^2} \vert y_R-h_{A}{x}'_A-h_{B}{x}'_B\vert.
 \end{align}
 %%%%%%%%%%%%%%%%%%%%%%%%%%%%%%%%%%%%%%%%%%%%%%%%%%%%
%%%%%%%%%%%%%%%%%%%%%%%%%%%%%%%%%%%%%%%%%%%%%
\subsubsection*{Broadcast (BC) Phase}

Depending on the value of $\gamma e^{j \theta}$, R chooses a map $\mathcal{M}^{\gamma,\theta}:\mathcal{S}^2 \rightarrow \mathcal{S}'$, where $\mathcal{S}'$ is the signal set (of size between $M$ and $M^2$) used by R during $BC$ phase. The elements in $\mathcal{S}^2 $ which are mapped on to the same complex number in $\mathcal{S}'$ by the map $\mathcal{M}^{\gamma,\theta}$ are said to form a cluster. Let $\lbrace \mathcal{L}_1, \mathcal{L}_2,...,\mathcal{L}_l\rbrace$ denote the set of all such clusters. The formation of clusters is called clustering, and the set of all clusters is denoted by $\mathcal{C}^{\gamma,\theta}$ to indicate that it is a function of $\gamma e^{j \theta}.$ The received signals at A and B during the BC phase are respectively given by,
\begin{align}
y_A=h'_{A} x_R + z_A,\;y_B=h'_{B} x_R + z_B,
\end{align}
where $x_R=\mathcal{M}^{\gamma,\theta}(\hat{x}_A,\hat{x}_B) \in \mathcal{S'}$ is the complex number transmitted by R. The fading coefficients corresponding to the R-A and R-B links are denoted by $h'_{A}$ and $h'_{B}$ respectively and the additive noises $z_A$ and $z_B$ are $\mathcal{CN}(0,\sigma ^2$).

In order to ensure that A (B) is able to decode B's (A's) message, the clustering $\mathcal{C}^{\gamma,\theta}$ should satisfy the exclusive law \cite{APT1}, i.e.,

{\footnotesize
\begin{align}
\left.
\begin{array}{ll}
\nonumber
\mathcal{M}^{\gamma,\theta}(x_A,x_B) \neq \mathcal{M}^{\gamma,\theta}(x'_A,x_B), \; \mathrm{for} \;x_A \neq x'_A \; \mathrm{,} \; \forall x_B \in  \mathcal{S},\\
\nonumber
\mathcal{M}^{\gamma,\theta}(x_A,x_B) \neq \mathcal{M}^{\gamma,\theta}(x_A,x'_B), \; \mathrm{for} \;x_B \neq x'_B \; \mathrm{,} \;\forall x_A \in \mathcal{S}.
\end {array}
\right\} \\
\label{ex_law}
\end{align}}
The cluster distance between a pair of clusters $\mathcal{L}_i$ and $\mathcal{L}_j$ is the minimum among all the distances calculated between the points $x_A+\gamma e^{j\theta} x_B ,x'_A+\gamma e^{j\theta} x'_B \in \mathcal{S}_R(\gamma,\theta)$ where $(x_A,x_B) \in \mathcal{L}_i$ and $(x'_A,x'_B) \in \mathcal{L}_j$ \cite{NVR}. The \textit{minimum cluster distance} of the clustering $\mathcal{C}$ is the minimum among all the cluster distances, i.e.,

{\footnotesize
\begin{align}
\nonumber
d_{min}^{\mathcal{C}}(\gamma e^{j \theta})=\hspace{-0.8 cm}\min_{\substack {{(x_A,x_B),(x'_A,x'_B)}\\{ \in \mathcal{S}^2,} \\ {\mathcal{M}^{\gamma,\theta}(x_A,x_B) \neq \mathcal{M}^{\gamma,\theta}(x'_A,x'_B)}}}\hspace{-0.8 cm}\vert \left( x_A-x'_A\right)+\gamma e^{j \theta} \left(x_B-x'_B\right)\vert.
\end{align}
}
The minimum cluster distance determines the performance during the MA phase of relaying. The performance during the BC phase is determined by the minimum distance of the signal set $\mathcal{S}'$. For values of $\gamma e^{j \theta}$ in the neighbourhood of the singular fade states, the value of $d_{min}(\gamma e^{j\theta})$ is greatly reduced, a phenomenon referred as {\it distance shortening}. To avoid distance shortening, for each singular fade state, a clustering needs to be chosen such that the minimum cluster distance at the singular fade state is non-zero and is also maximized.  

A clustering $\mathcal{C}$ is said to remove a singular fade state $ h \in \mathcal{H}$, if $d_{min}^{\mathcal{C}}(h)>0$. 
For a singular fade state $h \in \mathcal{H}$, let $\mathcal{C}^{\lbrace h\rbrace}$ denote a clustering which removes the singular fade state $h$ (if there are multiple clusterings which remove the same singular fade state $h$, consider a clustering which maximizes the minimum cluster distance). Let $\mathcal{C}_{\mathcal{H}}=\left\lbrace \mathcal{C}^{\lbrace h\rbrace} : h \in \mathcal{H} \right\rbrace$ denote the set of all such clusterings. Let $d_{min}({\mathcal{C}^{\lbrace h\rbrace}},\gamma',\theta')$ be defined as,

{\footnotesize
\begin{align}
\nonumber
d_{min}({\mathcal{C}^{\lbrace h\rbrace}},\gamma',\theta')=\hspace{-1 cm}\min_{\substack {{(x_A,x_B),(x'_A,x'_B)} \\ { \in \mathcal{S}^2,} \\ {\mathcal{M}^{\lbrace h\rbrace}(x_A,x_B) \neq \mathcal{M}^{\lbrace h \rbrace}(x'_A,x'_B)}}}\hspace{-0.9 cm}\vert \left( x_A-x'_A\right)+\gamma' e^{j \theta'} \left(x_B-x'_B\right)\vert.
\end{align}
}

The quantity $d_{min}({\mathcal{C}^{\lbrace h\rbrace}},\gamma,'\theta')$ is referred to as  the minimum cluster distance of the clustering $\mathcal{C}^{\lbrace h\rbrace}$ evaluated at $\gamma' e^{j\theta'}.$

In practice, the channel fade state need not be a singular fade state. In such a scenario, among all the clusterings which remove the singular fade states, the one which maximizes the minimum cluster distance is chosen. In other words, for $\gamma' e^{j \theta'} \notin \mathcal{H}$, the clustering $\mathcal{C}^{\gamma',\theta'}$ is chosen to be $\mathcal{C}^{\lbrace h\rbrace}$, which satisfies $d_{min}({\mathcal{C}^{\lbrace h\rbrace}},\gamma',\theta') \geq d_{min}({\mathcal{C}^{\lbrace h' \rbrace}},\gamma',\theta'), \forall h \neq h' \in \mathcal{H}$. Since the clusterings which remove the singular fade states are known to all the three nodes and are finite in number, the clustering used for a particular realization of the fade state can be indicated by R to A and B using overhead bits.

%For $\gamma e^{j \theta} \notin \mathcal{H}$, the clustering $\mathcal{C}$ is chosen to be $\mathcal{C}_{\lbrace h\rbrace}$, which satisfies $d_{min}^{\mathcal{C}_{\lbrace h\rbrace}}(\gamma e^{j \theta}) \geq d_{min}^{\mathcal{C}_{\lbrace h' \rbrace}}(\gamma e^{j \theta}), \forall h \neq h' \in \mathcal{H}$.
%%%%%%%%%%%%%%%%
\begin{example}
In the case of BPSK, if channel condition is $\gamma=1$ and $\theta=0$ the distance between the pairs $(0,1)(1,0)$ is zero as in Fig.\ref{fig:BPSK}(a). The following clustering removes this singular fade state.
$$\{\{(0,1)(1,0)\},\{(1,1)(0,0)\}\}$$
The minimum cluster distance is non zero for this clustering.
%%%%%%%%%%%%%%%% 
 \begin{figure}[h]
\centering
\vspace{-1.05 cm}
\includegraphics[totalheight=2.5in,width=3.5in]{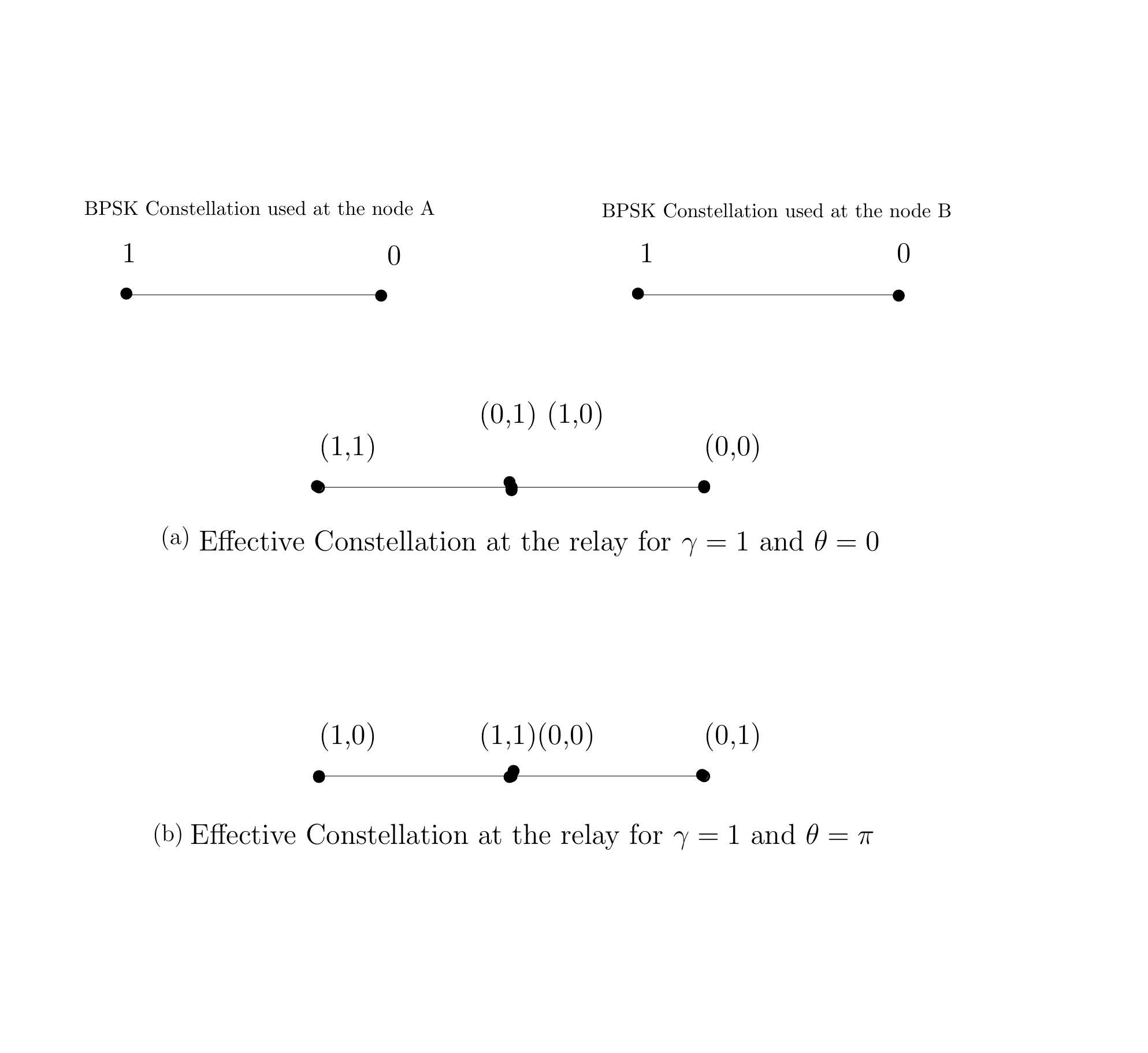}
\vspace{-1.5 cm}
\caption{Effective Constellation at the relay for singular fade states, when the end nodes use BPSK constellation.}     
\label{fig:BPSK}        
\end{figure}
\end{example}
%%%%%%%%%%%%%%%%%%%%%%%%%%%%%%%%%%%%%%%%%%%%%%%%%%
 %\vspace{-.25 in}
%%%%%%%%%%%%%%%%%%%%%%%%%%%%%%%%%%%%%%%%%%%%%%%%
The contributions and organization of the paper are as follows:
\begin{itemize}
\item A procedure to obtain the number of singular fade states for PAM and QAM signal sets is presented.  
\item It is shown that for the same number of signal points $M$, the number of singular fade states for square $M$-QAM is lesser than the number of singular fade states for $M$-PSK. The advantages of this are two fold - QAM offers better distance performance in the MA Phase and QAM requires lesser number of Latin squares (i.e., a reduction in number of overhead bits).
\item It is known from \cite{VR} that the removal of the singular fade state $z$=1 assumes greater significance in a Rician fading scenario. The bit-wise XOR map removes this singular fade state when $M$-PSK signal set is used at nodes A and B for any $M$. It is shown that XOR mapping cannot remove this singular fade state for any  $M$-QAM and a different mapping is obtained to remove this singular fade state for $M$-QAM.
\item Inspired from \cite{VNR}, the problem of partitioning the entire complex plane into clustering independent region as well as clustering dependent region is addressed. The approach followed for $M$-QAM signal set needs to be different from that used for $M$-PSK in \cite{VNR}.
\item The region associated with each singular fade state in the complex plane is obtained analytically for $M$-QAM signal set used at nodes A and B. This helps in associating a Latin Square corresponding to that singular fade state to the said region like in $M$-PSK. By simulation it is shown that the choice of 16-QAM leads to better performance for both the Rayleigh and the Rician fading scenario, compared to 16-PSK.
\end{itemize}
The remaining content is organized as follows:  

In Section \ref{sec2}, we discuss the relationship between singular fade states and difference constellation of the signal sets used by the end nodes. We present expressions to get the number of singular fade states for PAM and square QAM signal sets in Subsections \ref{subsec_1_2} and \ref{subsec_2_2} respectively. In Subsection \ref{subsec_3_2}, it is proved that the number of singular fade states for $M$-QAM is always lesser than that of $M$-PSK signal sets.  In Section \ref{sec3}, the clustering for singular fade states is obtained through completing Latin Squares and a Latin Square for removing the singular fade state $z=1$ is analytically obtained for PAM and QAM signal sets. In Section \ref{sec4}, channel quantization for $M$-QAM signal set is discussed. In particular, the channel quantization for the entire complex plane when nodes A and B use 16-QAM signal set is obtained. In Section \ref{sec5}, simulation results are provided to show the advantage of Latin Square scheme for QAM over XOR network coding scheme as well as Latin Square scheme for PSK signal sets under Rayleigh and Rician fading channel assumptions.

%\vspace{-.65 cm}
 %%%%%%%%%%%%%%%%%%%%%%%%%%%%%%%%%%%%%%%%%%%%%%%%%%%%%%%%
 \section{Singular Fade states and Difference Constellations}
%\vspace{-.15 cm} 
 \label{sec2}
%In this section we show the relationships between singular fade states and difference constellation of the signal set used by the end nodes. %Throughout in our discussion we exclude the singular fade states $z=0$ and $z=\infty$ since they are irremovable singular fade states.
The location of singular fade states in the complex plane for any constellation used at end nodes can be characterised in the following way.
%%%%%
 If node A uses a constellation $\mathcal{S}_1$ of size $M_1$ and node B, a constellation $\mathcal{S}_{2}$ of size $M_2$, the singular fade states $z=\gamma e^{j \theta}$ are of the form
 \begin{equation}
 \label{sing_expression}
 z=\gamma e^{j\theta}=\dfrac{x_A-x_A^{\prime}}{x_B^{\prime}-x_B} 
\end{equation}
 and is obtained by equating $x_A+\gamma e^{j\theta} x_B$ and $x'_A+\gamma e^{j\theta} x'_B$ for $x_A,x_A^{\prime} \in \mathcal{S}_1$ and $x_B,x_B^{\prime} \in \mathcal{S}_2$. Henceforth, throughout the paper,  we assume  both the end nodes use the same constellation, $\mathcal{S}$. 

%%%%%%%%%%%%%%%%%%%%%%%%%%%%%%%%%%%%%%%%%%%%%%%%%%%%%%%%%%%%%
\subsection{Singular Fade States of PAM signal sets} 
\label{subsec_1_2}
 %%%%%%%%%%%%%%%%%%%%%%%%%%%%%%%%%%%%%%%%%%%%%%%%%%%%%%%%%%%%%%
Consider the symmetric $\sqrt M$-PAM signal set given by 
$$ \mathcal{S}= \left\lbrace -(\sqrt M -1)+ 2n \right\rbrace,  ~~~  n \in (0, \cdots ,\sqrt M -1). $$ 
The difference constellation of $\mathcal{S}$ is 
$$ \Delta\mathcal{S}= \left\lbrace x-x' : x,x' \in {\mathcal{S}} \right\rbrace $$ and can be written in the form
$$ \Delta\mathcal{S}= \left\lbrace-2(\sqrt M -1)+ 2n \right\rbrace, ~~~ n \in (0, \cdots ,2(\sqrt M -1)).$$ For example, the 4-PAM signal set and it's difference constellation are given in Fig.\ref{fig:pam} and Fig.\ref{fig:pamdiff} respectively. For each of the difference constellation points, the pair in the signal set which corresponds to this point is also shown.
%%%%%%%%%%%%%%%%%%%%%%%%%%%%%%%%%%%%%%%%%%%
 \begin{figure}[t]
\centering
%\vspace{-.25 in}
\subfigure[$\sqrt M$ PAM constellation]{
\includegraphics[totalheight=.4in,width=2in]{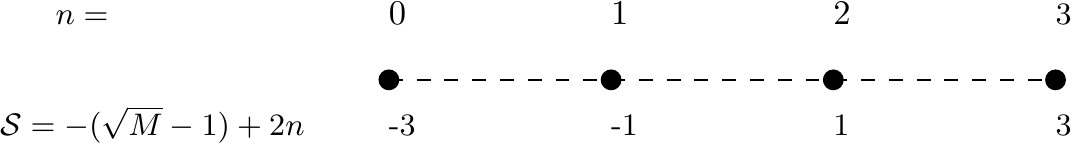}
\label{fig:pam}
}

\subfigure[Difference Constellation]{
\includegraphics[totalheight=.8in,width=3.25in]{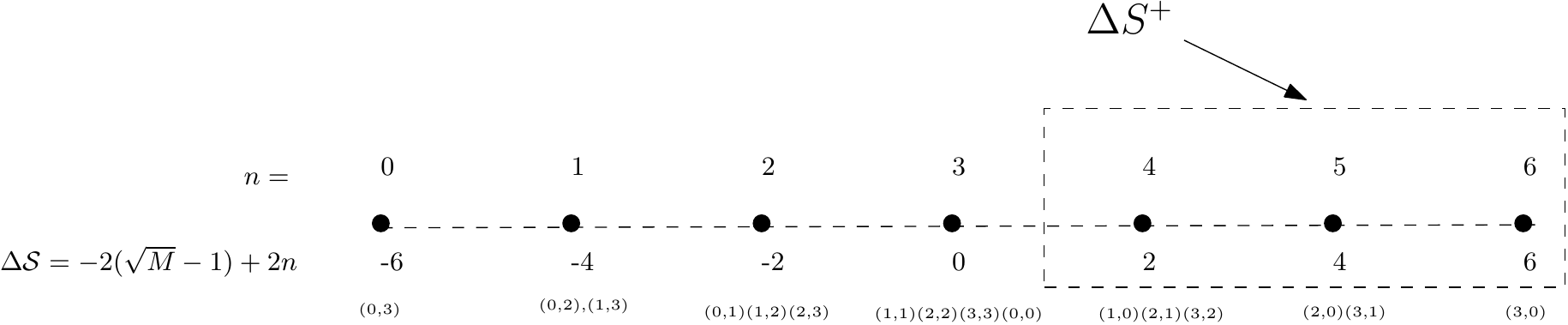}
\label{fig:pamdiff}
}

\caption{$\sqrt M$ PAM constellation and difference constellation for $\sqrt M=4$}
\label{fig:4pam}
\vspace{-.5 cm}
\end{figure}
%%%%%%%%%%%%%%%%%%%%%%%%%%%%%%%%%%%%%%%%%%%%%%%%
We will often consider only the first quadrant of $\Delta{S}$, denoted as $\Delta{S}^{+},$ which for a general complex signal set is given by  $$\Delta{S}^{+} = \{\alpha : \mbox{real}(\alpha)>0, \mbox{imaginary}(\alpha) \geq 0\}.$$ %From Definition \ref{sfs_alter_def}, it can be seen that the number of distinct singular fade states is equal to the number of different mappings $\mathcal{Z}$ possible. 
The following lemma gives the number of singular fade states for PAM signal sets.
 
 %%%%%%%%%%%%%%%%%%%%%%%%%%%%%%%%%%%%%%%%%%%%%%%
 \begin{lemma}
 \label{no_sing_pam}
 The number of singular fade states, for a regular $\sqrt M$-PAM signal set, denoted by $N_{(\sqrt M-PAM)}$ is given by
 \begin{equation}
 \label{sum_euler}
 N_{(\sqrt M-PAM)}= 2 + 4\sum_{n=1}^{\sqrt M -1}n \prod_{p\vert n} \left(1-\frac{1}{p}\right),
 \end{equation}
 where $p|n$ stands for  prime number $p$ dividing $n.$
 \end{lemma}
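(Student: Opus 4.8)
The plan is to exploit the characterization \eqref{sing_expression}: a fade state $z$ is singular precisely when it equals $(x_A-x_A')/(x_B'-x_B)$ for some points of $\mathcal{S}$, i.e.\ when it is a ratio of two nonzero elements of the difference constellation $\Delta\mathcal{S}$. So the task reduces to counting the distinct values taken by such ratios. For the symmetric $\sqrt M$-PAM set every element of $\Delta\mathcal{S}$ is an even integer $2k$ with $|k|\le \sqrt M-1$, so writing $N=\sqrt M-1$ a finite nonzero singular fade state has the form $z=k_1/k_2$ with $k_1,k_2\in\{-N,\dots,-1,1,\dots,N\}$. Hence $N_{(\sqrt M-PAM)}$ is the number of distinct rationals obtainable this way, and the whole proof becomes an exercise in counting reduced fractions.

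First I would strip off the signs. Since $k_1/k_2$ and $(-k_1)/(-k_2)$ coincide while positive and negative ratios are disjoint, the nonzero singular fade states split into equally sized positive and negative families, so it suffices to count positive ratios and multiply by two. A positive ratio $k_1/k_2$ (with $k_1,k_2\in\{1,\dots,N\}$) determines, and is determined by, its lowest-terms representative $r/s$ with $\gcd(r,s)=1$; and $r/s$ is attained by some admissible pair iff $r\le N$ and $s\le N$ (take $k_1=r,\ k_2=s$). Thus the number of distinct positive ratios equals the number of coprime pairs $(r,s)$ with $1\le r,s\le N$.

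The next step is the totient count. Grouping the coprime pairs $(r,s)$ by $\max(r,s)=n$, for each $n\ge 2$ the admissible coprime partners split into those with $r<s=n$ and those with $s<r=n$, each contributing $\phi(n)$ pairs, while $n=1$ contributes only the diagonal pair $(1,1)$; summing over $n$ gives $2\sum_{n=1}^{N}\phi(n)-1$ coprime pairs, and the stated product form appears because $\phi(n)=n\prod_{p\mid n}(1-1/p)$. Doubling for signs yields the leading contribution $4\sum_{n=1}^{N}\phi(n)$, and I would then reconcile the additive constant in \eqref{sum_euler} by separately accounting for the states handled outside the coprime enumeration, namely the self-reciprocal values $z=\pm1$ and the degenerate/limiting states arising when $x_A=x_A'$ or $x_B=x_B'$.

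I expect this additive constant, rather than the leading $\sum\phi$ term, to be the main obstacle: the coprimality reduction and the identification of $\phi(n)$ are routine, but correctly deciding which boundary states ($z\in\{0,\infty\}$ and the self-reciprocal $z=\pm1$) are to be included, and ensuring none of them is double counted across the positive/negative and numerator/denominator symmetrizations, is exactly what pins down the ``$2+$'' in \eqref{sum_euler}. I would therefore verify the constant against the smallest cases ($\sqrt M=2,4$) before trusting the general expression, since a miscount there shifts the answer only by a fixed additive term that is easy to misplace.
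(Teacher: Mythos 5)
Your argument is correct and follows essentially the same route as the paper's proof: reduce singular fade states to ratios of nonzero points of $\Delta\mathcal{S}$, identify the distinct positive ratios with coprime pairs bounded by $\sqrt M-1$, count those with Euler's totient, and double for sign, with $z=\pm1$ handled as the special self-reciprocal case. Your total $4\sum_{n=1}^{\sqrt M-1}\phi(n)-2$ is exactly the value the paper intends (it gives $N_{(4-PAM)}=14$); the paper's ``$2+{}$'' constant agrees only if the $n=1$ term of \eqref{sum_euler} is read as $0$ (equivalently, the sum starts at $n=2$), so your caution about the additive constant is warranted and resolves in favor of your count.
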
 

%%%%%%%%%%%%%%%%%%%%%%%%%%%%%%%%
\begin{proof}
See Appendix \ref{app2}.
\end{proof}
%%%%%%%%%%%%%%%%%%%%%%%%%%%%%%%%%%
\begin{example}
Consider the case of 4-PAM ($M=16$) signal set as given in Fig.\ref{fig:4pam}. There are $2(\sqrt M-1)=6$ non-zero signal points in the difference constellation. Scaled $\Delta{S}^+$ has $(\sqrt M-1)=3$ signal points-$\{1,2,3\}$. And there are 14 singular fade states-
\begin{align*}
\left\{1,\frac{1}{2},\frac{1}{3},\frac{2}{3},2,3,\frac{3}{2},-1,\frac{-1}{2}, \frac{-1}{3},\frac{-2}{3},-2,-3,\frac{-3}{2}\right\}.
\end{align*} 
Calculating \eqref{sum_euler} also gives $N_{(4-PAM)} =14.$ Calculating similarly, we find that for 8-PAM signal set, there exists 70 singular fade states.
\end{example}
%%%%%%%%%%%%%%%% 
%  \begin{figure}[htbp]
% \centering
% \vspace{-.4 cm}
% \includegraphics[totalheight=2in,width=3.5in]{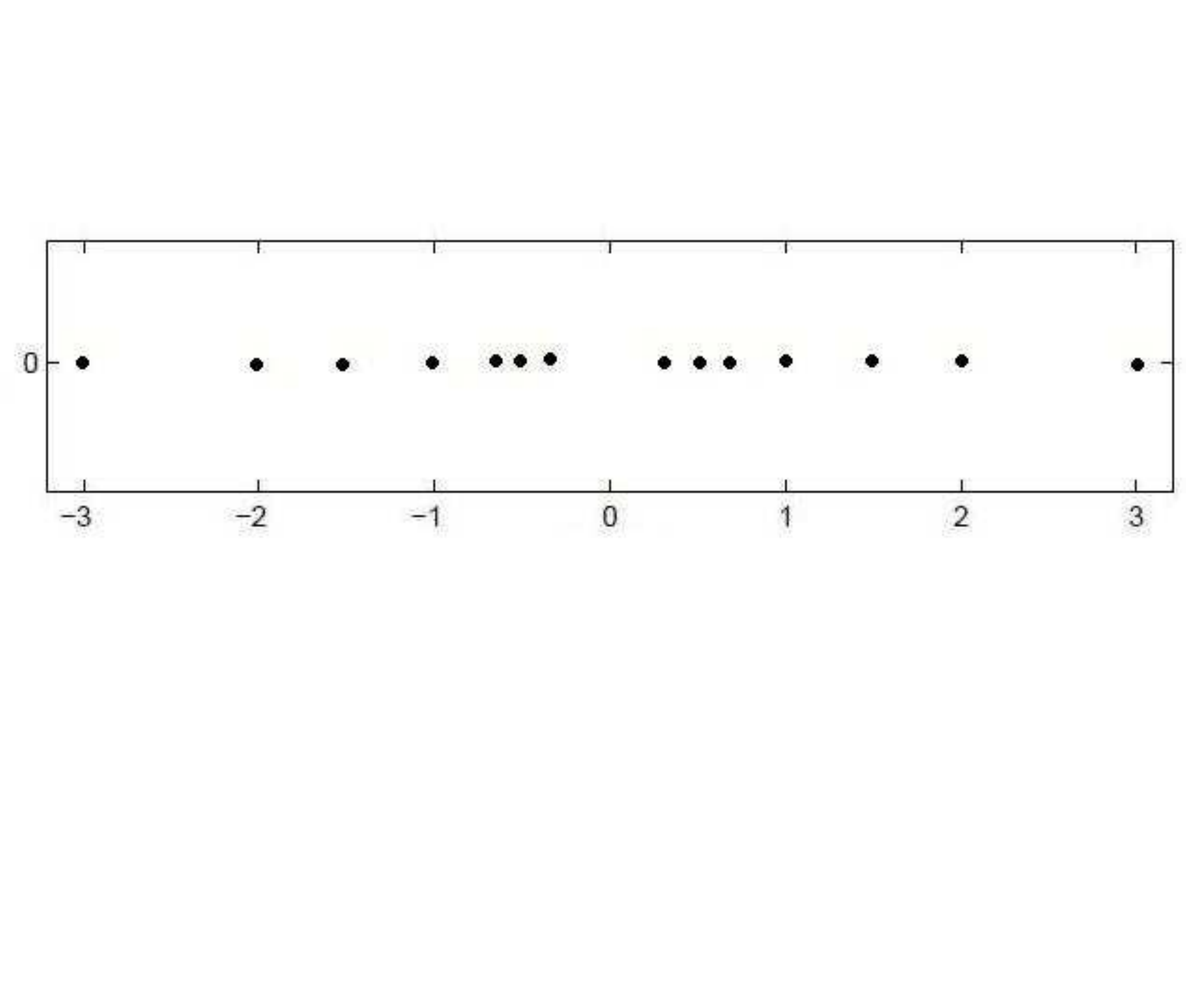}
% \vspace{-2.6 cm}
% \caption{Singular fade states of 4-PAM signal set}     
% \label{fig:pam_sing}        
% \end{figure}

%%%%%%%%%%%%%%%%%%%%%%%%%%%%%%%%%%%%%%%%%%%%%%%%%%

%%%%%%%%%%%%%%%%%%%%%%%%%%%%%%%%%%%%%%%%%%%%%%%%%%%
%%%%%%%%%%%%%%%%%%%%%%%%%%%%%%%%%%%%%%%%%%%
% \begin{figure}[htbp]
%\centering
%\vspace{-.25 in}
%\subfigure[$16-$QAM constellation]{
%\includegraphics[totalheight=2in,width=2.5in]{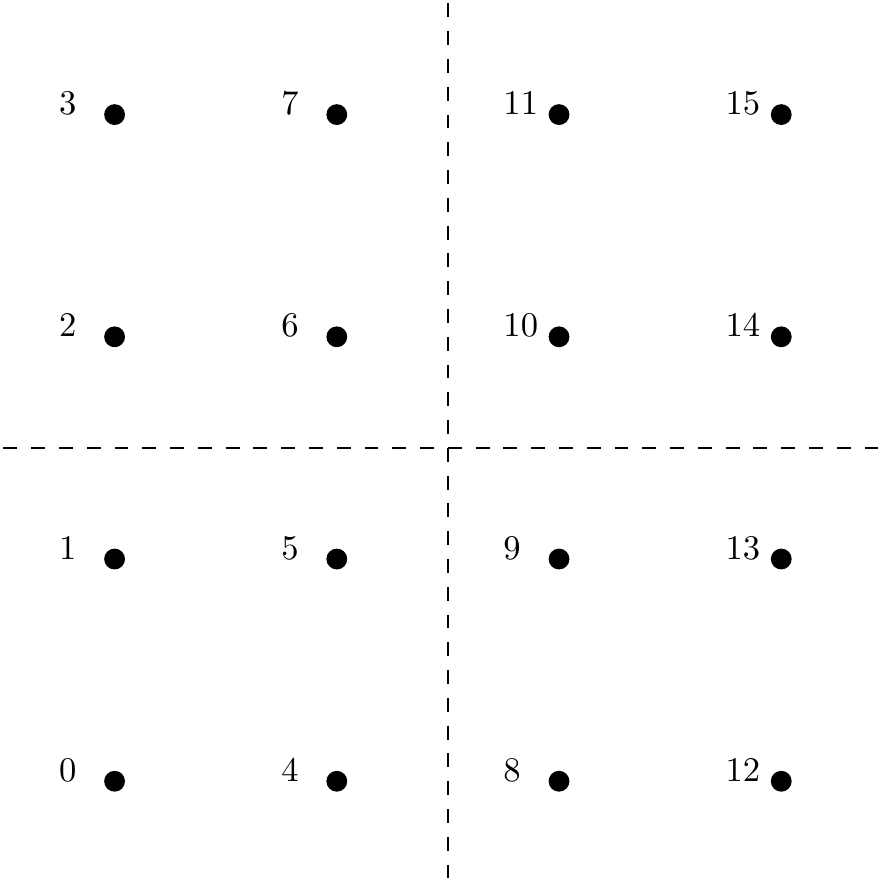}
%\label{fig:16qam}
%}
%\subfigure[The Difference Constellation]{
%\includegraphics[totalheight=2.1in,width=2.75in]{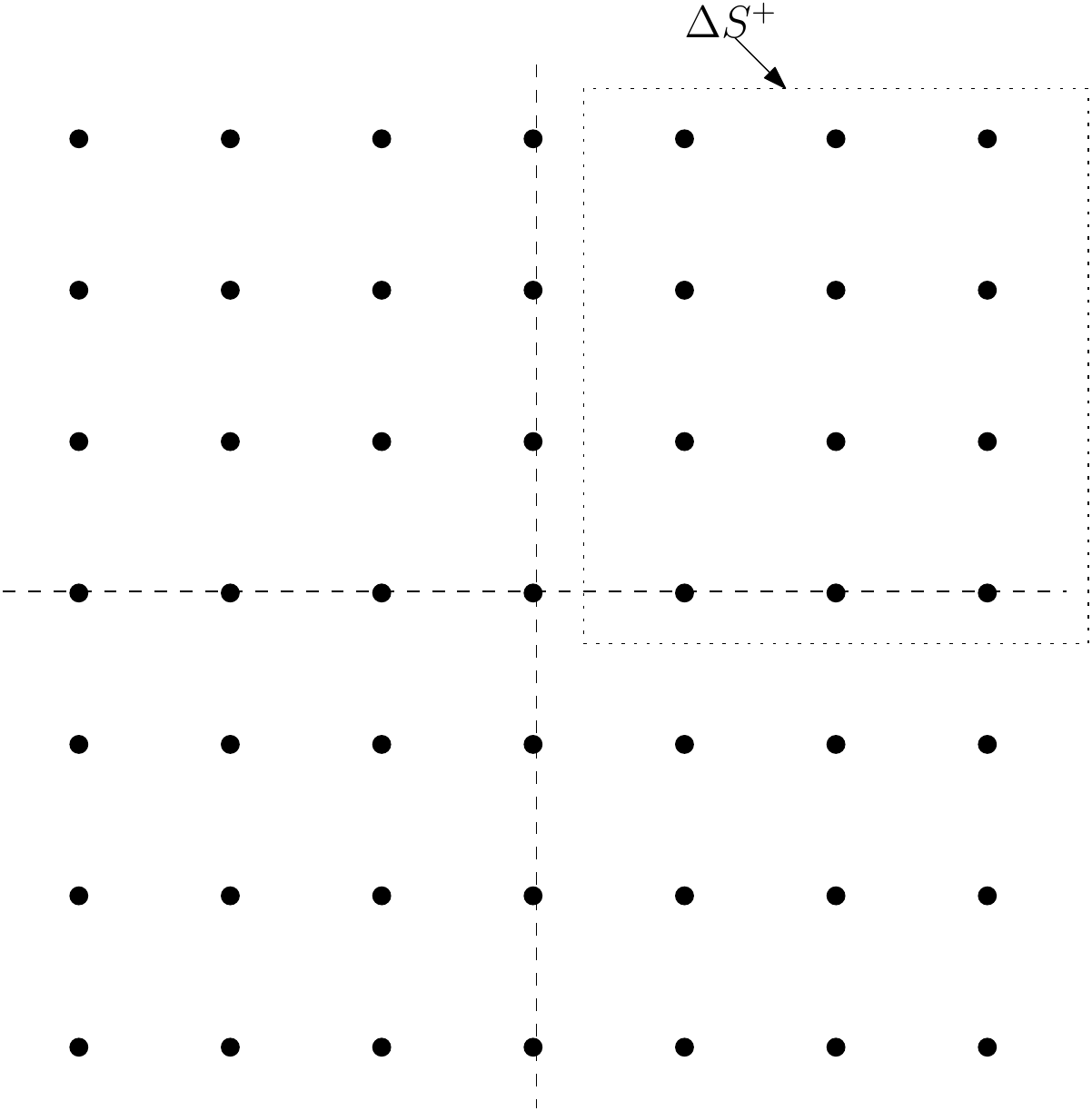}
%\label{fig:16qam_diff_cons}
%}
%\caption{$16-$QAM constellation and its difference constellation}
%\label{fig:qam}
%\vspace{-.5 in}
%\end{figure}
 %%%%%%%%%%%%%%%%%%%%%%%%%%%%
%\begin{table}
%\centering
% \caption{Singular fade states for 8-PAM}
% \label{prifactor_pam}
%\begin{tabular}{|c|c|c|c|}
%\hline $n$, Elements  & Relative primes & Singular fade & $\psi(n)$\\ 
% in $\Delta{S}^{+}$ & less than $n$  & states, $z >1$& \\
%\hline 1 &   &   & 0\\ 
%\hline 2 & 1 & 2 & 1\\
%\hline 3 & 1,2 & 3,$\frac{3}{2}$ & 2\\
%\hline 4 & 1,3 & 4,$\frac{4}{3}$ & 2\\
%\hline 5 & 1,2,3,4 & 5,$\frac{5}{2},\frac{5}{3},\frac{5}{4}$ & 4\\
%\hline 6 & 1,5 & 6,$\frac{6}{5}$ & 2\\
%\hline 7 & 1,2,3,4,5,6 & 7,$\frac{7}{2},\frac{7}{3},\frac{7}{4},\frac{7}{5},\frac{7}{6}$ & 6\\
%\hline 
%\end{tabular}
%\vspace{-.1 cm}
%\end{table}
%%%%%%%%%%%%%%%%
%%%%%%%%%%%%%%%%%%%%%%%%%%%%%%%%%%%%%%%%%%%%%%%%%%
%%%%%%%%%%%%%%%%%%%%%%%%%%%%%%%%%%%%%%%%%%%%%%%%%%
\subsection{Singular Fade States for QAM signal sets}
\label{subsec_2_2}
We consider square $M$-QAM signal set $\mathcal{S}=\{A_{mI}+jA_{mQ}\}$, where $A_{mI}$ and $A_{mQ}$ take values from the $\sqrt M$-PAM signal set $ -(\sqrt M -1)+ 2n,  ~~~  n \in (0, \cdots ,\sqrt M -1).$ We use the bijective mapping $\mu: \mathcal{S} \rightarrow \mathbb{Z}_M = \{0, 1, \cdots, M-1\}$ given by 
{\small
\begin {equation}
\label{mumap}
A_{mI}+jA_{mQ} \rightarrow \frac{1}{2}[(\sqrt M -1 +A_{mI})\sqrt M + (\sqrt M -1 +A_{mQ})]
\end{equation}
}for concreteness, even though our analysis and results hold for any map.  The difference constellation $\Delta\mathcal{S}$ of square $M$-QAM signal set forms a part of scaled integer lattice with $(2\sqrt M -1)^2$ points. The 4-QAM signal set with the above mapping and its difference constellation are shown in Fig.\ref{fig:4qam} and in Fig.\ref{fig:4qamdiffcons}.

In a practical scenario, there can be an average energy constraint $E$ to be satisfied at nodes A and B. In such a case, we use a scaled version of the $M$-QAM signal set given by $\frac{1}{\sqrt{\rho}}\{A_{mI}+jA_{mQ}\}$, where $\rho$ is chosen so as to meet the constraint $E$. As a special case, for $E=1$ (unit normalisation),  $\rho=\frac{2}{3(M-1)}$. It may be noted here that the values of the singular fade states (for a particular choice of $\mathcal{S}$) are unaffected by the energy constraint $E$. However, the minimum distances of the constellation $\mathcal{S}$ and the effective constellation ${\mathcal{S}}_R(\gamma,\theta)$ are dependent on the choice of $E$. In particular, for unit normalisation,
\begin{align}
\label{eqndmin}
d_{min}(\mbox{$M$-QAM}) &= \frac{2}{\sqrt{\rho}}
= \sqrt{\frac{6}{M-1}}.
\end{align}
Compare this to the case with $M$-PSK whose \mbox{$d_{min} = 2\sin(\pi/M)$}. For \mbox{$M=16$}, \mbox{$d_{min}(\mbox{16-QAM}) = \sqrt{\frac{2}{5}}$} \mbox{$> d_{min}(\mbox{16-PSK}) = 2\sin(\pi/16)$}. This has a detrimental effect in the performance during the MAC phase.
%%%%%%%%%%%%%%%%%%%%%%%
\begin{figure}[htbp]
\centering
%\vspace{-.5cm}
\subfigure[$4-$QAM constellation]{
\includegraphics[totalheight=1.1in,width=1.1in]{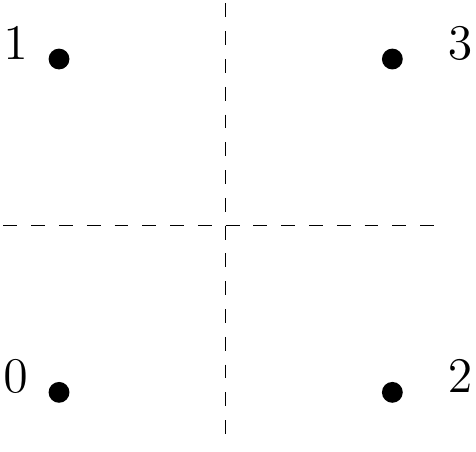}
\label{fig:4qam}
}
\subfigure[The Difference Constellation]{
\includegraphics[totalheight=2.2in,width=2.2in]{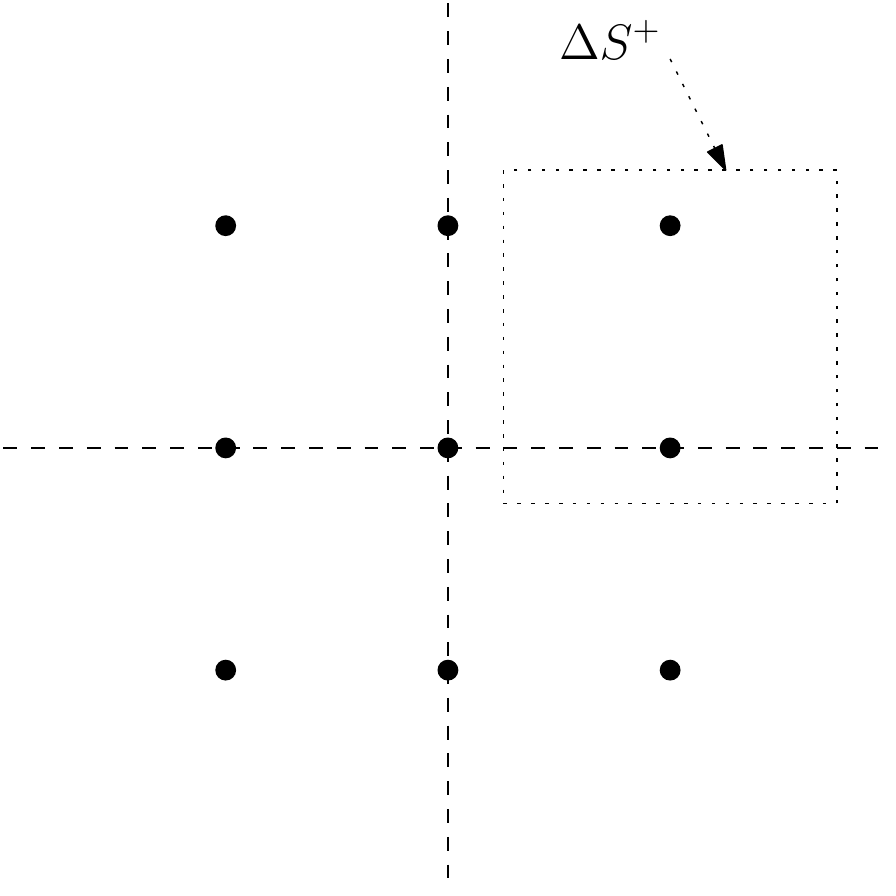}
\label{fig:4qamdiffcons}
}
\subfigure[Singular fade states]{
\includegraphics[totalheight=1.8in,width=2in]{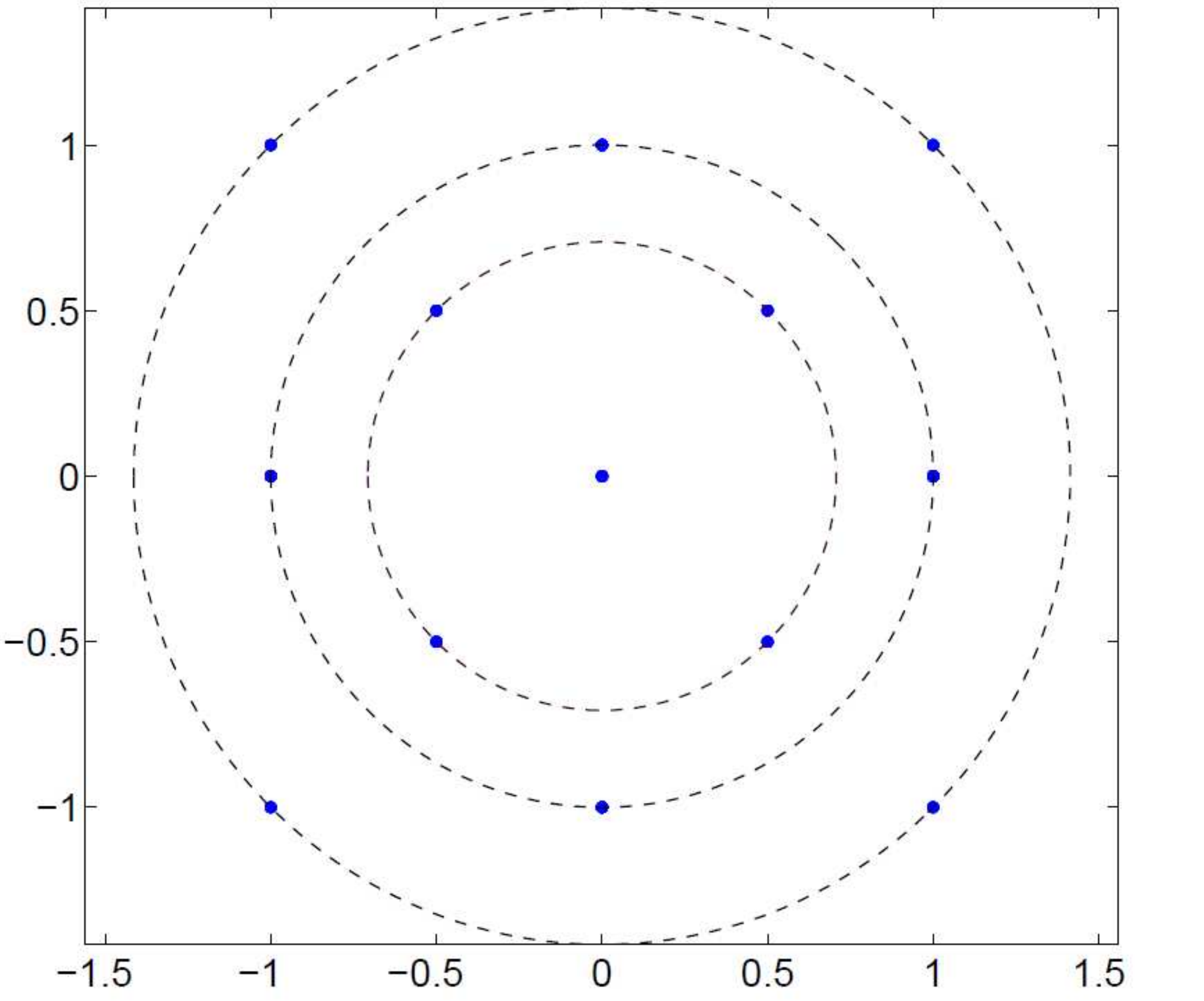}
\label{fig:4qam_sfs}
}
\caption{$4-$QAM constellation, its  difference constellation and singular fade states}
\label{fig:4QAM}
%\vspace{-.5 in}
\end{figure}

%%%%%%%%%%%%%%%%%%%%%%%%%%%%%%%%%%%%%%%%%%%%%%%%
%%%%%%%%%%%%%%%%%%%%%%%%%%%%%%%%%%%%%%%%%%%%%%%%

The signal points in the difference constellation  of $M$-QAM are Gaussian integers scaled by $\sqrt{\rho}$. To get the number of singular fade states for square QAM signal sets, the notion of primes and relatively primes in the set of Gaussian integers is useful.
\begin{definition}
\cite{LB} The Gaussian integers are the elements of the set $\mathbb{Z}[j]=\{a+bj : a,b \in \mathbb{Z}\}$, where $\mathbb{Z}$ denotes the set of integers. 
A Gaussian integer $\alpha$ is called a Gaussian prime if the Gaussian integers that divide $\alpha$ are: $1,-1,j,-j,\alpha,-\alpha, \alpha j$ and $-\alpha j.$ The Gaussian integers which are invertible in $\mathbb{Z}[j]$ are called units in $\mathbb{Z}[j]$ and they are $\pm 1$ and $\pm j.$  Let $\alpha, \beta \in \mathbb{Z}[j]$. If the only common divisors of $\alpha$ and $\beta$ are units, we say $\alpha$ and $\beta$ are relatively primes.
\end{definition}

\begin{lemma}
\label{no_sfs_qam}
The number of singular fade states for the square $M$-QAM signal set, denoted by
$N_{M-QAM}$ is given by $$N_{M-QAM} = 4+ 8 \phi(\Delta{S}^{+})$$ where $\phi(\Delta{S}^{+})$ is the number of relative prime pairs in $\Delta{S}^{+}.$
\end{lemma}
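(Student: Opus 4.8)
The plan is to count the distinct values of the Gaussian rational $z=(x_A-x_A')/(x_B'-x_B)$ directly. Working with the scaled difference constellation $\Delta\mathcal{S}$ (so that its points are ordinary Gaussian integers, as in the PAM example), every singular fade state has the form $z=\alpha/\beta$ with $\alpha,\beta\in\Delta\mathcal{S}\setminus\{0\}$. Since $\mathbb{Z}[j]$ is a unique factorization domain, each such $z$ has a well-defined reduced form $z=a/b$ with $a,b$ relatively prime Gaussian integers, unique up to multiplying both $a$ and $b$ by a common unit. I would first reduce the whole problem to counting these reduced pairs, observing that $\alpha/\beta=a/b$ in lowest terms forces $\alpha=ka,\ \beta=kb$ for a single Gaussian integer $k$.

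The key step, and the main obstacle, is the following achievability claim: $z=a/b$ (in lowest terms) is a singular fade state if and only if both $a,b\in\Delta\mathcal{S}$. The ``if'' direction is immediate (take $\alpha=a,\beta=b$). For ``only if'' I would use that $\Delta\mathcal{S}$ is the symmetric coordinate box $\{u+vj : u,v\in\mathbb{Z},\ |u|,|v|\le\sqrt M-1\}$ together with the arithmetic fact that the only Gaussian integers of norm less than $2$ are the units $\pm1,\pm j$, so every non-unit $k$ satisfies $|k|\ge\sqrt2$. If some coordinate of $a$ exceeded $\sqrt M-1$, then for any non-unit $k$ the largest coordinate of $ka$ would be at least $|ka|/\sqrt2=|k|\,|a|/\sqrt2\ge|a|>\sqrt M-1$, so $ka$ could not lie in the box, while a unit $k$ merely rotates $a$ and preserves its maximal coordinate. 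Hence $ka\in\Delta\mathcal{S}$ forces $a\in\Delta\mathcal{S}$, and likewise $b\in\Delta\mathcal{S}$. This pins down the singular fade states as exactly the reduced ratios of relatively prime elements of $\Delta\mathcal{S}$.

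It remains to count these reduced ratios, for which I would exploit the fourfold unit symmetry of $\Delta\mathcal{S}$ (it is closed under multiplication by $j$ and by $-1$) together with the fact that $\Delta\mathcal{S}^{+}$ is a fundamental domain for the unit action: every nonzero Gaussian integer has exactly one associate with $\mathrm{re}>0,\ \mathrm{im}\ge 0$. Multiplying numerator and denominator of a reduced $z=a/b$ by the unique unit carrying $a$ into $\Delta\mathcal{S}^{+}$ does not change $z$, so each singular fade state is represented uniquely by a coprime pair $(\alpha,\beta)$ with $\alpha\in\Delta\mathcal{S}^{+}$ and $\beta\in\Delta\mathcal{S}\setminus\{0\}$. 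I would then split the denominator over the four quadrants $\Delta\mathcal{S}^{+},\,j\Delta\mathcal{S}^{+},\,-\Delta\mathcal{S}^{+},\,-j\Delta\mathcal{S}^{+}$, whose disjoint union is $\Delta\mathcal{S}\setminus\{0\}$; since coprimality is preserved under multiplication by a unit, each quadrant contributes equally, giving a factor $4$ and reducing the count to ordered coprime pairs $(\alpha,\beta')$ with both entries in $\Delta\mathcal{S}^{+}$.

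Finally I would separate the diagonal: a pair $(\alpha,\alpha)$ is coprime only when $\alpha$ is a unit, and the unique unit lying in $\Delta\mathcal{S}^{+}$ is $1$. Thus the ordered coprime pairs in $\Delta\mathcal{S}^{+}$ number $2\phi(\Delta\mathcal{S}^{+})+1$, where $\phi(\Delta\mathcal{S}^{+})$ counts the unordered relatively prime pairs of distinct elements. Multiplying by the factor $4$ yields $N_{M-QAM}=8\phi(\Delta\mathcal{S}^{+})+4$, as claimed; the leftover $4$ is exactly the orbit $\{1,-1,j,-j\}$ of the fade state $z=1$ arising from the pair $(1,1)$. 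As a sanity check, for $4$-QAM one has $\Delta\mathcal{S}^{+}=\{1,1+j\}$ with the single coprime pair $\{1,1+j\}$, so $N=4+8=12$.
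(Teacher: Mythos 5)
Your proof is correct and follows essentially the same counting scheme as the paper: normalize each singular fade state to a coprime pair via unique factorization in $\mathbb{Z}[j]$, use the four-fold unit symmetry to restrict to $\Delta\mathcal{S}^{+}$ (factor $4$), count ordered coprime pairs as $2\phi(\Delta\mathcal{S}^{+})+1$, and recover $4+8\phi(\Delta\mathcal{S}^{+})$. The one substantive addition is your norm argument showing that the reduced numerator and denominator of any ratio of elements of $\Delta\mathcal{S}$ must themselves lie in $\Delta\mathcal{S}$ — a step the paper's proof asserts implicitly (by restricting to relatively prime pairs in $\Delta\mathcal{S}^{+}$ without justification) and which is needed to guarantee the formula is exact rather than merely a lower bound.
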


\begin{proof}
All the possible ratios of elements from $\Delta{S}$ give singular fade states. We consider only ratios in $\Delta{S}^+$ and multiply the number of such possible ratios by a factor of 4 to account the ratios with points in all the other quadrants. To avoid multiplicity while counting, we take only relative prime pairs in $\Delta{S}^+$ and every such pair $(a,b)$ gives two singular fade states $a/b$ and $b/a$. Because of this, the multiplication factor becomes 8. Finally, the constant term 4 is added to count the units.
\end{proof}
%%%%%%%%%%%%%%%%%%%%%%%%%%%%%%%%%%%%%%%%%%%%%%%%%
\begin{example}
For 4-QAM signal set shown in Fig.\ref{fig:4qam}, the number of singular fade states, $N_{4-QAM}$ is 12. Scaled $\Delta{S}^{+}$ has only two elements $\{1,1+j\}$ in this case, as shown in Fig.\ref{fig:4qamdiffcons}. They form a relatively prime pair. The singular fade states $\pm1,\pm j,\pm1\pm j,\frac{1}{\pm1\pm j}$ are shown in Fig. \ref{fig:4qam_sfs}.
\end{example}
%%%%%%%%%%%%%%%%%%%%%%%%%%%%%%%%%%%%%%%%%%%%%%%%%
%\begin{table}
%\centering
% \caption{Prime factors of Gaussian integers in $\Delta{S}^+$}
% \label{prifactor}
%\begin{tabular}{|c|c|c|}
%\hline Elements in $\Delta{S}^{+}$ & Prime factors  & No. of relatively prime pairs\\ 
%\hline 1 &  1 & 11\\ 
%\hline 1+j& 1+j & 6\\
%\hline 2& 1+j & 6\\
%\hline 1+2j& 1+2j & 10\\
%\hline 2+j& 2+j & 10\\
%\hline 2+2j& 1+j & 6\\
%\hline 3& 3 & 10\\
%\hline 3+j& (1+j),(1+2j)& 5\\
%\hline 1+3j& (1+j),(2+j)& 5\\
%\hline 3+2j& 3+2j & 11\\
%\hline 2+3j& 2+3j & 11\\
%\hline 3+3j& (1+j),3 & 5\\
%%\hline Total No. of relatively prime pairs for $\Delta{S}^+$ for 16-QAM & & 96 \\
%\hline
%\end{tabular}
%\vspace{-.1 cm}
%\end{table}
%%%%%%%%%%%%%%%% 

\begin{example}
Consider the case of 16-QAM signal set. It can be verified that there are 48 distinct pairs of relative primes, and from Lemma \ref{no_sfs_qam}, $N_{16-QAM}$ turns out to be 388. 
\end{example}
%%%%%%%%%%%%%%%%%%%%%%%%%%%%%%%%%%%%%%%%%%%%%%%%%%%%%%%%%
\begin{table}[htbp]
\centering
 \caption{Comparison between $M$-PSK and $M$-QAM on number of singular fade states}
 \label{comp_psk_qam}
 % \vspace{-.15 in}
\begin{tabular}{|c|c|c|}
\hline $M$  & $M$-PSK &  $M$-QAM\\
\hline 4 &  12 & 12\\ 
\hline 16 & 912 & 388\\
\hline 64 & 63,552 & 8388\\
\hline
\end{tabular}
%\vspace{-.8 cm}
\end{table}
%%%%%%%%%%%%%%%%%%%%%%%%%%%%%%%
%%%%%%%%%%%%%%%%%%%%%%%%%%%%%%%%%%%%%%%%%%%%%%%%%%%
\subsection{Singular fade states of $M$-PSK and $M$-QAM signal sets}
\label{subsec_3_2}
In this section we show that the number of singular fade states for $M$-QAM signal sets is lesser than that of $M$-PSK signal sets. The advantages of this are two fold- QAM offers better distance performance during the MA phase and it requires lesser number of overhead bits during the BC phase, since the required number of relay clusterings is lesser in the case of QAM compared to PSK.
%%%%%%%%%%%%%%%%%%%%%%%%%%%%%%%%%%%%%%%%%%%%%%%%%%%%
\begin{lemma}
\label{upper_qam}
The number of singular fade states for $M$-QAM signal set is upper bounded by $4(n^2-n+1)$, where $n=\frac{1}{4}{[(2 \sqrt M -1)^2-1]}$, which is same as $4M^2-(2M-1)\sqrt{M} +1).$ 
\end{lemma}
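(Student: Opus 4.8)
The plan is to read the bound straight off Lemma \ref{no_sfs_qam}, which already gives the \emph{exact} count $N_{M-QAM} = 4 + 8\phi(\Delta{S}^{+})$. Since the additive constant $4$ (the units) and the factor $8$ (four quadrants times the two fade states $a/b$ and $b/a$ per pair) are fixed, the whole task reduces to upper bounding $\phi(\Delta{S}^{+})$, the number of relatively prime pairs among the points of $\Delta{S}^{+}$. The point is that relatively prime pairs are a subset of all pairs, so a crude bound on the total number of pairs already yields the lemma.

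First I would count the points of $\Delta{S}^{+}$. The difference constellation $\Delta\mathcal{S}$ of square $M$-QAM is the full rectangular grid of $(2\sqrt M - 1)^2$ scaled Gaussian integers, its real and imaginary differences ranging independently over the $2\sqrt M - 1$ values of the $\sqrt M$-PAM difference set. Exactly one of these points is the origin. Removing it and using the four-fold rotational symmetry of the grid (multiplication by $j$ permutes the four quadrants defined by the boundary conventions $\mathrm{real}>0,\ \mathrm{imag}\ge 0$, etc.), each quadrant receives $\frac{1}{4}[(2\sqrt M - 1)^2 - 1]$ of the non-zero points. A direct count of the admissible real parts ($\sqrt M - 1$ strictly positive values) and imaginary parts ($\sqrt M$ non-negative values) confirms $|\Delta{S}^{+}| = (\sqrt M - 1)\sqrt M = M - \sqrt M$, and I would check the algebraic identity $n = \frac{1}{4}[(2\sqrt M - 1)^2 - 1] = M - \sqrt M = |\Delta{S}^{+}|$, so that $n$ is precisely the number of points over which the relatively prime pairs are formed.

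The key inequality is then elementary: the pairs counted by $\phi(\Delta{S}^{+})$ form a subset of all unordered pairs of distinct points of $\Delta{S}^{+}$, hence $\phi(\Delta{S}^{+}) \le \binom{n}{2} = \frac{n(n-1)}{2}$. Substituting into Lemma \ref{no_sfs_qam} gives $N_{M-QAM} \le 4 + 8\cdot\frac{n(n-1)}{2} = 4 + 4n(n-1) = 4(n^2 - n + 1)$, which is the claimed bound; substituting $n = M - \sqrt M$ and expanding produces the equivalent polynomial form stated in the lemma.

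The only step requiring genuine care is the counting of $|\Delta{S}^{+}|$ and its identification with $n$: one must respect the boundary conventions in the definition of $\Delta{S}^{+}$ (real part strictly positive, imaginary part non-negative) so that the axis points are partitioned into exactly one quadrant and not double counted, and then verify $\frac{1}{4}[(2\sqrt M - 1)^2 - 1] = (\sqrt M - 1)\sqrt M$. Everything else is immediate. The bound is loose because equality in $\phi(\Delta{S}^{+}) \le \binom{n}{2}$ would require \emph{every} pair of first-quadrant points to be relatively prime, which never happens for $M > 4$; this is exactly why $N_{16-QAM} = 388$ lies strictly below $4(n^2 - n + 1) = 532$ at $n = 12$.
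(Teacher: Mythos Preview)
Your proposal is correct and follows essentially the same approach as the paper: count $|\Delta{S}^{+}|=n$, bound $\phi(\Delta{S}^{+})$ by $\binom{n}{2}$, and substitute into Lemma~\ref{no_sfs_qam}. You supply more detail on the quadrant count and the algebraic identity $n=M-\sqrt{M}$ than the paper does, but the argument is the same.
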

\begin{proof}
There are $[(2 \sqrt M -1)^2-1]$ non zero signal points in $\Delta{S}$ which are distributed equally in each quadrant, i.e., the number of signal points in  $\Delta{S}^+$ is $\frac{1}{4}{[(2 \sqrt M -1)^2-1]}$ which we denote by $n$.  %From Definition \ref{sfs_alter_def} it can be seen that the number of singular fade states is related to the number of different mappings possible with the $n$ signal points. 
The maximum number of relatively prime pairs in a set of $n$ Gaussian integers is $\frac{n(n-1)}{2}$. Since an upper bound is of interest we substitute this in Lemma \ref{no_sfs_qam} instead of $\phi(\Delta{S}^+).$ This completes the proof.
\end{proof}
%%%%%%%%%%%%%%%%%%%%%%%%%%%%%%%%%%%%%%%%%%%%%%%%%%%%%%%

In \cite{VNR}, it is shown that the number of singular fade states for $M$-PSK signal set is $M(\frac{M^2}{4}-\frac{M}{2}+1)$, of $\mathcal{O}(M^3)$. From Lemma \ref{upper_qam}, an upper bound on the number of singular fade states for $M$-QAM is of $\mathcal{O}(M^2).$ Hence, the number of singular fade states for $M$-QAM signal set is lesser than that of $M$-PSK signal sets. 

%%%%%%%%%%%%%%%%%%%%%%%%%%%%%%%%%%%%%%%%%%%%%%%%%%%%%%%
%\begin{example}
%The singular fade states of 16-PSK signal set is given in Fig.\ref{fig:16psk_sing}. There are 912 singular fade states in total.
%%\begin{figure}[h]
%%\centering
%%%\vspace{-.8 cm}
%%\includegraphics[totalheight=2in,width=3.5in]{16psk_sing.eps}
%%%\vspace{-2 cm}
%%\caption{Singular fade states for 16-PSK signal sets.}     
%%\label{fig:16psk_sing}        
%%\end{figure}
%\end{example}
%%%%%%%%%%%%%%%%%%%%%%%%%%%%%%%%%%%%%%%%%%%%%%%%%%%%%%%%%

The advantage of using square QAM constellation is more significant in higher order constellations as shown in Table \ref{comp_psk_qam}. For instance, 64-QAM has 8,388 singular fade states where as 64-PSK has 63,552 singular fade states and the relay has to adaptively use 63,552 clusterings. So, with the use of square QAM constellations the complexity is enormously reduced. 
% \vspace{-.25 in}
%%%%%%%%%%%%%%%%%%%%%%%%%%%%%%%%%%%%%%%%%%%%%%%%%
%\begin{figure*}[]
%\centering
%\hspace{2cm}
%\subfigure[16-QAM ]{
%\includegraphics[totalheight=2.3in,width=2.3in]{16QAM_sing.eps}
%\label{fig:16QAM_sing}
%}
%\qquad
%\hspace{-1.5cm}
%\subfigure[ 16-PSK ]{
%\includegraphics[totalheight=2.25in,width=4in]{16psk_sing.eps}
%\label{fig:16psk_sing}
%}
%\caption{Singular Fade States for $16-$QAM and $16-$PSK modulation schemes}
%%\label{fig:16psk_sing}
%\end{figure*}

%%%%%%%%%%%%%%%%%%%%%%%%%%%%%%%%%%%%%%%%%%%%%%%%%%
%%%%%%%%%%%%%%%%%%%%%%%%%%%%%%%%%%%
\section{Exclusive Law and Latin Squares}
% \vspace{-.15 in}
\label{sec3}
\begin{definition} \cite{Rod} A Latin Square $L$ of order $M$ with the symbols from the set $\mathbb{Z}_t=\{0,1, \cdots ,t-1\}$ is an \textit{M} $\times$ \textit{M}  array, in which each cell contains one symbol and each symbol occurs at most once in each row and column. 
\end{definition}

 In \cite{NVR}, it is shown that when the end nodes use signal sets of the same size, all the relay clusterings which satisfy exclusive-law are equivalently representable by Latin Squares, with the rows (columns) indexed by the constellation point indices used by node A (B) and the clusterings are obtained by placing into the same cluster all the row-column pairs which are mapped to the same symbol in the Latin Square.
 %%%%%%%%%%%%%%%%%%%%%%%%%%%%%%%%%%%%%%%%%%%%%%%%%%%%%% 
 \subsection{Removing Singular fade states and Constrained Latin Squares}
The minimum size of the constellations needed in the  BC phase is $M$, but it is observed that in some cases relay may not be able to remove the singular fade states with $t=M$ and $t > M$ results in severe performance degradation in the MA phase \cite{APT1}. Let $(k,l)$ and $(k^{\prime},l^{\prime})$ be the pairs which give the same point in the effective constellation $\mathcal{S}_R$ at the relay for a singular fade state, where $k,k^{\prime},l,l^{\prime} \in \{0,1,....,M-1\}$ and $k,k^{\prime}$ are the constellation points used by node A and $l,l^{\prime}$ are the corresponding constellation points used by node B. If these are not clustered together, the minimum cluster distance will be zero. To avoid this, such pairs should be in the same cluster. This requirement is termed as {\it singularity-removal constraint} \cite{NVR}. So, we need to obtain Latin Squares which can remove singular fade states and with minimum value for $t$. Towards this end, for a given singular fade state $z \in {\cal{H}}$, initially we fill the $\textit{M}\times\textit{M}$ array in such a way that the slots corresponding to a singularity-removal constraint are filled using the same element. Similarly, we fill in elements for the other singularity removal constraints for the given singular fade state. This removes that particular singular fade state. Such a partially filled Latin Square is called a {\it Constrained Partially-filled Latin Square} (CPLS). After this,  to make this a Latin Square, we try to fill the other slots of the CPLS with minimum number of symbols.
%%%%%%%%%%%%%%%%%%%%%%%%%%%%%%%%%%%%%%%%%%%%%%%%%%%%%%%%
%%%%%%%%%%%%%%%%%%%%%%%%%%%%%%%%%%%%%%%%%%%%

From \cite{NVR} it is known that if the Latin square $L$ removes the singular fade state $z$ then the  Latin Square $L^T$ removes the singular fade state $z^{-1}$, where $L^T$ is the transpose of the Latin Square $L$, i.e. $L^T(i,j)=L(j,i)$ for all $i,j \in \{0,1,2,..,M-1\}$. This observation, in fact, holds for any choice of constellation $\mathcal{S}$ used at nodes A and B.

%%%%%%%%%%%%%%%%%%%%%%%%%%%%%%%%%%%%%%%%%%%%%%%

From this, it is clear that we need to get Latin Squares only for those singular fade states with $|z| \leq 1$ or $|z| \geq 1$. 

The square QAM signal set has a symmetry which is $\pi/2$ degrees of rotation. This results in a reduction of the number of required Latin Squares by a factor 4 as shown in the following lemma. 

\begin{lemma}
\label{iso_pi_bi_two}
If $L$ is a Latin Square that removes a singular fade state $z$, then there exists a column permutation of $L$ such that the permuted Latin Square $L^\prime$ removes the singular fade state $z e^{j \pi/2}.$
\end{lemma}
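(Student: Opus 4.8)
The plan is to exploit the fact that multiplication by $j=e^{j\pi/2}$ maps the square QAM constellation $\mathcal{S}$ onto itself, and to realize the required column permutation as the relabelling of node B's symbols induced by this rotation. Concretely, since $\mathcal{S}=\{A_{mI}+jA_{mQ}\}$ with the in-phase and quadrature components drawn from a symmetric PAM set, one has $j(A_{mI}+jA_{mQ})=-A_{mQ}+jA_{mI}\in\mathcal{S}$, so $x\mapsto jx$ is a bijection of $\mathcal{S}$. First I would record this closure property and define the permutation $\sigma$ of the column indices $\{0,1,\dots,M-1\}$ by requiring $x_{\sigma(l)}=j\,x_l$, where $x_l$ denotes the point used by B with index $l$ under the labelling $\mu$; then set $L^\prime(k,l)=L(k,\sigma(l))$.

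Next I would verify that $L^\prime$ is indeed a Latin Square: permuting the columns of a Latin Square preserves the property that each symbol occurs at most once in each row and in each column, so $L^\prime$ is a legitimate column permutation of $L$.

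The heart of the argument is the correspondence between singularity-removal constraints. I would show that a pair $(k,l),(k^\prime,l^\prime)$ collides at the fade state $z e^{j\pi/2}=jz$, i.e.
\begin{align}
\nonumber
x_k + jz\, x_l = x_{k^\prime} + jz\, x_{l^\prime},
\end{align}
if and only if the pair $(k,\sigma(l)),(k^\prime,\sigma(l^\prime))$ collides at $z$; this follows by substituting $x_{\sigma(l)}=j x_l$ and $x_{\sigma(l^\prime)}=j x_{l^\prime}$, which turns the condition $x_k + z\,x_{\sigma(l)} = x_{k^\prime} + z\,x_{\sigma(l^\prime)}$ into exactly the display above. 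Since $L$ removes $z$ (so every collision pair at $z$ receives a common symbol, equivalently $d_{min}^{\mathcal{C}}(z)>0$), this equivalence transports each singularity-removal constraint of $jz$ to one of $z$, forcing $L^\prime(k,l)=L(k,\sigma(l))=L(k^\prime,\sigma(l^\prime))=L^\prime(k^\prime,l^\prime)$. Hence every collision pair at $jz$ lies in a single cluster of $L^\prime$, i.e. $L^\prime$ removes $z e^{j\pi/2}$.

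The step I expect to require the most care is the closure/bijection claim together with the bookkeeping of the index map $\sigma$ relative to the chosen labelling: one must confirm that $x\mapsto jx$ is genuinely a permutation of $\mathcal{S}$ (using the symmetry of the PAM components about the origin) and that it acts on the columns, which are indexed by B, rather than on the rows. Everything else is a direct substitution, and the equivalence of collision conditions is the only place where the QAM-specific $\pi/2$ symmetry is used; for a PSK or non-square constellation this rotation need not preserve $\mathcal{S}$, which is precisely why the result is stated for square QAM.
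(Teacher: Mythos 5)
Your proposal is correct and follows essentially the same route as the paper: both proofs rest on the observation that multiplication by $e^{\pm j\pi/2}$ is a bijection of the square QAM constellation onto itself, so the singularity-removal constraints for $ze^{j\pi/2}$ are those for $z$ with node B's symbols relabelled, which manifests as a column permutation of $L$. Your version merely makes explicit the index map $\sigma$ and the collision-pair correspondence that the paper states in one line.
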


\begin{proof}
For the singular fade state $z$ as given in \eqref{sing_expression} with constraint $\{(x_A,x_B),(x_A^{\prime},x_B^{\prime})\}$, the singular fade state $z e^{j \pi/2}$ is given by
\begin{align*}
z e^{j \pi/2}= \dfrac{[x_A-x_A^{\prime}]}{[x_B^{\prime}-x_B]}e^{j \pi/2}\\
\Longrightarrow z e^{j \pi/2}= \dfrac{[x_A -x_A^{\prime} ]}{[x_B^{\prime}e^{-j \pi/2}-x_B e^{-j \pi/2}]}.\\
\end{align*}
\noindent
Since in the square QAM constellation there exist signal points with $x_B^\prime e^{-j \pi/2}$ and $x_B e^{-j \pi/2}$, though all the constraints are changed, the new constraints are obtainable by a permutation of signal points in the constellation used by node B. The columns of the Latin Squares are indexed by the signal points used by B and the effected permutation in the constellation is representable by column permutation in the Latin Square.
\end{proof}
%%%%%%%%%%%%%%%%%%%%%%%%%%%%%%%%%%%%%%%%%%%%%%%%%%%%%%%%%%%%%%%%%%%%%%

\begin{lemma}
\label{refl_sym}
If a Latin Square, $L$ removes a singular fade state ($\gamma$, $\theta$) then the Latin Square to remove  ($\gamma$, $(90-\theta)$) is obtainable from $L$ by appropriate row and column permutations. 
\end{lemma}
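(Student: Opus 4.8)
The plan is to exploit the geometric symmetry of the square QAM constellation under complex conjugation, mirroring the strategy used in Lemma \ref{iso_pi_bi_two} but now tracking the effect of reflection on \emph{both} the row and column indices. The key observation is that the map $\gamma e^{j\theta} \mapsto \gamma e^{j(\pi/2-\theta)}$ (writing $90-\theta$ for $\pi/2-\theta$ as the statement does) can be realized as a reflection of the fade state across the line $\theta = \pi/4$, and that a reflection of the complex plane is precisely complex conjugation composed with a rotation. Concretely, I would first write the fade state from \eqref{sing_expression} as $z = (x_A - x_A^\prime)/(x_B^\prime - x_B)$ and compute how the target fade state $(\gamma, \pi/2 - \theta)$ relates to $z$ by conjugation, using $\gamma e^{j(\pi/2-\theta)} = j \gamma e^{-j\theta} = j \overline{z}$ when $\gamma = |z|$.

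The central step is then to express $j\overline{z}$ back in the ratio form \eqref{sing_expression} using constellation points. Since $\overline{z} = \overline{(x_A - x_A^\prime)}/\overline{(x_B^\prime - x_B)} = (\overline{x_A} - \overline{x_A^\prime})/(\overline{x_B^\prime} - \overline{x_B})$, and the square QAM signal set is closed under complex conjugation (reflection about the real axis: if $A_{mI} + jA_{mQ} \in \mathcal{S}$ then $A_{mI} - jA_{mQ} \in \mathcal{S}$), the conjugated points $\overline{x_A}, \overline{x_A^\prime}, \overline{x_B}, \overline{x_B^\prime}$ are again valid constellation points. The extra factor of $j$ (the $\pi/2$ rotation) is absorbed into the denominator exactly as in Lemma \ref{iso_pi_bi_two}, since $\mathcal{S}$ is closed under multiplication by $j$. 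Hence every singularity-removal constraint $\{(x_A, x_B), (x_A^\prime, x_B^\prime)\}$ for $z$ maps to a constraint $\{(\overline{x_A}, \overline{x_B}\,'), (\overline{x_A^\prime}, \overline{x_B^\prime}\,')\}$ for the target fade state, where the primed conjugates denote the additional $j$-rotation on the B-side points.

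Finally I would argue that because conjugation acts as a \emph{fixed permutation} $\sigma$ on $\mathcal{S}$ (independent of the particular constraint), relabeling every A-index by $\sigma$ induces a row permutation of $L$, and relabeling every B-index by the corresponding conjugate-then-rotate permutation induces a column permutation. Applying both to $L$ produces a Latin Square $L^\prime$ whose constraints are precisely those of the target fade state $(\gamma, \pi/2 - \theta)$; since row and column permutations of a Latin Square are again Latin Squares, $L^\prime$ is valid and removes that fade state. The main obstacle I anticipate is bookkeeping: verifying carefully that the composite map (conjugation together with the $\pi/2$-rotation needed to keep the expression in standard ratio form) sends constellation points to constellation points and assembles into two well-defined index permutations, rather than two different operations that fail to factor cleanly into a row action and a column action. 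Establishing that clean factorization — essentially showing the A-side transformation touches only rows and the B-side transformation only columns — is where the real content lies, after which the Latin Square property is automatic.
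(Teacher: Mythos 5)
Your proposal is correct and takes essentially the same route as the paper's proof: both identify the target fade state as $j\overline{z}$ and factor this map into one fixed symmetry of the square QAM constellation acting on the A-side points (hence a row permutation) and another acting on the B-side points (hence a column permutation). The only cosmetic difference is that you absorb the extra $\pi/2$ rotation into the B-side (denominator), whereas the paper absorbs it into the A-side by swapping real and imaginary parts of the A-difference and applies plain conjugation on the B-side; both factorizations are valid since square QAM is closed under conjugation and rotation by $\pi/2$.
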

%%%%%%%%%%%%%%%%%%%%%%%%%%%%%%%%%%
\begin{proof}
See Appendix \ref{app7}.
\end{proof}

%%%%%%%%%?%%%%%%%%%%%%%%%%%%%%%%%%%%%%%%%%%%%%%%%%%%%%%%%%%%%%%%%%%%
Note that the fade state $z=1$ or $(\gamma=1, \theta=0)$ is a singular fade state for any signal set. 
\begin{definition} A Latin Square which removes the singular fade state $z=1$ for a signal set is said to be a standard Latin Square for that signal set. 
\end{definition} 

It is known that the removal of the singular fade state $z$=1 has very high significance in a Rician fading scenario when the Rician factor $K\neq0$. The readers may refer to \cite{VR} for further details regarding the influence of the values of singular fade states on the overall performance of the bi-directional relay network.

When nodes A and B use a $2^\lambda$-PSK signal set, it has been shown in \cite{NVR} that the Latin Square obtained by Exclusive-OR (XOR) is a standard Latin Square for any integer $\lambda$. It turns out that for $M$-QAM signal sets the  Latin Square given by bitwise Exclusive-OR (XOR) is not a standard Latin Square for any $M>4.$ This can be easily seen as follows: Any square $M$-QAM signal set ($M>4$) has points of the form $a+jc, a+j(c+b), a+j(c-b)$, for some integers $a$, $b$ and $c$. For $z=1$, the effective constellation at R during the MA phase contains the point $2(a+jc)$ and can be resulted in at least two different ways, since $2(a+jc)=(a+jc)+z(a+jc)= a+j(c+b)+z(a+j(c-b))$ for $z=1$. Let $ l_1, l_2$ and $l_3$ denote the labels for $a+jc, a+j(c+b)$, and $a+j(c-b)$ respectively. For the singular fade state $z=1,$ we have $\lbrace(l_1,l_1),(l_2,l_3) \rbrace$ as a singularity removal constraint. But the Latin Square obtained by bitwise XOR mapping does not satisfy this constraint since $l_1\oplus l_1=0 \neq l_2\oplus l_3$.

% \vspace{-.3 in}
\subsection{Standard Latin Square for $\sqrt M-$PAM}
In this subsection, we obtain standard Latin Squares for $\sqrt M-$PAM signal sets. 
%%% 
\begin{definition}
An $M \times M$ Latin square in which each row is obtained by a left cyclic shift of the previous row is called a left-cyclic Latin Square.
\end{definition}
%%%%%
\begin{lemma}
\label{pam_std_lemma}
For a $\sqrt M$-PAM signal set a left-cyclic Latin Square removes the singular fade state $z=1$.  
\end{lemma}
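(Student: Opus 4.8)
The plan is to show that a left-cyclic Latin Square satisfies the singularity-removal constraint for $z=1$ when the symbols come from a $\sqrt{M}$-PAM signal set. First I would identify the singularity-removal constraints explicitly. Using the PAM signal set $\mathcal{S}=\{-(\sqrt M-1)+2n : n\in\{0,\dots,\sqrt M-1\}\}$ with points indexed by $n\in\mathbb{Z}_{\sqrt M}$, the constraint for $z=1$ comes from pairs $(x_A,x_B),(x_A',x_B')$ with $x_A+x_B=x_A'+x_B'$. In terms of the integer indices $k,l,k',l'$ this is simply $k+l=k'+l'$, i.e.\ the effective received point at the relay depends (for $z=1$) only on the index \emph{sum} $k+l$. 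So two index-pairs must be co-clustered exactly when their index sums coincide.

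The key observation I would then exploit is that a left-cyclic Latin Square has the form $L(k,l)=(k+l)\bmod \sqrt M$ (or, more precisely for an $M\times M$ array indexed over $\mathbb{Z}_M$, $L(k,l)=(k+l)\bmod M$, with the appropriate identification of PAM indices). Under such an $L$, the cell entry is a function of $k+l$ alone, so any two pairs with equal index sum receive the same symbol and are automatically placed in the same cluster. This is precisely the singularity-removal constraint derived above, so the left-cyclic Latin Square removes $z=1$. I would also need to confirm the trivial direction, namely that $L(k,l)=(k+l)\bmod M$ is genuinely a Latin Square (each row a left cyclic shift of the previous, each symbol appearing once per row and column), which is standard for the addition table of a cyclic group.

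The one genuine subtlety, and the step I expect to be the main obstacle, is reconciling the one-dimensional PAM index arithmetic with the additive structure of the Latin Square. The received-point collisions at $z=1$ are governed by equality of the \emph{real} sum $x_A+x_B$, and because the PAM points are equally spaced integers this is equivalent to equality of index sums $k+l$ over the integers, not modulo $\sqrt M$. I must therefore check that the left-cyclic (modular) addition rule does not spuriously \emph{fail} to co-cluster a required pair: that is, whenever $k+l=k'+l'$ as integers, we indeed have $(k+l)\bmod M=(k'+l')\bmod M$, which is immediate. Equality over the integers is a stronger condition than equality modulo $M$, so every singularity-removal constraint is satisfied; the modular structure only ever clusters \emph{more} pairs together, which is harmless for \emph{removal} of the fade state (it does not create a forbidden split). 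Once this integer-versus-modular point is settled, the argument closes: the left-cyclic Latin Square respects all singularity-removal constraints for $z=1$ and hence removes that singular fade state.
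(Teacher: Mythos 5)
Your proof is correct and follows essentially the same route as the paper: both reduce the singularity-removal constraints at $z=1$ to equality of index sums $k+l=k'+l'$ and then verify that a left-cyclic Latin Square co-clusters all such pairs, with your closed form $L(k,l)=(k+l)\bmod\sqrt M$ making explicit (and slightly cleaner than the paper's two-case check) why this holds, including the harmless integer-versus-modular point. The only blemish is the parenthetical describing the array as $M\times M$ with entries mod $M$; for $\sqrt M$-PAM the Latin Square is $\sqrt M\times\sqrt M$ with entries mod $\sqrt M$, as your main statement correctly has it.
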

%%%%%
\begin{proof}
See Appendix \ref{app8}.
\end{proof}
%%%%%%%%%%
\begin{example}
Consider the received constellation at the relay when the end nodes use 4-PAM constellation and let the channel condition be $z=1$ as given in Fig.\ref{fig:pam_received}.
The singularity removal constraints are 
\begin{align*}
\{(0,1)(1,0)\}&,~\{(0,2)(1,1)(2,0)\},~\{(0,3)(1,2)(2,1)(3,0)\},\\ \left.
\right.
&\{(1,3)(2,2)(3,1)\},~ \mbox{and} ~\{(2,3)(3,2)\}.
\end{align*}
\noindent
The Latin Square which removes this singular fade state is given in Fig.\ref{cyclic_LS}.

\begin{figure}[h]
\centering
%\vspace{-.8 cm}
\includegraphics[totalheight=.3in,width=2.5in]{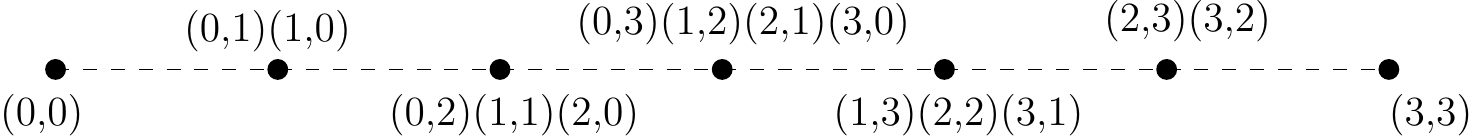}
%\vspace{-2 cm}
\caption{Received Constellation at the relay for $z=1$.}     
\label{fig:pam_received}
%\vspace{-.5 cm}        
\end{figure}
%%%%%%%%%%%%%%%%%%%%%%%%%%%%%%%%%%%%%%%%%%%%%%%%%%%%%%%%%%%%%%%
\begin{figure}[h]
\centering
%\vspace{-.5 cm}
\begin{tabular}{|c|c|c|c|}
\hline 0 & 1 & 2 & 3 \\ 
\hline 1 & 2 & 3 & 0 \\ 
\hline 2 & 3 & 0 & 1\\
\hline 3 & 0 & 1 & 2 \\
\hline 
\end{tabular}
\caption{Left-cyclic Latin Square to remove the singular fade state $z=1$}
%\vspace{-1 cm}
\label{cyclic_LS}
\end{figure}
\end{example}
%%%%%%%%%%%%%%%%%%%%%%%%%%%%%%%%%%%%%%%%%%%%%%%%%%%%%%%%%

%%%%%%%%%%%%%%%%%%%%%%%%%%%%%%%%%%%%%%%%%%%%%%%%%%%%%%%%%%%%%%
\subsection{Standard Latin Square for $M-$QAM}
In this subsection standard Latin Square for a square $M$-QAM constellation is obtained from that of $\sqrt M$-PAM constellation. 

Let $PAM(i),$ for $i=1,2,\cdots, {\sqrt M},$ denote the symbol set consisting of $\sqrt M$ symbols $\{(i-1)\sqrt M, ((i-1)\sqrt M)+1, ((i-1)\sqrt M)+2,\cdots,((i-1)\sqrt M)+(\sqrt M-1)\}.$ Let $L_{PAM(i)}$ denote the standard Latin Square for $\sqrt M$-PAM, with  symbol set $PAM(i)$ and also let  $L_{QAM}$ denote the standard Latin Square for $M$-QAM. Then, $L_{QAM}$ is given in terms of $L_{PAM(i)},$ $i=1,2,\cdots, \sqrt M,$ as the block left-cyclic Latin Square shown in Fig. \ref{fig:qam_const}. This is formally shown in the following Lemma.
%%%%%%%%%%%%%%%%%%%%%%%%

\begin{lemma}
\label{qam_ls}
Let $PAM(i)$  for $i=1,2,\cdots, {\sqrt M},$ denote the symbol set consisting  of $\sqrt M$ symbols $\{(i-1)\sqrt M, ((i-1)\sqrt M)+1, ((i-1)\sqrt M)+2,\cdots,((i-1)\sqrt M)+(\sqrt M-1)\}$ and let  $L_{PAM(i)}$ stand for the Latin Square that removes the singular fade state $z=1$ with the symbol set $PAM(i)$ for $\sqrt M$-PAM. Then arranging the cyclic Latin Squares $L_{PAM(i)}$ as shown Fig.\ref{fig:qam_const} where each row is a block wise left-cyclically shifted version of the previous row results in a Latin Square which removes the singular fade state $z=1$ for $M$-QAM. 
\end{lemma}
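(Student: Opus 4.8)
The plan is to verify the two defining conditions of the construction in Figure~\ref{fig:qam_const}: first, that the proposed block left-cyclic array is in fact a Latin Square of order $M$ over the symbol set $\mathbb{Z}_M$, and second, that it satisfies every singularity-removal constraint associated with the fade state $z=1$ for square $M$-QAM. The key structural observation I would exploit is that the QAM signal set factors as a direct product: under the mapping $\mu$ of \eqref{mumap}, a QAM index $k \in \{0,\dots,M-1\}$ decomposes uniquely as $k = (i-1)\sqrt{M} + r$ with $i \in \{1,\dots,\sqrt{M}\}$ indexing the in-phase ($A_{mI}$) PAM component and $r \in \{0,\dots,\sqrt{M}-1\}$ indexing the quadrature ($A_{mQ}$) PAM component (or vice versa, depending on how one reads \eqref{mumap}). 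So a QAM symbol is a pair of PAM symbols, and I expect both the Latin-Square property and the constraint-satisfaction to split cleanly along this in-phase/quadrature product.

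First I would establish the Latin Square property. The array is built from the $\sqrt{M}\times\sqrt{M}$ blocks $L_{PAM(i)}$, arranged so that the block-row pattern is itself left-cyclic at the block level while each block $L_{PAM(i)}$ is (by Lemma~\ref{pam_std_lemma}) a left-cyclic Latin Square on the symbol subset $PAM(i)$. Because the symbol sets $PAM(1),\dots,PAM(\sqrt{M})$ partition $\mathbb{Z}_M$ into $\sqrt{M}$ disjoint blocks of $\sqrt{M}$ consecutive symbols, I would argue that within any row of the full array each symbol subset $PAM(i)$ contributes its $\sqrt{M}$ distinct symbols exactly once (since each $L_{PAM(i)}$ is a Latin Square, every symbol of $PAM(i)$ appears once per row of that block), and the block-cyclic shift guarantees that across a full row of the $M\times M$ array the distinct $PAM(i)$ blocks each appear once; hence all $M$ symbols of $\mathbb{Z}_M$ appear exactly once per row. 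The identical argument, using that $L_{PAM(i)}$ is Latin in its columns and the block-cyclic arrangement permutes the block-columns, gives the column condition. This is essentially a bookkeeping argument combining the two levels of cyclic structure, so I would keep it terse.

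Next, and this is where the main work lies, I would verify the singularity-removal constraints for $z=1$. For $z=1$ the effective-constellation collision $x_A + x_B = x_A' + x_B'$ decouples into separate in-phase and quadrature equations, $A_{mI} + B_{mI} = A_{mI}' + B_{mI}'$ and $A_{mQ} + B_{mQ} = A_{mQ}' + B_{mQ}'$, because the real and imaginary parts add independently. Thus every QAM singularity-removal constraint for $z=1$ is precisely a pair (one in-phase PAM constraint, one quadrature PAM constraint) of the $\sqrt{M}$-PAM constraints for $z=1$ illustrated in the 4-PAM example. I would then show that the block-cyclic QAM construction assigns a common symbol to any two index-pairs satisfying such a coupled constraint: the quadrature component is handled inside a single PAM block $L_{PAM(i)}$ (which removes the PAM singular state $z=1$ by Lemma~\ref{pam_std_lemma}), while the in-phase component is handled by the block-level left-cyclic shift, which is itself a left-cyclic PAM Latin Square acting on the block index $i$ and therefore removes the in-phase PAM constraint for the same reason. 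The crux is checking that these two mechanisms are compatible, i.e. that the symbol the construction places at a slot is determined consistently by the block index (via the block-cyclic shift) together with the within-block entry, and that two colliding pairs land on the same symbol. I anticipate the main obstacle to be exactly this compatibility bookkeeping at the boundary between the two cyclic levels: one must confirm that the labels $PAM(i) = \{(i-1)\sqrt{M}, \dots, (i-1)\sqrt{M}+(\sqrt{M}-1)\}$ are threaded through the shifts so that a collision in the in-phase coordinate moves entries between blocks in a way that preserves agreement of the assigned symbol. Once the product decomposition of the $z=1$ constraint is set up explicitly and Lemma~\ref{pam_std_lemma} is invoked for both coordinates, the verification should reduce to matching indices, which I would present as a short case-tracking of a generic constraint rather than an exhaustive enumeration.
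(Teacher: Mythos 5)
Your proposal follows essentially the same route as the paper: the paper's proof likewise decouples the $z=1$ collision into independent in-phase and quadrature shifts $(m_1,m_2)$ and shows via the map $\mu$ that the constrained index pairs satisfy $k_2=k_1+\tfrac{1}{2}(m_1\sqrt{M}+m_2)$ and $l_2=l_1-\tfrac{1}{2}(m_1\sqrt{M}+m_2)$, which is exactly the two-level (block-cyclic plus within-block-cyclic) compatibility bookkeeping you identify as the crux. Your plan is correct; the only difference is that you also propose to verify the Latin-square property of the block arrangement explicitly, which the paper leaves implicit.
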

%%%%%%%%%%%%%%%%%%%
\begin{proof}
See Appendix \ref{app9}.
\end{proof}
%%%%%%%%%%%%%
\begin{figure}[h]
\centering
%\vspace{-.6 cm}
\includegraphics[totalheight=2.2in,width=2.2in]{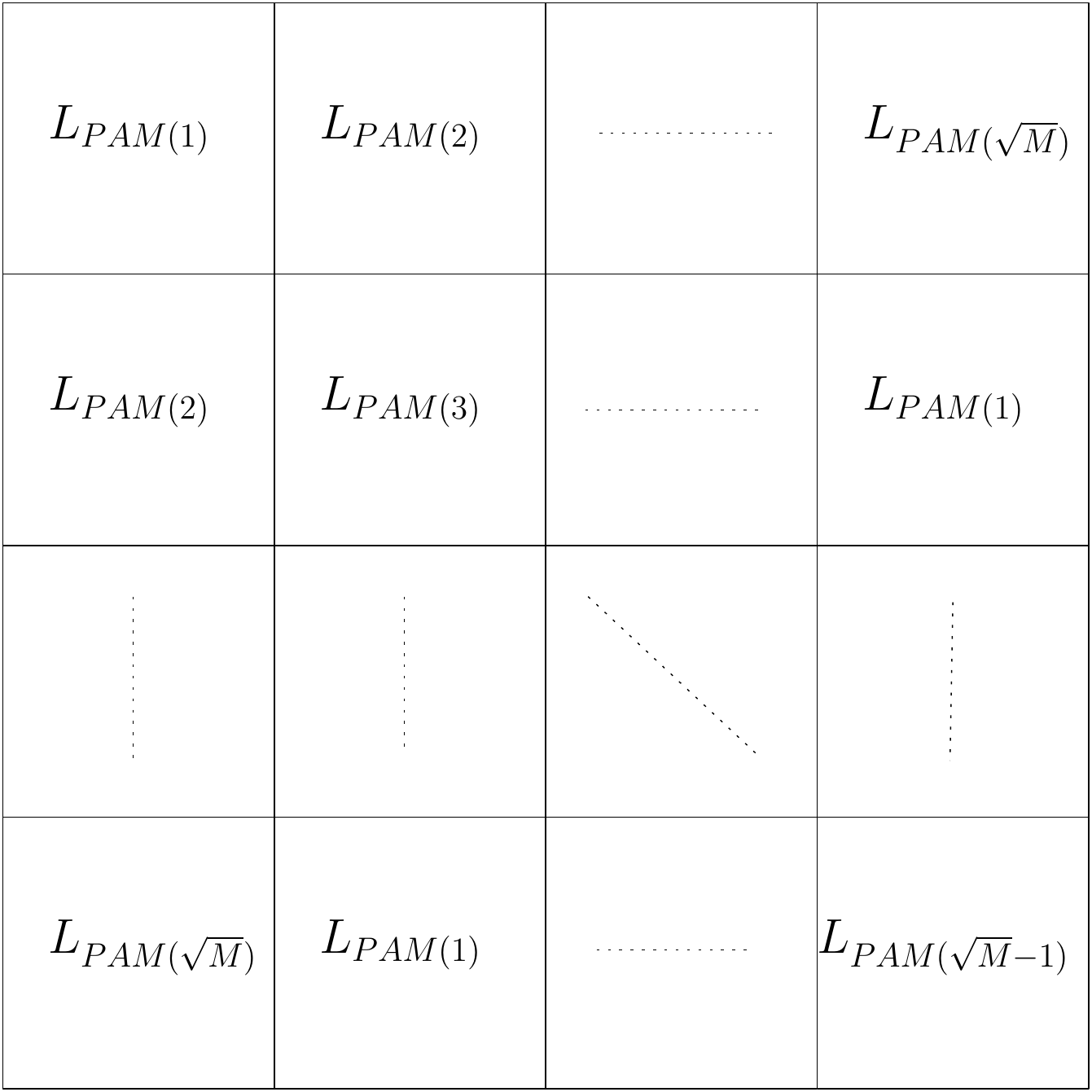}
%\vspace{-2 cm}
\caption{Construction of $L_{QAM}$ for $z=1$.}     
\label{fig:qam_const}  
%\vspace{-1 cm}      
\end{figure}
%%%%%%%
The standard Latin Square for 16-QAM is shown in Fig.\ref{qam_lat}.
\begin{figure*}[htbp]
\centering
%\vspace{-.5 cm}
\begin{tabular}{|c|c|c|c||c|c|c|c||c|c|c|c||c|c|c|c|}
\hline 0 & 1 & 2 & 3 & 4 & 5 & 6 & 7 & 8 & 9 & 10 & 11 & 12 & 13 & 14 & 15\\
\hline 1 & 2 & 3 & 0 & 5 & 6 & 7 & 4 & 9 & 10 & 11 & 8 & 13 & 14 & 15 & 12\\ 
\hline 2 & 3 & 0 & 1 & 6 & 7 & 4 & 5 & 10 & 11 & 8 & 9 & 14 & 15 & 12 & 13\\
\hline 3 & 0 & 1 & 2 & 7 & 4 & 5 & 6 & 11 & 8 & 9 & 10 & 15 & 12 & 13 & 14\\
\hline
\hline 4 & 5 & 6 & 7 & 8 & 9 & 10 & 11 & 12 & 13 & 14 & 15 & 0 & 1 & 2 & 3\\
\hline 5 & 6 & 7 & 4 & 9 & 10 & 11 & 8 & 13 & 14 & 15 & 12 & 1 & 2 & 3 & 0\\ 
\hline 6 & 7 & 4 & 5 & 10 & 11 & 8 & 9 & 14 & 15 & 12 & 13 & 2 & 3 & 0 & 1\\
\hline 7 & 4 & 5 & 6 & 11 & 8 & 9 & 10 & 15 & 12 & 13 & 14 & 3 & 0 & 1 & 2\\
\hline
\hline 8 & 9 & 10 & 11 & 12 & 13 & 14 & 15 & 0 & 1 & 2 & 3 & 4 & 5 & 6 & 7\\
\hline 9 & 10 & 11 & 8 & 13 & 14 & 15 & 12 & 1 & 2 & 3 & 0 & 5 & 6 & 7 & 4\\ 
\hline 10 & 11 & 8 & 9 & 14 & 15 & 12 & 13 & 2 & 3 & 0 & 1 & 6 & 7 & 4 & 5\\
\hline 11 & 8 & 9 & 10 & 15 & 12 & 13 & 14 & 3 & 0 & 1 & 2 & 7 & 4 & 5 & 6\\
\hline
\hline 12 & 13 & 14 & 15 & 0 & 1 & 2 & 3 & 4 & 5 & 6 & 7 & 8 & 9 & 10 & 11\\
\hline 13 & 14 & 15 & 12 & 1 & 2 & 3 & 0 & 5 & 6 & 7 & 4 & 9 & 10 & 11 & 8\\ 
\hline 14 & 15 & 12 & 13 & 2 & 3 & 0 & 1 & 6 & 7 & 4 & 5 & 10 & 11 & 8 & 9\\
\hline 15 & 12 & 13 & 14 & 3 & 0 & 1 & 2 & 7 & 4 & 5 & 6 & 11 & 8 & 9 & 10\\
\hline 
\end{tabular}
\caption{Standard Latin Square $L_{QAM}$ for 16-QAM.}
%\vspace{-1 cm}
\label{qam_lat}
\end{figure*}
%%%%%%%%%%%%%%%%%%%%%%%%%%%%%%%%%%%%%%%%%%%%%%%%%%%%%%%%%%%%%%
%%%%%%%%%%%%%%%%%%%%%%%%%%%%%%%%%%%%%%%%%%%%%%%%%%%%%%%%%%%%%%%%%%%%%%%%%%%%%%%%%%%%%%%%%%%
\section{Channel quantization for $M$-QAM signal sets}
\label{sec4}
In Section \ref{sec3}, we saw how the relay R chooses a complex number for transmission during the broadcast phase if the channel fade coefficients result in a singular fade state. This was done essentially by associating a Latin Square to each singular fade state. However, ${\gamma}e^{j\theta} $ being a ratio of two complex numbers ($ h_B/ h_A $), can take any value in the complex plane (in fact, it takes a value equal to a singular fade state with probability zero). This demands the partition of the complex plane into regions and associating a Latin Square to each region so as to optimize total number of network coding maps used at the relay during the broadcast phase. The study of channel quantization for $M$-PSK signal set for two-user fading MAC channel is described in \cite{KR} and that for bi-directional relay network is addressed in \cite{VNR}. While \cite{KR} uses the notion of distance classes for partitioning the complex plane, in \cite{VNR} the regions associated with each singular fade state are obtained by plotting the pair-wise transition boundaries of relevant pairs of singular fade states. 

In this section we describe how to partition the set of all possible ${\gamma}e^{j\theta} $ when the end nods A and B use a square $M$-QAM signal set. The salient difference while using $M$-QAM constellation when compared to $M$-PSK signal sets at nodes A and B is as follows. It is shown in Lemma 1 of \cite{VNRarX} that no two distinct pairs  of points $(d_{k_1},d_{l_1})$ and $(d_{k_2},d_{l_2})$ in the difference constellation $\Delta {\mathcal{S}}$, such that $|d_{k_1}| \neq |d_{k_2}|$ and $|d_{l_1}| \neq |d_{l_2}|$, can have the same ratio when $\mathcal{S}$ is $M$-PSK signal set, i.e.,
\begin{align}
\label{diffwithpsk}
z=-\frac{d_{k_1}}{d_{l_1}} = -\frac{d_{k_2}}{d_{l_2}},~~ d_{k_1}\neq d_{l_1},d_{k_2}\neq d_{l_2} \in \Delta {\mathcal{S}}\\
\nonumber
\mathrm{if~ and~ only~ if}~ |d_{k_1}|=|d_{k_2}| ~\mathrm{and}~ |d_{l_1}|=|d_{l_2}|.
\end{align}

In other words, all $z \neq z' \neq 1 \in {\mathcal{H}}$, arise from distinct pairs differing in their magnitudes in $\Delta {\mathcal{S}}$ for $M$-PSK. This is not the case when the end nodes A and B use $M$-QAM signal set. This is shown in the following example. First, notice that for $M$-QAM signal set used at nodes A and B, any point $d_k \in \Delta \cal S$ can be written in the form
\begin{align}
\nonumber
\label{eqndifc}
d_k=\frac{2}{\sqrt{\rho}} \left( n_k + j m_k\right), \mathrm{for~some} ~n_k, m_k \in \left\lbrace -\left(\sqrt{M}-1 \right), \right. \\ \left. -\left(\sqrt{M}-2 \right), ..., \left(\sqrt{M}-2 \right), \left(\sqrt{M}-1 \right)\right\rbrace
\end{align}
\begin{example}\label{egdiff}
Consider 16-QAM signal set used at nodes A and B. For the following 4 points in $\Delta \mathcal{S}$, $d_{k_1}=\frac{2}{\sqrt{\rho}} (2+j)$ $\neq$ $d_{l_1}=\frac{2}{\sqrt{\rho}} (1+0j)$, $d_{k_2}=\frac{2}{\sqrt{\rho}} (1+3j)$ $\neq$ $d_{l_2}=\frac{2}{\sqrt{\rho}} (1+j)$,
\begin{align*}
1 \neq \frac{d_{k_1}}{d_{l_1}} =  \frac{d_{k_2}}{d_{l_2}}
\end{align*}
However, $|d_{k_1}| \neq |d_{k_2}|$ and $|d_{l_1}| \neq |d_{l_2}|$.
\end{example}

So the procedure used for channel quantization in \cite{VNR} for $M$-PSK does not directly apply for $M$-QAM. However, as in case with $M$-PSK described in \cite{VNR}, we show that for certain values of ${\gamma}e^{j\theta} $ any choice of clustering satisfying the exclusive law gives the same minimum cluster distance so that any one of the Latin Squares may be chosen. Subsequently, channel quantization for those values of ${\gamma}e^{j\theta} $ for which the choice of the Latin Square does play a role in the overall performance is taken up. 

\subsection{Clustering Independent Region}
%\vspace{-.1 in}
\begin{definition}
The set of values of ${\gamma}e^{j\theta} $ for which any clustering satisfying the exclusive law gives the same minimum cluster distance is referred to as the clustering independent region. The region in the complex plane other than the clustering independent region is called the clustering dependent region.
\end{definition}
%%%%%%%%%%%%%%%%%%%%%%%%%%%%%%%%%%%%%%%%%%%%%%%%%%%%%%%%%%%%%%%%%%%%%%

It is shown in \cite{KR} that with an arbitrary signal set $\cal S$ used at A and B for the MA phase, for any clustering $ \cal C^{\gamma,\theta} $ satisfying the exclusive law, we have 
\begin{align}
\label{eqnset1}
&d_{min}({\cal C}^{\gamma,\theta}) \leq \min \{ d_{min}({\cal S}), \gamma d_{min}({\cal S}) \}
\end{align}

%%%%%%%%%%%%%%%%%%%%%%%%%%%%%%%%%%%%%%%%%%%%%%%%%%%%%%%%%%%%%%%%%%%%%%
%%%%%%%%%%%%%%%%%%%%%%%%%%%%%%%%%%%%%%%%%%%%%%%%%%%%%%%%%%
\begin{observation}
\label{obs1}
Using \eqref{eqnset1}, for values of $\gamma e^{j\theta}$ for which ${d_{min}(\gamma e^{j\theta}) \geq \min \{ d_{min}({\cal S}), \gamma d_{min}({\cal S})\}}$, since ${d_{min}(\gamma e^{j\theta}) \leq d_{min}({\cal C}^{\gamma,\theta})}$, we have ${d_{min}({\cal C}^{\gamma,\theta}) = \min \{ d_{min}({\cal S}), \gamma d_{min}({\cal S}) \}}$ for all clusterings ${\cal C}^{\gamma,\theta}$ satisfying the exclusive law. Such $\gamma e^{j\theta}$ therefore belong to the clustering independent region.
\end{observation}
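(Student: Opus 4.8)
The plan is to prove the claim by a short squeeze (sandwich) argument that pins $d_{min}({\cal C}^{\gamma,\theta})$ between two bounds which coincide under the stated hypothesis. The three ingredients are: the upper bound \eqref{eqnset1}, valid for every exclusive-law clustering; a matching lower bound coming from the fact that the minimum cluster distance restricts the minimization to a smaller index set than $d_{min}(\gamma e^{j\theta})$; and the defining hypothesis $d_{min}(\gamma e^{j\theta}) \geq \min\{d_{min}({\cal S}), \gamma d_{min}({\cal S})\}$.

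First I would establish the lower bound $d_{min}({\cal C}^{\gamma,\theta}) \geq d_{min}(\gamma e^{j\theta})$. Both quantities are minima of the same expression $|(x_A - x'_A) + \gamma e^{j\theta}(x_B - x'_B)|$ over pairs $(x_A,x_B),(x'_A,x'_B) \in {\cal S}^2$. Here $d_{min}(\gamma e^{j\theta})$ ranges over \emph{all} distinct pairs, whereas $d_{min}({\cal C}^{\gamma,\theta})$ ranges only over those pairs assigned to \emph{different} clusters by ${\cal M}^{\gamma,\theta}$; since pairs mapped to different symbols are in particular distinct, the latter index set is a subset of the former. A minimum taken over a subset is at least the minimum over the whole set, which yields the inequality (this is exactly the relation quoted inside the statement). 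Notably, this step holds for any clustering and does not invoke the exclusive law.

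Next I would chain the inequalities. The hypothesis gives $d_{min}(\gamma e^{j\theta}) \geq \min\{d_{min}({\cal S}), \gamma d_{min}({\cal S})\}$, so combining it with the lower bound produces $d_{min}({\cal C}^{\gamma,\theta}) \geq \min\{d_{min}({\cal S}), \gamma d_{min}({\cal S})\}$. On the other hand, \eqref{eqnset1} supplies the reverse inequality $d_{min}({\cal C}^{\gamma,\theta}) \leq \min\{d_{min}({\cal S}), \gamma d_{min}({\cal S})\}$ for every clustering satisfying the exclusive law. The two opposing bounds force equality, $d_{min}({\cal C}^{\gamma,\theta}) = \min\{d_{min}({\cal S}), \gamma d_{min}({\cal S})\}$. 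Because the right-hand side depends only on $\gamma$ and $\cal S$ and carries no dependence on the particular clustering, the common value is attained by every exclusive-law clustering, so these $\gamma e^{j\theta}$ lie in the clustering independent region by definition.

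There is essentially no hard step here; the entire content lies in correctly identifying the two opposing bounds, so the result is really a corollary of \eqref{eqnset1} together with the subset remark. The only points deserving a line of care are that \eqref{eqnset1} is invoked solely for clusterings obeying the exclusive law (precisely the class quantified in the statement), and that the restricted minimization defining $d_{min}({\cal C}^{\gamma,\theta})$ is over a nonempty set, i.e.\ the clustering is nontrivial with at least two clusters; both conditions hold throughout the two-way relaying setting under consideration.
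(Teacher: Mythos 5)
Your proposal is correct and follows exactly the reasoning the paper intends: the subset relation gives $d_{min}({\cal C}^{\gamma,\theta}) \geq d_{min}(\gamma e^{j\theta})$, the hypothesis lifts this to $\min\{d_{min}({\cal S}),\gamma d_{min}({\cal S})\}$, and \eqref{eqnset1} closes the sandwich from above. The paper states this observation without a separate proof, and your write-up simply makes the same squeeze argument explicit.
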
Define
\begin{align}
\nonumber
\label{eqnGCI}
\vspace{-.1 in}
\Gamma_{CI}({\cal S}) = &\{ \gamma e^{j\theta} : \left| d_k + \gamma e^{j\theta} d_l \right| \geq \min	( d_{min}({\cal S}), \gamma d_{min}({\cal S})) \\
&\forall \left( d_k, d_l \right)\neq \left( 0, 0 \right) \in (\Delta {\cal S})^2,  \gamma \in {\mathbb{R}}^{+}, - \pi \leq \theta < \pi \}\\
&= \Gamma_{CI}^{ext}({\cal S}) ~{\cup} ~\Gamma_{CI}^{int}({\cal S})
\end{align}
where 
%\vspace{-.1 in}
\begin{align}
\label{eqnGCIext}
\Gamma_{CI}^{ext}({\cal S})= \Gamma_{CI}({\cal S})\cap \left\lbrace \gamma > 1\right\rbrace,
\end{align}
%\vspace{-.4 in}
\begin{align}
\Gamma_{CI}^{int}({\cal S})= \Gamma_{CI}({\cal S})\cap \left\lbrace \gamma \leq 1\right\rbrace.
\end{align}
It has been shown in \cite{VNR} that the region $\Gamma_{CI}^{int}({\cal S})$ is obtained by the complex inversion of the region $\Gamma_{CI}^{ext}({\cal S})$ so that by finding out one the other can be obtained easily.
\begin{theorem}
\label{Thrm1}
For $M$-QAM signal set the region $\Gamma_{CI}^{ext}(\mbox{$M$-QAM})$ is the outer envelope region formed by $8\left(\sqrt{M}-1 \right)$ unit circles with centers ($\alpha +j\beta $) belonging to the set $\left\lbrace \pm\left(\sqrt{M}-1 \right)+ jx, x \pm j\left(\sqrt{M}-1 \right)\right\rbrace$, where $x \in {\left\lbrace {-\left(\sqrt{M}-1 \right)}, {-\left(\sqrt{M}-2 \right)},..., {\left(\sqrt{M}-2 \right)}, {\left(\sqrt{M}-1 \right)} \right\rbrace} $
\end{theorem}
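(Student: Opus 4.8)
The plan is to rewrite the defining inequality of $\Gamma_{CI}$ in normalized form and recognize $\Gamma_{CI}^{ext}$ as the exterior of a union of disks. Since $\gamma>1$ on $\Gamma_{CI}^{ext}$, the threshold $\min(d_{min}(\mathcal S),\gamma\, d_{min}(\mathcal S))$ appearing in \eqref{eqnGCI} equals $d_{min}(\mathcal S)=\tfrac{2}{\sqrt\rho}$. Writing the difference-constellation points as $d_k=\tfrac{2}{\sqrt\rho}v$ and $d_l=\tfrac{2}{\sqrt\rho}w$ with $v=n_k+jm_k$ and $w=n_l+jm_l$ Gaussian integers whose parts lie in $\{-(\sqrt M-1),\dots,\sqrt M-1\}$, and dividing through by $d_{min}(\mathcal S)$, the condition defining $\Gamma_{CI}$ becomes $|v+zw|\ge 1$ for every pair $(v,w)\neq(0,0)$. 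For $w\neq 0$ this is exactly $|z-(-v/w)|\ge 1/|w|$, i.e. $z$ lies outside the open disk of radius $1/|w|$ centred at the singular fade state $-v/w$ of \eqref{sing_expression}; the pairs with $w=0$ give $|v|\ge1$ and those with $v=0$ give $\gamma|w|\ge1$, both automatic once $\gamma>1$. Hence $\Gamma_{CI}^{ext}$ is the part of $\{\gamma>1\}$ lying outside the union of all these disks.

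First I would isolate the subfamily of unit denominators. The Gaussian units are $w\in\{\pm1,\pm j\}$, for which the disks have unit radius; already with $w=1$ the centres $-v$ run over every lattice point of $\Delta\mathcal S$, i.e. the full grid $\{\alpha+j\beta:\alpha,\beta\in\{-(\sqrt M-1),\dots,\sqrt M-1\}\}$. Let $U$ denote the union of these unit disks. Because neighbouring centres are at distance $1$ (axially) or $\sqrt2$ (diagonally), each less than $2$, the disks overlap, so $U$ is connected, contains $\overline{D(0,1)}$, and covers the grid square with no interior gaps; consequently its complement is a single exterior region $R$ with $R\subset\{\gamma>1\}$, and this $R$ is precisely the ``outer envelope region'' of the statement. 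A short covering estimate then shows that every point on the bounding circle of an interior lattice point lies strictly inside a neighbouring (axial or diagonal) disk, so only the disks centred at the boundary grid points contribute arcs to $\partial U$. These boundary points are exactly $\{\pm(\sqrt M-1)+jx,\ x\pm j(\sqrt M-1)\}$ with $x\in\{-(\sqrt M-1),\dots,\sqrt M-1\}$, and counting the four edges while discarding the doubly-counted corners gives $4(2\sqrt M-1)-4=8(\sqrt M-1)$ of them, as claimed.

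The substantive step, which I expect to be the main obstacle, is to show that the disks from non-unit denominators add nothing, i.e. that no such disk intrudes into $R$. I would establish the stronger fact that every non-unit disk is contained in $U$. Two bounds do the work: a non-unit $w$ has $|w|\ge\sqrt2$, so its disk has radius $1/|w|\le 1/\sqrt2$; and its centre satisfies $|-v/w|=|v|/|w|\le(\sqrt M-1)\sqrt2/\sqrt2=\sqrt M-1$, placing it inside the disk of radius $\sqrt M-1$ inscribed in the grid square, hence bounded away from the corners. For any point $p$ of such a disk the nearest integer-lattice point is within $1/\sqrt2<1$; if that point lies in the finite grid we are done, and if $p$ has spilled past one edge of the square the nearest boundary grid point is still within $\sqrt{(1/\sqrt2)^2+(1/2)^2}=\sqrt3/2<1$. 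The inscribed-disk bound on the centre forces $p$ (for $M\ge4$) to spill past at most one edge and never past a corner, so $p\in U$ in every case. This containment is the crux; the remainder is routine planar geometry.

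Combining the pieces, the union of all the disks equals $U$: the non-unit disks and the $v=0$ disks lie inside it, while the remaining unit disks only fill its interior. Therefore $\Gamma_{CI}^{ext}(M\text{-QAM})=R$ is the exterior, or outer envelope, of the $8(\sqrt M-1)$ unit circles centred at $\{\pm(\sqrt M-1)+jx,\ x\pm j(\sqrt M-1)\}$, which is the assertion of the theorem.
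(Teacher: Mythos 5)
Your argument follows the same skeleton as the paper's proof in Appendix E: normalize by $d_{min}(\mathcal S)=2/\sqrt\rho$ so that each pair $(d_k,d_l)$ with $d_l\neq 0$ contributes the exterior of a disk of radius $1/|w|$ centred at the singular fade state $-v/w$, specialize to the minimal-$|d_l|$ (unit-denominator) pairs to obtain unit circles centred at every point of the integer grid $\{-(\sqrt M-1),\dots,\sqrt M-1\}^2$, observe that only the outermost circles contribute to the envelope, and count $4(2\sqrt M-1)-4=8(\sqrt M-1)$ of them. Where you genuinely add something is the sufficiency direction: the paper derives the unit-circle conditions as a \emph{necessary} consequence (``In particular, \dots'') and never verifies that the remaining disks --- those of radius $1/|w|\le 1/\sqrt2$ centred at singular fade states with non-unit denominator, and those coming from $|z|\le 1$ --- do not intrude into the outer envelope region. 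Your containment lemma (centre within the inscribed disk of radius $\sqrt M-1$, radius at most $1/\sqrt2$, hence every point lies within distance $<1$ of a grid point, possibly a boundary one) closes exactly this gap, so your write-up is actually more complete than the paper's. One small repair is needed: your ``never past a corner'' estimate requires $(\sqrt2-1)(\sqrt M-1)\ge 1/\sqrt2$, i.e. $\sqrt M\ge 1+\tfrac{1}{2-\sqrt2}\approx 2.71$, which holds for $M\ge 16$ but fails for $M=4$; for $M=4$ the offending disks (e.g.\ radius $1/\sqrt2$ centred at $\tfrac12+\tfrac{j}{2}$) only touch the corner grid point and a direct check shows they still lie in $U$, so the conclusion survives, but that case should be dispatched separately rather than by the general inequality.
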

\vspace{0.2 cm}
\begin{proof}
See Appendix $\ref{appen1}$
\end{proof}
It can be verified that the centers of the $8\left(\sqrt{M}-1 \right)$ circles in Theorem \ref{Thrm1} are the singular fade states which lie on the outermost square.

For $M>4$ the region $\Gamma_{CI}^{int}({\cal S})$ is described in the following lemma. Here, it may be noted that, for normalized signal sets used at nodes A and B, $4$-PSK and $4$-QAM are the same, and $M$-PSK (for all $M$) has already been addressed in \cite{VNR}.

%%%%%%%%%%%%%%%%%%%%%%%%%%%%%%%%%%%%%%%%%%%%%%%%%%%%%%%%%%%%%%%%
\begin{lemma}
\label{Lemma2}
For $M>4$, $\Gamma_{CI}^{int}(\mbox{$M$-QAM})$ is the outer envelope region formed by the \mbox{$8(\sqrt{M}-1)$} circles with centers \mbox{$\left(\frac{\alpha}{{\alpha}^2+{\beta}^2-1} -j \frac{\beta}{{\alpha}^2+{\beta}^2-1}\right)$}  and radii $1/\left(\alpha^2+\beta^2-1\right)$ where \mbox{$(\alpha +j\beta )$} belongs to the set $\left\lbrace \pm\left(\sqrt{M}-1 \right)+ jx, x \pm j\left(\sqrt{M}-1 \right)\right\rbrace$, $x \in {\left\lbrace {-\left(\sqrt{M}-1 \right)},{-\left(\sqrt{M}-2 \right)},..., {\left(\sqrt{M}-2 \right)}, {\left(\sqrt{M}-1 \right)} \right\rbrace} $
\end{lemma}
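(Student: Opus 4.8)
The plan is to use the fact, recalled from \cite{VNR} just before Theorem \ref{Thrm1}, that $\Gamma_{CI}^{int}(\mathcal{S})$ is the image of $\Gamma_{CI}^{ext}(\mathcal{S})$ under the complex inversion $z \mapsto 1/z$. Since Theorem \ref{Thrm1} already identifies $\Gamma_{CI}^{ext}(\mbox{$M$-QAM})$ as the outer envelope region of the $8(\sqrt{M}-1)$ unit circles centered at the points $\alpha+j\beta$ of the prescribed set, it suffices to track what inversion does to each such unit circle and to the envelope region they bound.

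First I would record the standard fact that $w=1/z$ is a conformal bijection of $\mathbb{C}\setminus\{0\}$ mapping any circle not passing through the origin to a circle. To obtain the explicit image of a unit circle, I would write it as $|z-c|^2=1$ with $c=\alpha+j\beta$, expand to $|z|^2 - \bar{c}z - c\bar{z} + |c|^2 - 1 = 0$, substitute $z=1/w$, and multiply through by $|w|^2$ to get $1 - \bar{c}\bar{w} - cw + (|c|^2-1)|w|^2 = 0$. Dividing by $(|c|^2-1)=(\alpha^2+\beta^2-1)$ yields the equation of a circle whose center is $\bar{c}/(\alpha^2+\beta^2-1) = \frac{\alpha}{\alpha^2+\beta^2-1} - j\frac{\beta}{\alpha^2+\beta^2-1}$ and whose radius is $1/|\alpha^2+\beta^2-1|$, which is exactly the center and radius claimed in the statement.

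The step that genuinely uses the hypothesis $M>4$ is verifying that none of these unit circles passes through the origin, so that each image is a true circle (not a line) and the factor $\alpha^2+\beta^2-1$ is strictly positive, allowing the absolute value to be dropped. For every center in the set, one of $|\alpha|,|\beta|$ equals $\sqrt{M}-1$, so $\alpha^2+\beta^2 \geq (\sqrt{M}-1)^2$; since $M>4$ forces $\sqrt{M}\geq 4$ and hence $\alpha^2+\beta^2 \geq 9 > 1$, the inversion is well-behaved on every one of the $8(\sqrt{M}-1)$ circles. This is precisely where $M=4$ would fail, as there $\sqrt{M}-1=1$ and the centers with $x=0$ give $\alpha^2+\beta^2=1$, mapping the corresponding circle to a line.

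Finally I would promote the circle-by-circle computation to the level of regions. Because inversion is a homeomorphism of $\mathbb{C}\setminus\{0\}$ carrying $\{\gamma>1\}$ onto $\{\gamma<1\}$, it maps the exterior envelope region $\Gamma_{CI}^{ext}$ bijectively onto $\Gamma_{CI}^{int}$ and sends the boundary arcs of the former (the arcs of the unit circles forming the outer envelope) onto the corresponding arcs of the image circles. Hence $\Gamma_{CI}^{int}(\mbox{$M$-QAM})$ is the outer envelope region of the $8(\sqrt{M}-1)$ image circles, as claimed. The main obstacle I anticipate is purely the bookkeeping at this last stage: one must check that the \emph{outer} arcs of the original envelope map to the correct arcs of the inverted circles, since inversion can interchange the inside and outside of an individual circle. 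The magnitude relation $\gamma>1 \mapsto \gamma<1$ is what pins down the correct region and resolves this ambiguity.
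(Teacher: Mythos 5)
Your proposal is correct and follows exactly the same route as the paper, whose proof is the single sentence that the result ``follows directly from the complex inversion of the $8(\sqrt{M}-1)$ circles used for obtaining $\Gamma_{CI}^{ext}$ in Theorem \ref{Thrm1}.'' You have simply supplied the details the paper omits --- the explicit image of $|z-c|=1$ under $z\mapsto 1/w$, the observation that $M>4$ guarantees $\alpha^2+\beta^2>1$ so no circle passes through the origin, and the region-level bookkeeping --- all of which check out.
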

\begin{proof}
Proof follows directly from the complex inversion of the $8\left(\sqrt{M}-1 \right)$ circles used for obtaining $\Gamma_{CI}^{ext}({M-QAM})$ in Theorem $\ref{Thrm1}$.
\end{proof}
%%%%%%%%%%%%%%%%%%%%%%%%%%%%%%%%%%%%%%%%%%%%%%%%%%%%%%%%%%%%%%%%
\vspace{0.1 in}
Define
\begin{align}
\label{infinitynorm}
||\gamma e^{j\theta}||_{\infty} \triangleq \max \left\{ |\Re (\gamma e^{j\theta})|, |\Im (\gamma e^{j\theta})|\right\},~\forall \gamma e^{j\theta}\in {\mathbb{C}}
\end{align}
where $\Re (z)$ and $\Im (z)$ stand for the real and imaginary parts respectively of the complex number $z$.
%%%%%%%%%%%%%%%%%%%%%%%%%%%%%%%%%%%%%%%%%%%%%%%%%%%%%%%%%%%%%%%%%
%\begin{figure}[htbp]
%\centering
%\vspace{-.5 in}
%\includegraphics[totalheight=2.6in,width=3.5in]{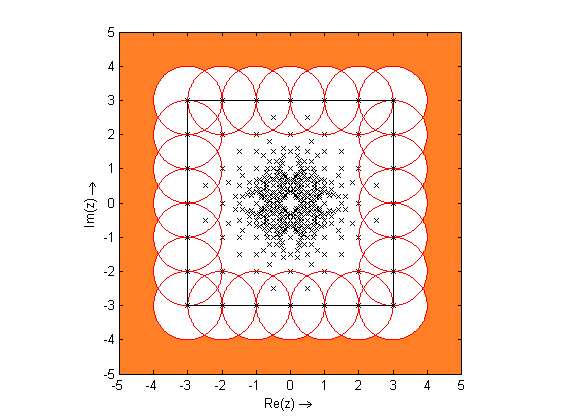} 
%\caption{Diagram showing $\Gamma_{CI}^{ext}$ (shaded region) for 16-QAM signal set}	
%\label{fig:extCI_16}
%% \vspace{-1cm}
%\end{figure}

%%%%%%%%%%%%%%%%%%%%%%%%%%%%%%%%%%%%%%%%%%%%%%%%%%%%%%%%%%%%%%%%%
From Theorem $\ref{Thrm1}$, it can be noted that for $||\gamma e^{j\theta}||_{\infty} \geq \sqrt{M}$, $\gamma e^{j\theta}$ belongs to $\Gamma_{CI}^{ext}$ and for $|\gamma e^{j\theta}| \leq \frac{1}{\sqrt{2M}+1-\sqrt{2}}$,  $\gamma e^{j\theta}$ belongs to $\Gamma_{CI}^{int}$. In the complex plane, the locus of the points satisfying $||\gamma e^{j\theta}||_\infty=a$ is a square centered at origin and having sides of length $a$. So, if $\gamma e^{j\theta}$ lies outside the square centered at origin and having sides of length $\sqrt{M}$ or inside the circle centered at the origin with radius $\frac{1}{\sqrt{2M}+1-\sqrt{2}}$ the relay can choose to use a fixed predetermined clustering satisfying the mutual exclusive law. In particular, that clustering whose corresponding Latin Square has only $M$ symbols can be used. This observation helps in significantly reducing the computational complexity at the relay. The relay needs to adaptively switch between network coding maps only if the above two conditions are not satisfied. The following example illustrates Theorem $\ref{Thrm1}$ and the above observation.
%%%%%%%%%%%%%%%%%%%%%%%%%%%%%%%%%%%%%%%%%%%%%%%%%%%%%%%%%%%%%%%%%
\begin{example}
Consider the case when the nodes A and B use $16$-QAM signal set. For this scenario, according to Theorem $\ref{Thrm1}$, the 24 unit circles centered at the singular states lying in the outermost square are shown in Fig. $\ref{fig:extCI_16}$. The region $\Gamma_{CI}^{ext}$ is the shaded region in Fig. $\ref{fig:extCI_16}$. The region $\Gamma_{CI}^{int}$ is the shaded region in Fig. $\ref{fig:intCI_16}$. The circles in Fig. $\ref{fig:intCI_16}$ are those obtained by the complex inversion of the unit circles shown in Fig. $\ref{fig:extCI_16}$. The clustering independent region $\Gamma_{CI}$ is the union of the shaded regions in Fig. $\ref{fig:extCI_16}$ and Fig. $\ref{fig:intCI_16}$. Additionally, as an approximation, if $\gamma e^{j\theta}$ falls in the region between the square with side length 4 (B1 in Fig. $\ref{fig:CIregion_16}$)  and circle with radius $\frac{1}{3\sqrt{2}+1}$ (B2 in Fig. $\ref{fig:CIregion_16}$), the relay uses adaptive network coding maps. This is shown in Fig. $\ref{fig:CIregion_16}$.
\end{example}
%%%%%%%%%%%%%%%%%%%%%%%%%%%%%%%%%%%%%%%%%%%%%%%%%%%%%%%%%%
\begin{figure}[htbp]
\centering
%\vspace{-0.4 in}
\subfigure[$\Gamma_{CI}^{ext}$ (shaded region) for 16-QAM]{
\includegraphics[totalheight=2.4in,width=3in]{extCI_16.png}
\label{fig:extCI_16}
}
%\vspace{-0.5 in}
\subfigure[$\Gamma_{CI}^{int}$ (shaded region) for 16-QAM]{
\includegraphics[totalheight=2.4in,width=3in]{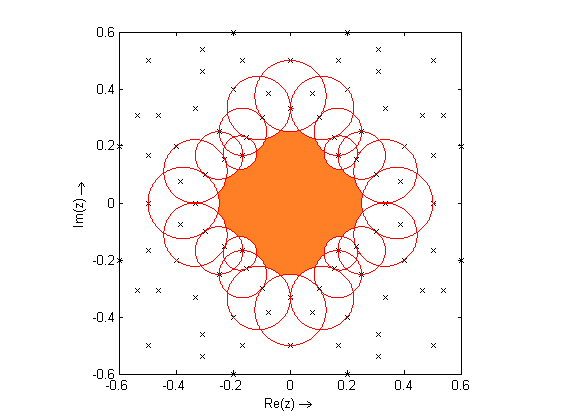}
\label{fig:intCI_16}
}
\label{Cluster_indep}

\caption{Cluster Independent regions for 16 QAM}
\end{figure}
%%%%%%%%%%%%%%%%%%%%%%%%%%%%%%%%%%%%%%%%%%%%%%%%%%%%%5

%%%%%%%%%%%%%%%%%%%%%%%%%%%%%%%%%%%%%%%%%%%%%%%%%%%%%%%%%%%%%%%%%
%
%\begin{figure}[htbp]
%\centering
%\includegraphics[totalheight=2.5in,width=3.5in]{intCI_16.png} 
%\caption{Diagram showing $\Gamma_{CI}^{int}$ (shaded region) for 16-QAM signal set}	
%\label{fig:intCI_16}
%% \vspace{-1cm}
%\end{figure}

\begin{figure}[htbp]
\centering
\includegraphics[totalheight=2.5in,width=3.5in]{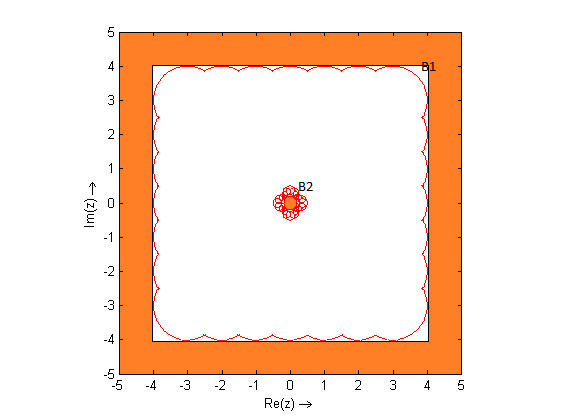} 
\caption{The unshaded region shows the region in the complex plane where adaptive switching needs to be done for 16-QAM signal set}	
\label{fig:CIregion_16}
\end{figure}

%%%%%%%%%%%%%%%%%%%%%%%%%%%%%%%%%%%%%%%%%%%%%%%%%%%%%%%%%%%%%%%%%%%

\subsection{Clustering Dependent Region}
In this section we partition that region of the complex plane where the choice of Latin Square significantly determines the performance of the bi-directional relay during the broadcast phase. It follows from Observation $\ref{obs1}$ that values of $\gamma e^{j\theta}$ for which $d_{min}(\gamma e^{j\theta}) < \min \{ d_{min}({\cal S}), \gamma d_{min}({\cal S}) \}$ constitute the clustering dependent region.
As the criterion described in \cite{VNR} for partitioning the channel fade state complex plane when nodes A and B use $M$-PSK depends on (\ref{diffwithpsk}) which no longer holds for $M$-QAM as illustrated in Example (\ref{egdiff}), we develop an alternative criterion for performing the channel partitioning in the clustering dependent region.\\
With $ {\cal D} \left( \gamma, \theta, d_k, d_l \right)$ defined as
\begin{align}
{\cal D} \left( \gamma, \theta, d_k, d_l \right)= |d_k +\gamma e^{j\theta}d_l|,~\left( d_k, d_l\right) \neq \left( 0,0 \right) \in (\Delta {\cal S})^2,
\end{align}
we have the following lemma.
%%%%%%%%%%%%%%%%%%%%%%%%%%%%%%%%%%%%%%
\begin{lemma}
\label{Lemma3}
If $\gamma e^{j\theta}$ is such that $arg \max_{z \in {\cal H}} \min_{\left( d_{k}, d_{l} \right):\frac{-d_k}{d_l}=z} {\cal D} \left( \gamma, \theta, d_{k}, d_{l} \right) = z'$, then the clustering ${\cal C}^{\left \{z' \right \}}$ maximizes the minimum cluster distance, among all the clusterings which belong to the set ${\cal C}_{\cal H}$
\end{lemma}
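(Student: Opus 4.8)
The plan is to turn the optimality assertion into a term-by-term comparison of the quantities $d_{min}(\mathcal{C}^{\{h\}},\gamma,\theta)$ over $h\in\mathcal{H}$, and to match each of them with the inner objective of the statement. Writing $g(z):=\min_{(d_k,d_l):\,-d_k/d_l=z}\mathcal{D}(\gamma,\theta,d_k,d_l)$ for brevity, the whole lemma reduces to the single identity
\begin{align}
\label{key}
d_{min}(\mathcal{C}^{\{z\}},\gamma,\theta)=g(z),\qquad z\in\mathcal{H}.
\end{align}
Indeed, once \eqref{key} holds, then for any $h\neq z'$ in $\mathcal{H}$ we get $d_{min}(\mathcal{C}^{\{h\}},\gamma,\theta)=g(h)\le g(z')=d_{min}(\mathcal{C}^{\{z'\}},\gamma,\theta)$ by the very definition of $z'=\arg\max_{z\in\mathcal{H}}g(z)$, which is exactly the claim that $\mathcal{C}^{\{z'\}}$ maximizes the minimum cluster distance over $\mathcal{C}_{\mathcal{H}}$.

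First I would dispose of the easy regime: by Observation \ref{obs1}, whenever $\gamma e^{j\theta}$ lies in the clustering independent region every exclusive-law clustering attains the common value $\min\{d_{min}(\mathcal{S}),\gamma d_{min}(\mathcal{S})\}$, so the statement is vacuous there and I may assume $\gamma e^{j\theta}$ is in the clustering dependent region. Next I would rewrite the objective in a geometry-revealing form: since a difference pair with $-d_k/d_l=z$ satisfies $d_k+\gamma e^{j\theta}d_l=(\gamma e^{j\theta}-z)d_l$, one has $\mathcal{D}(\gamma,\theta,d_k,d_l)=|\gamma e^{j\theta}-z|\,|d_l|$ and hence $g(z)=|\gamma e^{j\theta}-z|\cdot\min\{|d_l|:-d_k/d_l=z\}$. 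This is the $M$-QAM substitute for the magnitude bookkeeping of \cite{VNR}; note that, because several distinct-magnitude difference pairs can share one ratio (Example \ref{egdiff}), the minimizing $|d_l|$ must be tracked explicitly, unlike in the $M$-PSK case.

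The crux, and the step I expect to be the main obstacle, is establishing \eqref{key}: that the cross-cluster pair binding the minimum cluster distance of $\mathcal{C}^{\{z\}}$, when that clustering is evaluated at the off-design channel $\gamma e^{j\theta}$, is precisely a difference pair tied to $z$, so that the binding distance is $g(z)$. I would argue this from the Latin-square description of $\mathcal{C}^{\{z\}}$: the singularity-removal constraints for $z$ pin down which row--column cells must carry a common symbol, and the distance-maximizing completion of the remaining cells should be shown to force every other difference vector to contribute strictly more than $g(z)$ at $\gamma e^{j\theta}$. Controlling the \emph{entire} cross-cluster distance spectrum of the completed square --- not merely the constrained cells --- is the delicate part, and is where the richer $M$-QAM difference constellation makes the bookkeeping heavier than for $M$-PSK; I expect most of the work to lie in ruling out that some unconstrained cell completion produces a cross-cluster distance below $g(z)$.

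Once \eqref{key} is in place, the remaining assembly is the one-line comparison above. I would finish by remarking that the finiteness of $\mathcal{H}$ guarantees the $\arg\max$ is attained, so $z'$ and hence $\mathcal{C}^{\{z'\}}$ are well defined, and that the transpose and $\pi/2$-rotation symmetries (Lemmas \ref{iso_pi_bi_two} and \ref{refl_sym}) let one verify \eqref{key} on a fundamental domain of singular fade states rather than on all of $\mathcal{H}$, keeping the case analysis for the completion step manageable.
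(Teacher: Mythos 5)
Your skeleton is in fact identical to the paper's proof: the paper's entire argument consists of asserting your central identity $d_{min}(\mathcal{C}^{\{z\}},\gamma,\theta)=g(z)$ (its equation \eqref{eqn1appen2}) and then making exactly the one-line $\arg\max$ comparison you make. So the reduction is faithful, and you have correctly located where all the content lies. The genuine problem is that the step you defer as ``the main obstacle'' is not merely hard --- it is false, so your plan to establish it by analyzing the Latin-square completion cannot succeed. The pairs $(d_k,d_l)\in(\Delta\mathcal{S})^2$ with $-d_k/d_l=z$ are precisely the difference vectors of the pairs $(x_A,x_B),(x_A',x_B')$ that collapse at $z$, and by the definition of ``removing $z$'' the clustering $\mathcal{C}^{\{z\}}$ places every such pair in the \emph{same} cluster; but $d_{min}(\mathcal{C}^{\{z\}},\gamma,\theta)$ is by definition a minimum over pairs in \emph{different} clusters. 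The two minima therefore run over disjoint sets of pairs, and no completion bookkeeping can equate them. Your own rewriting makes the failure quantitative: as $\gamma e^{j\theta}\to z$ one has $g(z)=|\gamma e^{j\theta}-z|\cdot\min\{|d_l|\}\to 0$, whereas $d_{min}(\mathcal{C}^{\{z\}},\gamma,\theta)\to d_{min}^{\mathcal{C}^{\{z\}}}(z)>0$, which is positive precisely because $\mathcal{C}^{\{z\}}$ removes $z$. Hence the identity fails throughout a neighbourhood of every singular fade state --- exactly the clustering-dependent region where Lemma~\ref{Lemma3} is meant to be applied.

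What is actually true is the reverse-role statement: for $z'\neq z$, as long as $\mathcal{C}^{\{z'\}}$ separates a pair achieving $g(z)$ (generically it must, else it would remove $z$ as well), one has $d_{min}(\mathcal{C}^{\{z'\}},\gamma,\theta)\leq g(z)$. Thus $g(z)$ upper-bounds the performance of every clustering \emph{other} than $\mathcal{C}^{\{z\}}$, and under the idealization that only the singularity-removal constraints are merged, $d_{min}(\mathcal{C}^{\{z'\}},\gamma,\theta)$ equals the minimum of $\min_{z\neq z'}g(z)$ and a term bounded below by $\min\{d_{min}(\mathcal{S}),\gamma d_{min}(\mathcal{S})\}$ coming from pairs with $d_k=0$ or $d_l=0$. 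Maximizing this over $z'$ selects $z'=\arg\min_{z\in\mathcal{H}}g(z)$ --- the weighted-nearest singular fade state --- not $\arg\max$. This corrected version is the one consistent with the paper's own Fig.~\ref{fig:2_1jsfs} and Fig.~\ref{fig:full1}, where $\mathcal{R}_{\{2+j\}}$ is a neighbourhood of $2+j$ bounded by transition curves with its nearest neighbours, and with the physical intuition that the relay should remove the singular fade state closest to the actual channel. In short: the gap you flagged is real, it is inherited verbatim from the paper (whose proof is nothing but the assertion of your identity), and the repair is not to prove that identity but to flip it, after which your one-line comparison goes through with $\arg\min$ in place of $\arg\max$.
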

\begin{proof}
For $d_k$ and $d_l$ $\in \Delta {\cal S}$
\begin{align}
\label{eqn1appen2}
\min_{\left( d_{k}, d_{l} \right):\frac{-d_k}{d_l}=z} {\cal D} \left( \gamma, \theta, d_{k}, d_{l} \right)= d_{min}({\cal C}^{\left \{ z\right \}},\gamma,\theta),
\end{align}
the minimum cluster distance of the clustering ${\cal C}^{\left \{ z\right \}}$ evaluated at $\gamma e^{j\theta}$. If the maximum in ($\ref{eqn1appen2}$) is achieved for $z' \in {\cal H}$, then $d_{min}({\cal C}^{\left \{ z'\right \}},\gamma,\theta)$ $\geq$ $d_{min}({\cal C}^{\left \{ z\right \}},\gamma,\theta)$ $\forall z \neq z' \in {\cal H}$. Then  ${\cal C}^{\left \{ z'\right \}}$ is the best clustering for the channel fade coefficient $\gamma e^{j\theta}$.
\end{proof}
%%%%%%%%%%%%%%%%%%%%%%%%%%%%%%%%%%%%%%%%%%%%%%%%%%%%%%%%%%%%%%%%%%%%5
%\vspace{0.5 cm}

Therefore, associated with each singular fade state $z \in {\cal H}$ we have a region ${\cal R}_{\left\lbrace z\right\rbrace}$ in the $\gamma e^{j\theta}$ plane in which the clustering  ${\cal C}^{\left \{z \right \}}$ maximizes the minimum cluster distance. This region  ${\cal R}_{\left\lbrace z\right\rbrace}$ is given by
\begin{align*}
{\cal R}_{\left\lbrace z\right\rbrace}&=\{ \gamma e^{j\theta} : \hspace{-0.4 cm}\min_{(d_k,d_l)  \in (\Delta {\cal S})^2: \frac{-d_k}{d_l}=z} |d_k + \gamma e^{j\theta} d_l|~ \geq \\
&~~~~~~\min_{(d_k,d_l)  \in (\Delta {\cal S})^2: \frac{-d_k}{d_l}=z'} |d_{k} + \gamma e^{j\theta} d_{l}|,\forall z' \neq z \in {\cal H}\}.
\end{align*}

The boundaries of these regions for each singular fade state are explicitly derived next. It is shown that like with $M$-PSK signal set considered in \cite{NVR}, the boundaries of the region ${\cal R}_{\left\lbrace z \right\rbrace}$ are either circles or straight lines and a systematic procedure to obtain these regions for each singular state is given. A simple formulation to find out the pair wise transition boundary corresponding to a pair of clusterings ${\cal C}^{\left\lbrace z_1 \right\rbrace}$ and ${\cal C}^{\left\lbrace z_2 \right\rbrace}$ is stated next.

The curve $c( z_1, z_2)$ which denotes the pair-wise transition boundary formed by the singular fade state $z_1$ with the singular fade state $z_2$ is the set of $\gamma e^{j\theta}$ for which
\begin{align}
\nonumber
\min_{\substack {{(d_k,d_l) \in (\Delta {\cal S})^2 :} \\ {\frac{-d_k}{d_l}=z_1}}} { \vert {d_k+\gamma e^{j\theta} d_l} \vert}  = \min_{\substack {{(d_k,d_l) \in (\Delta {\cal S})^2 :} \\ {\frac{-d_k}{d_l}=z_2}}} { \vert {d_k+\gamma e^{j\theta} d_l} \vert}.
\end{align}

\begin{theorem}
\label{Thrm2}
With the notations
\begin{align*}
\check{d_l}&={\mathrm{arg}} \hspace{-0.8 cm}\min_{d_2 \in \Delta {\cal S} ~:~ -\frac{d_1}{d_2}=z_1} \left\{ \vert d_1 + \gamma e^{j\theta}d_2 \vert \right\},\\
\check{d_{l'}}&={\mathrm{arg}} \hspace{-0.8 cm}\min_{d_2 \in \Delta {\cal S} ~:~ -\frac{d_1}{d_2}=z_2} \left\{ \vert d_1 + \gamma e^{j\theta}d_2 \vert \right\},
\end{align*}
the pair wise transition curve $c\left( z_1, z_2\right)$, $z_1\neq z_2$ is any one of the following
\begin{itemize}
\item if $|\check{d_l}| \neq |\check{d_{l'}}|$, a circle with center $\left( x, y\right)$ and radius $r$, where
\begin{align*}
&x=\frac{\Re\left( z_1\right)}{1- {| \frac{\check{d_{l'}}}{\check{d_l}} |}^2} + \frac{\Re\left( z_2\right)}{1- {| \frac{\check{d_l}}{\check{d_{l'}}} |}^2},~
y=\frac{\Im\left( z_1\right)}{1- {| \frac{\check{d_{l'}}}{\check{d_l}} |}^2} + \frac{\Im\left( z_2\right)}{1- {| \frac{\check{d_l}}{\check{d_{l'}}} |}^2}\\
&\mathrm{and}~r=\sqrt{\left( x^2 +y^2 + \frac{|z_2|^2 |\check{d_{l'}}|^2-|z_1|^2 |\check{d_{l}}|^2}{|\check{d_{l}}|^2-|\check{d_{l'}}|^2}\right)}
\end{align*}

\item {if $|\check{d_l}| = |\check{d_{l'}}|$, a straight line of the form $ax + by=c$, where \vspace{-0.1 in}\begin{align*} 
a&=\left( {\Re\left( z_1\right)|\check{d_l}|^2-\Re\left( z_2\right)|\check{d_{l'}}|^2} \right),\\
b&=\left( {\Im\left( z_1\right)|\check{d_l}|^2-\Im\left( z_2\right)|\check{d_{l'}}|^2} \right),\\
c&=-\frac{1}{2} \left( {|z_2|^2 |\check{d_{l'}}|^2-|z_1|^2 |\check{d_{l}}|^2} \right).
\end{align*}}
\end{itemize}
\end{theorem}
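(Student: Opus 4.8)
The plan is to reduce each side of the equation defining $c(z_1,z_2)$ to a scaled distance from $\gamma e^{j\theta}$ to the corresponding singular fade state, which turns the transition condition into a classical Apollonius locus. The key observation is that for any pair $(d_k,d_l)\in(\Delta{\cal S})^2$ with $-d_k/d_l=z_1$ one has $d_k=-z_1 d_l$, so that
\begin{align*}
|d_k+\gamma e^{j\theta}d_l|=|-z_1 d_l+\gamma e^{j\theta}d_l|=|d_l|\,|\gamma e^{j\theta}-z_1|.
\end{align*}
Because the factor $|\gamma e^{j\theta}-z_1|$ is common to every pair of ratio $z_1$, the inner minimization is achieved by the admissible pair of smallest $|d_l|$; this minimizer is exactly $\check{d_l}$, and it does not actually depend on $\gamma e^{j\theta}$. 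Hence the $z_1$-side equals $|\check{d_l}|\,|\gamma e^{j\theta}-z_1|$ and, by the same argument, the $z_2$-side equals $|\check{d_{l'}}|\,|\gamma e^{j\theta}-z_2|$.

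Next I would write the defining equation of the transition curve as the equality of these reduced quantities, $|\check{d_l}|\,|\gamma e^{j\theta}-z_1|=|\check{d_{l'}}|\,|\gamma e^{j\theta}-z_2|$, and square both sides to remove the moduli. Setting $\gamma e^{j\theta}=x+jy$ and using $|\gamma e^{j\theta}-z|^2=x^2+y^2-2\Re(z)x-2\Im(z)y+|z|^2$, the equation becomes
\begin{align*}
|\check{d_l}|^2\bigl(x^2+y^2-2\Re(z_1)x-2\Im(z_1)y+|z_1|^2\bigr)\\
=|\check{d_{l'}}|^2\bigl(x^2+y^2-2\Re(z_2)x-2\Im(z_2)y+|z_2|^2\bigr).
\end{align*}
After collecting terms, the coefficient of $x^2+y^2$ is $|\check{d_l}|^2-|\check{d_{l'}}|^2$, and this determines the case split.

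The last step is the case analysis together with the matching of coefficients. If $|\check{d_l}|\neq|\check{d_{l'}}|$, I would divide through by $|\check{d_l}|^2-|\check{d_{l'}}|^2$ and complete the square. The center coordinates read off directly as the ratios $\big(|\check{d_l}|^2\Re(z_1)-|\check{d_{l'}}|^2\Re(z_2)\big)/\big(|\check{d_l}|^2-|\check{d_{l'}}|^2\big)$ and $\big(|\check{d_l}|^2\Im(z_1)-|\check{d_{l'}}|^2\Im(z_2)\big)/\big(|\check{d_l}|^2-|\check{d_{l'}}|^2\big)$, which I would then rewrite in the stated symmetric two-term form using the identities $1/(1-|\check{d_{l'}}/\check{d_l}|^2)=|\check{d_l}|^2/(|\check{d_l}|^2-|\check{d_{l'}}|^2)$ and $1/(1-|\check{d_l}/\check{d_{l'}}|^2)=-|\check{d_{l'}}|^2/(|\check{d_l}|^2-|\check{d_{l'}}|^2)$; the radius then follows from $r^2=x^2+y^2-(\text{constant term})$ and matches the displayed expression. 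If instead $|\check{d_l}|=|\check{d_{l'}}|$, the quadratic terms cancel identically and only the linear relation survives, giving the straight line $ax+by=c$ with the stated $a$, $b$, $c$.

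I expect essentially all of the conceptual content to lie in the reduction step, i.e. in establishing that each inner minimum factors as $|\check{d_l}|\,|\gamma e^{j\theta}-z_1|$ with a minimizer $\check{d_l}$ that can be taken independent of $\gamma e^{j\theta}$; once this Apollonius form is exposed, the remainder is routine coordinate algebra. The only points requiring care are the sign bookkeeping when re-expressing the circle's center in the symmetric form, and noting that the degenerate point $\gamma e^{j\theta}=z_1$ (where the common factor vanishes) is a single point that does not affect the locus.
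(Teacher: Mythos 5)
Your proposal is correct, and the final formulas check out: the identities $1/\bigl(1-|\check{d_{l'}}/\check{d_l}|^2\bigr)=|\check{d_l}|^2/\bigl(|\check{d_l}|^2-|\check{d_{l'}}|^2\bigr)$ and $1/\bigl(1-|\check{d_l}/\check{d_{l'}}|^2\bigr)=-|\check{d_{l'}}|^2/\bigl(|\check{d_l}|^2-|\check{d_{l'}}|^2\bigr)$ do turn your collected quadratic into exactly the stated center, radius, and line coefficients. The skeleton of your argument (equate the two minima, square, collect the quadratic in $(\gamma_R,\gamma_I)$, and split on whether the coefficient of $\gamma_R^2+\gamma_I^2$ vanishes) is the same as the paper's, but your opening reduction is genuinely different in organization and is the better one. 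The paper never factors $|d_k+\gamma e^{j\theta}d_l|=|d_l|\,|\gamma e^{j\theta}-z_1|$; instead it keeps $d_k$ and $d_l$ as separate Gaussian-integer coordinates $\frac{2}{\sqrt{\rho}}(n_k+jm_k)$, expands the squared moduli term by term, and only at the end substitutes the expressions for $\Re(-d_k/d_l)$ and $\Im(-d_k/d_l)$ to recover $z_1$ and $z_2$. Your factorization exposes the weighted Apollonius structure from the start and, more importantly, shows that the inner $\arg\min$ is simply the admissible $d_l$ of smallest modulus and hence independent of $\gamma e^{j\theta}$ --- a fact the theorem's own notation obscures (it defines $\check{d_l}$ via a $\gamma e^{j\theta}$-dependent minimization) and the paper leaves implicit; this is what guarantees the locus really is a single circle or line rather than a piecewise object. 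The one detail the paper states that you gloss over is the non-degeneracy of the line case: when $|\check{d_l}|=|\check{d_{l'}}|$ the coefficients $a$ and $b$ cannot both vanish, since that would force $z_1=z_2$; this is immediate but worth a sentence.
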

\begin{proof}
See Appendix $\ref{appen3}$
\end{proof}

\begin{observation}
\label{obs2}
From \cite{VNR} it is known that the region ${\cal R}_{\left\{-d_l/d_k\right\}}$ is the region obtained by the complex inversion of the region ${\cal R}_{\left\{-d_k/d_l\right\}}$. Also the distribution of the singular fade state in the $\gamma e^{j\theta}$ complex plane is periodic with periodicity $\pi/2$ for $M$-QAM. Moreover, within an interval $\left[ a,a+\pi/2\right], a\in \left\{ 0, \pi/2, \pi, 3\pi/2 \right\}$, the singular fade states are symmetric with respect to the  line $\theta = a+\pi/4$.
\end{observation}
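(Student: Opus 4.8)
The plan is to reduce all three assertions to invariance properties of the set $\mathcal{H}$ of singular fade states under three complex maps, exploiting the fact that for square $M$-QAM the difference constellation $\Delta{\cal S}$ is a centered square array of scaled Gaussian integers. First I would record three elementary closure properties that are immediate from the description in \eqref{eqndifc}: $\Delta{\cal S}$ is symmetric ($d\in\Delta{\cal S}\Rightarrow -d\in\Delta{\cal S}$), closed under multiplication by $j$ (a $\pi/2$ rotation of the square array maps it to itself), and closed under conjugation (reflection about the real axis). Since every $z\in\mathcal{H}$ is a ratio $z=-d_k/d_l$ with $d_k,d_l\in\Delta{\cal S}$ by \eqref{sing_expression}, and $\mathcal{H}$ is finite, each of these closure properties pushes forward to a symmetry of $\mathcal{H}$.

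For the complex inversion claim I would establish a scaling identity. Writing $w=\gamma e^{j\theta}$ and abbreviating the minimum cluster distance of ${\cal C}^{\{z\}}$ evaluated at $w$ as $d_{min}({\cal C}^{\{z\}},w)$, substituting $w=1/w''$ into $|d_k+wd_l|=\frac{1}{|w''|}|d_l+w''d_k|$ and relabelling $(d_k,d_l)\mapsto(d_l,d_k)$, which sends the constraint $-d_k/d_l=z$ to $-d_l/d_k=1/z$, yields
\[
d_{min}({\cal C}^{\{z\}},w)=\frac{1}{|w''|}\,d_{min}({\cal C}^{\{1/z\}},w''),\qquad w''=\frac{1}{w}.
\]
Because the factor $1/|w''|$ is the same for every $z\in\mathcal{H}$, it does not perturb the maximizer defining the regions ${\cal R}_{\{z\}}$; and since $z\mapsto 1/z$ is a bijection of $\mathcal{H}$ onto itself, membership transfers as $w\in{\cal R}_{\{z\}}\iff 1/w\in{\cal R}_{\{1/z\}}$. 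As $1/z=-d_l/d_k$ when $z=-d_k/d_l$, this is exactly the statement that ${\cal R}_{\{-d_l/d_k\}}$ is the image of ${\cal R}_{\{-d_k/d_l\}}$ under the complex inversion $w\mapsto 1/w$ used in \cite{VNR}.

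For the $\pi/2$-periodicity I would show $\mathcal{H}=j\mathcal{H}$: if $z=-d_k/d_l\in\mathcal{H}$ then $jz=-(jd_k)/d_l$ with $jd_k\in\Delta{\cal S}$, so $jz\in\mathcal{H}$, and finiteness forces equality. In polar form this reads $\gamma e^{j\theta}\in\mathcal{H}\iff\gamma e^{j(\theta+\pi/2)}\in\mathcal{H}$, the set-level counterpart of Lemma \ref{iso_pi_bi_two}. For the reflection symmetry I would note that reflecting a point about the ray $\theta=\pi/4$ is the map $z\mapsto j\bar z$, since $\gamma e^{j(\pi/2-\theta)}=j\,\overline{\gamma e^{j\theta}}$, and then verify invariance: $j\bar z=-(j\bar{d_k})/\bar{d_l}$ with $\bar{d_l}\in\Delta{\cal S}$ and $j\bar{d_k}\in\Delta{\cal S}$, so $j\bar z\in\mathcal{H}$. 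To upgrade to a general interval $[a,a+\pi/2]$ with $a=k\pi/2$, reflection about $\theta=a+\pi/4$ is $z\mapsto(-1)^k j\bar z$, which is the composite of $z\mapsto j\bar z$ with the rotation $z\mapsto -z$ (already a symmetry by periodicity, as $-1=j^2$); hence $\mathcal{H}$ is invariant under it, giving the claimed symmetry about $\theta=a+\pi/4$ in every quadrant.

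The routine parts are the closure checks on $\Delta{\cal S}$ and the ratio algebra. The one step deserving care is the inversion claim: I must confirm that the common scaling factor $\tfrac{1}{|w''|}$ leaves the maximizer unchanged and that the quantifier ``$\forall z'\neq z$'' in the definition of ${\cal R}_{\{z\}}$ transfers correctly under the involution $z'\mapsto 1/z'$ of $\mathcal{H}$. The main conceptual point, and what makes QAM more symmetric than a generic signal set, is that the maps $z\mapsto jz$ and $z\mapsto\bar z$ (and hence the full dihedral symmetry group of the square) are inherited by $\mathcal{H}$ precisely because $\Delta{\cal S}$ is a square lattice array invariant under them.
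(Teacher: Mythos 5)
Your argument is correct, and it is essentially the reasoning the paper itself relies on but never writes out: the statement is labelled an Observation, the inversion claim is delegated to \cite{VNR}, and the periodicity and reflection claims are stated without proof. Your scaling identity $|d_k + w d_l| = |w|\,\vert d_l + w^{-1} d_k\vert$ together with the relabelling $(d_k,d_l)\mapsto(d_l,d_k)$ is precisely the mechanism behind the result quoted from \cite{VNR}, and the two points you flag for care --- that the common factor $1/|w''|$ cannot move the arg max, and that $z'\mapsto 1/z'$ is an involution of $\mathcal{H}$ so the quantifier ``$\forall z'\neq z$'' transfers --- are exactly the steps that make the region-level statement legitimate. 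The maps $z\mapsto jz$ and $z\mapsto j\bar z$ that you obtain from closure of $\Delta\mathcal{S}$ under multiplication by $j$ and under conjugation are the same transformations that appear, written out in real coordinates, in the paper's proofs of Lemma \ref{iso_pi_bi_two} and Lemma \ref{refl_sym} (Appendix \ref{app7}); there they are used to produce permuted Latin Squares, whereas you use them one level down, as set symmetries of $\mathcal{H}$, which is what the Observation actually asserts. Your packaging of the reflection about $\theta=a+\pi/4$ as $z\mapsto(-1)^k j\bar z$, a composite of the two generators, is a clean way to obtain the claim in all four sectors simultaneously, and it also makes transparent the point the paper only hints at: these symmetries of $\mathcal{H}$ exist because $\Delta\mathcal{S}$ for square QAM is invariant under the dihedral symmetry group of the square.
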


\subsubsection*{Channel Quantization of 16-QAM Signal Set}
We consider the case when nodes A and B both use 16-QAM signal set. From Observation $\ref{obs2}$, it follows that we need to consider only those singular fade states which lie outside the unit circle and within the angular interval $\theta \in \left[ 0, \pi/4\right]$. The pair wise transition boundaries as given by Theorem $\ref{Thrm2}$ for adjacent pairs of singular fade states in this region give the region ${\cal R}_{\left\{-d_k/d_l\right\}}$, $|d_k| > |d_l|$ and $\measuredangle{ (-d_k/d_l) } \in \left[ 0, \pi/4\right]$. Using the symmetry property and periodicity as mentioned in Observation $\ref{obs2}$ we get the region ${\cal R}_{\left\{-d_k/d_l\right\}}$ corresponding to all singular fade states lying outside the unit circle centered at the origin. Again, from Observation $\ref{obs2}$, the regions corresponding to the singular fade states lying inside the unit circle is obtained by complex inversion of those obtained for singular fade states lying outside the unit circle. The regions corresponding to singular fade states lying on the unit circle are the remaining regions in the complex plane after the regions corresponding to all the other singular fade states are obtained.
%%%%%%%%%%%%%%%%%%%%%%%%%%%%%%%%%%%%%%%%%

%\vspace{-1 cm}
%\begin{figure}[htbp]
%\centering
%\includegraphics[totalheight=2.5in,width=3.5in]{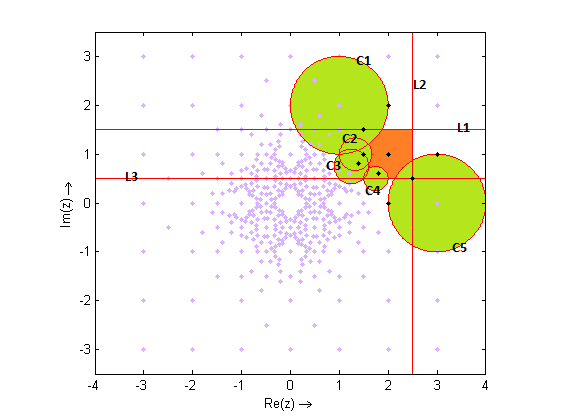} 
%\caption{Diagram showing the pairwise transition boundaries corresponding to the singular fade state $2+1j$}
%\label{fig:onesfs}
%% \vspace{-1cm}
%\end{figure}
%
%\begin{figure}[htbp]
%\centering
%\includegraphics[totalheight=2.5in,width=3.5in]{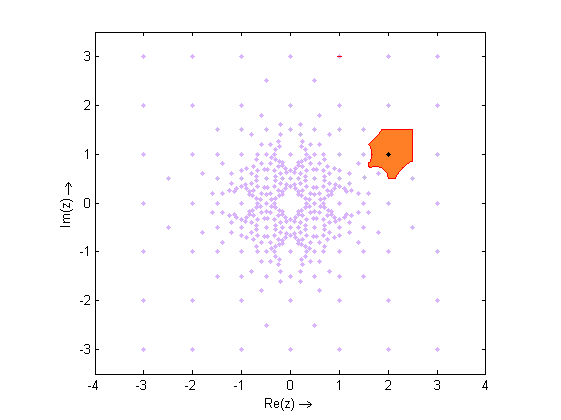} 
%\caption{The shaded region shows the region ${\cal R}_{\left\{ 2+1j\right\}}$ for 16-QAM signal set}
%\label{fig:2_1jsfs}
%% \vspace{-1cm}
%\end{figure}
%%%%%%%%%%%%%%%%%%%%%%%%%%%%%%%%%%%%%%%%%%%%
For 16-QAM signal set used at nodes A and B, there are 27 singular fade states in the region outside the unit circle and in the angular interval $\theta \in \left[ 0, \pi/4\right]$. To illustrate Theorem $\ref{Thrm2}$ consider the singular fade state $2+j$ in the said interval. It shares pairwise transition boundaries with 8 neighbouring singular fade states viz. $\{1.5+1.5j,~1.5+j,~1.4+0.8j,~1.8+0.6j,~2+2j,~3+j, 2.5+0.5j,~2 \}$. The pairwise transition boundary between $2+j$ and $1.5+1.5j$ is the circle $C1$ shown in Fig. $\ref{fig:onesfs}$. Similarly, circles $C2$, $C3$, $C4$ and $C5$ form the pairwise boundaries with $1.5+j$, $1.4+0.8j$, $1.8+0.6j$ and $2.5+0.5j$ respectively, for the singular fade state $2+j$ as shown in Fig. $\ref{fig:onesfs}$. The pairwise transition boundary between $2+j$ and $2+2j$ is the straight line $L1$ shown in Fig. $\ref{fig:onesfs}$. Further, from Fig. $\ref{fig:onesfs}$, lines $L2$ and $L3$ are the pairwise transition boundaries with $3+j$ and $2$ respectively for the singular fade state $2+j$. Finally, the region ${\cal R}_{\left\{ 2+j\right\}}$ is the shaded region shown in Fig. $\ref{fig:2_1jsfs}$.
%%%%%%%%%%%%%%%%%%%%%%%%%%%%%%%%%%%%%%%%%%%5
\begin{figure}[h]
\centering
\subfigure[Pairwise transition boundaries corresponding to $2+j$]{
\includegraphics[totalheight=2.5in,width=3.1in]{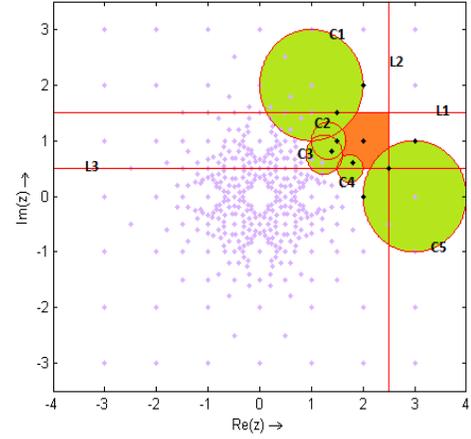}
\label{fig:onesfs}
}
%\vspace{-0.5 in}
\subfigure[Region ${\cal R}_{\left\{ 2+j\right\}}$ for 16-QAM]{
\includegraphics[totalheight=2.5in,width=3 in]{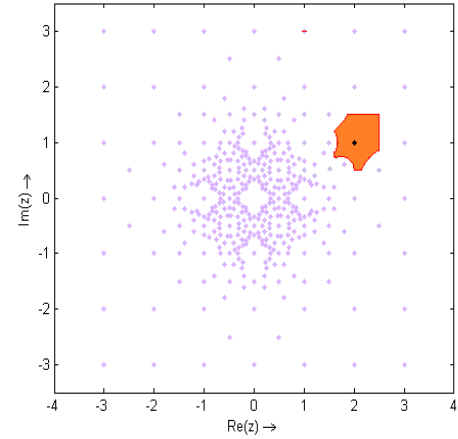}
\label{fig:2_1jsfs}
}
\label{example_2_j}
\caption{Diagram explaining the procedure to get region for a singular fade state}
\vspace{-0.2 in}
\end{figure}

\begin{figure}
\centering
\includegraphics[totalheight=2.5in,width=2.75in]{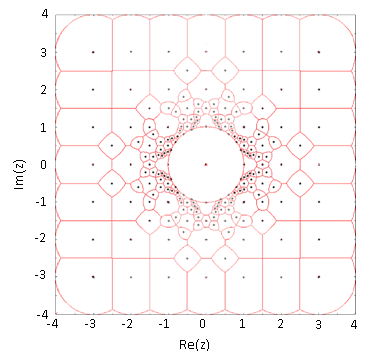} 
\caption{Diagram showing the regions associated with the singular fade states lying outside the unit circle for 16-QAM signal set}
\label{fig:full1}
\end{figure}

\begin{figure}[h]
\vspace{-0.2 in}
%\centering
\includegraphics[totalheight=3in,width=3.75in]{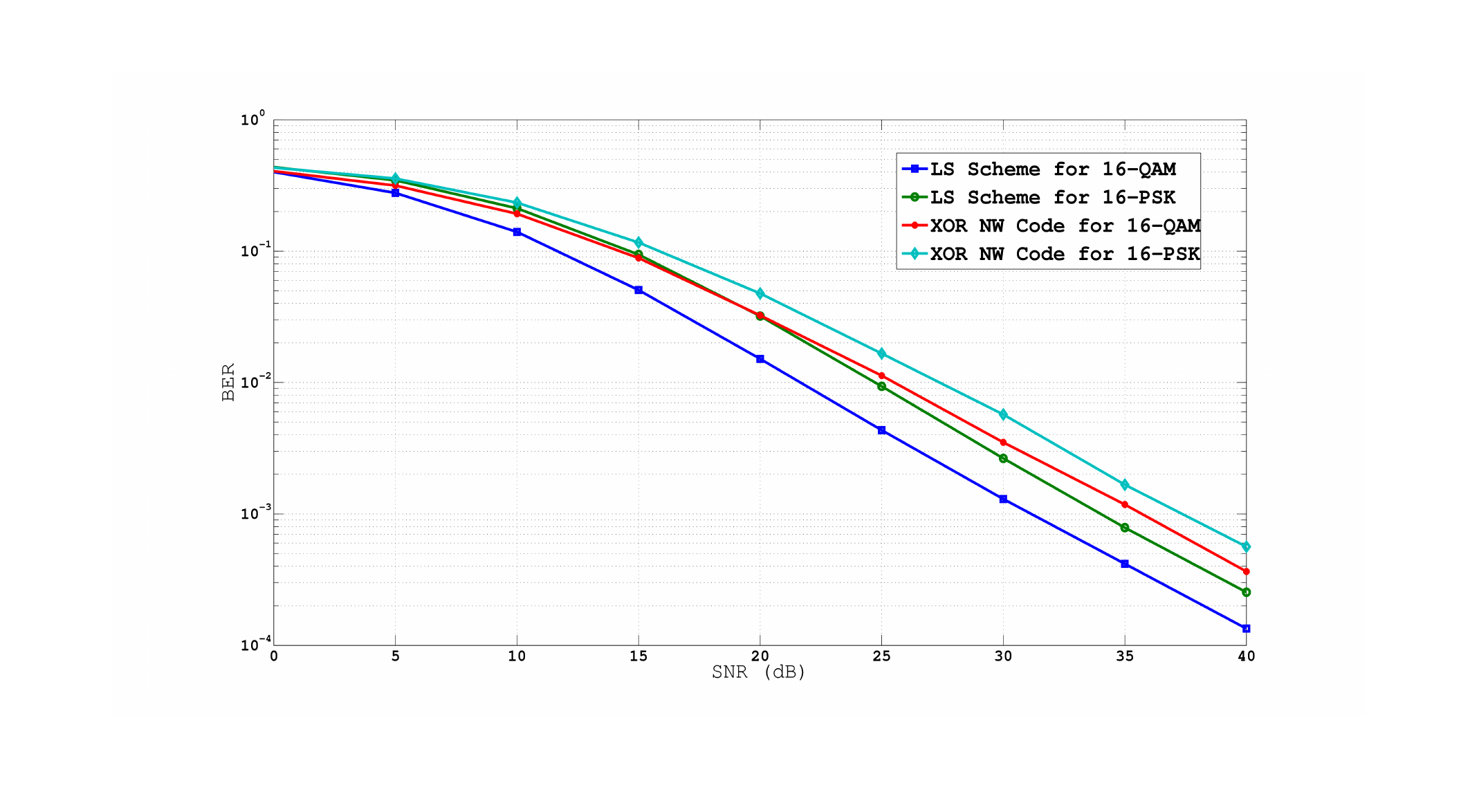}
\vspace{-0.6 in}
\caption{SNR vs BER for different schemes when the end nodes use 16-QAM and 16-PSK for a Rayleigh fading scenario.}     
\label{fig:rayleigh}        
\end{figure}

\begin{figure}[h]
\centering
\vspace{-0.35 in}
\includegraphics[totalheight=3in,width=3.75in]{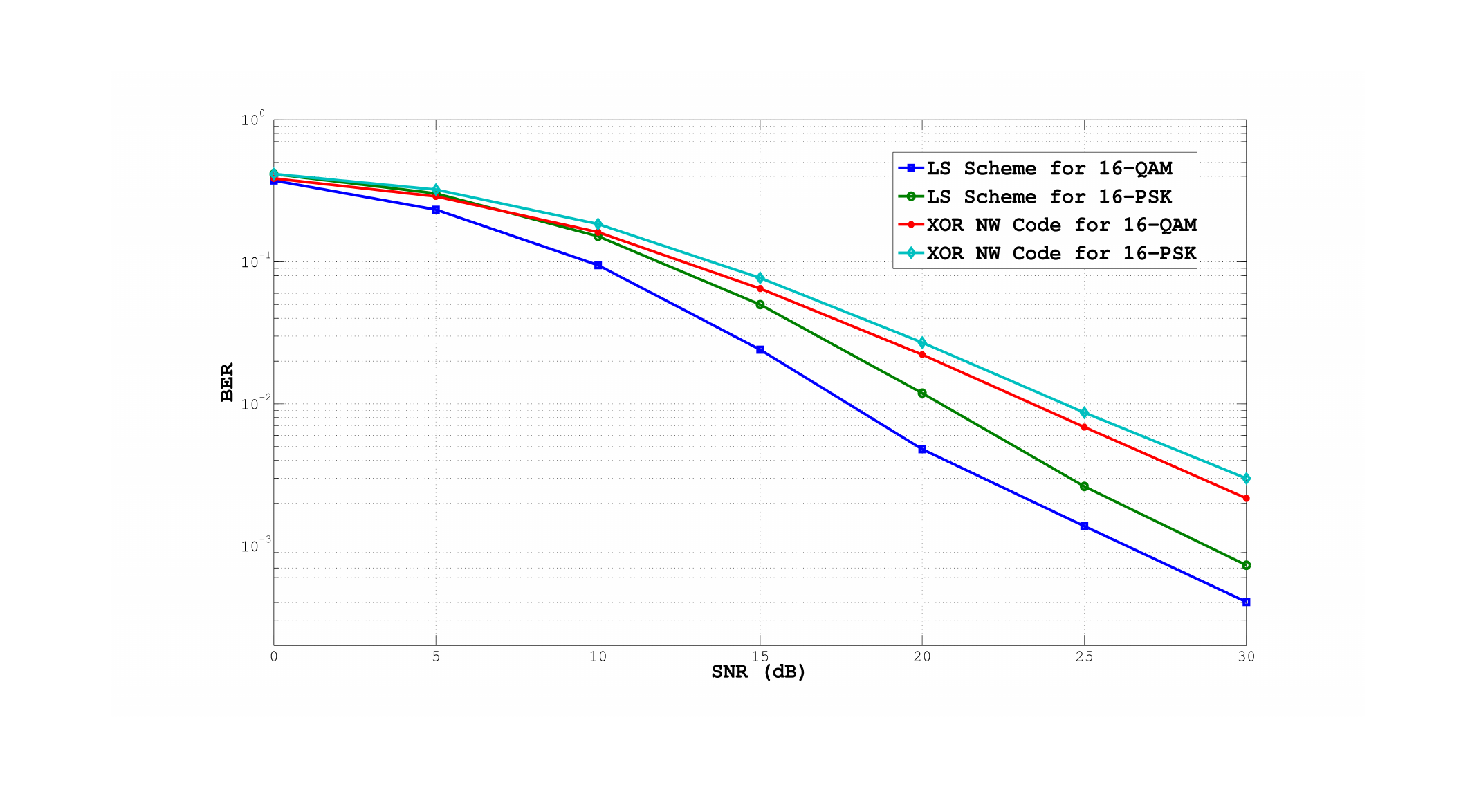}
\vspace{-0.6 in}
\caption{SNR vs BER for different schemes when the end nodes use 16-QAM and 16-PSK for a Rician fading scenario with Rician factor 5 dB. }     
\label{fig:rician}        
\end{figure}

Proceeding likewise for all the singular fade states lying outside the unit circle, we get the channel quantization for 16-QAM signal set as shown in Fig. $\ref{fig:full1}$.

%%%%%%%%%%%%%%%%%%%%%%%%%%%%%%%%%%%%%%%%%%%%%%%%%%%%%%%%%%%%%%%%%%%%%%%%%%%%%%%%%%%%%%%%%%%%
\section{Simulation Results}
\label{sec5}
%%%%%%%%%%%%%%%%%%%%%%%%%%%%%%%%%%%%%%%%%%%%%%%%%%%%%%%%%%%%%%%%%%%%%%%%%%%%%%%%%%

%%%%%%%%%%%%%%%%%%%%%%%%%%%%%%%%%%%%%%%%%%%%%%%%%%%%%%%%%%%%%%%%%%%%%%%%%%%%%%%%%%%%%%%%%%
The Latin Square (LS) scheme \cite{NVR} is based on removing the singular fade states. For 16-PSK all the 912 singular fade states can be removed with Latin Squares with number of symbols 16, but for 16-QAM some singular fade states cannot be removed with Latin Squares of 16 symbols, we used Latin Squares consisting 20 symbols for some singular fade states. Since 16-QAM has only 388 singular fade states, in comparison with 912 singular fade states of 16-PSK, and since 16-QAM offers better distance distribution in the MA stage $16$-QAM gives better performance. For a given average energy, the end to end BER is a function of distance distribution of the constellations used at the end nodes as well as at the relay. The simulation results for the end to end BER  as a function of SNR is presented for different fading scenarios.

Consider the case when $h_A , h_B , h_{A}^{\prime}$ and $h_{B}^{\prime}$ are distributed according to Rayleigh distribution, with the variances of all the fading links are assumed to be 0 dB. The end to end BER as a function of SNR in dB when the end nodes use 16-QAM signal sets as well as 16-PSK signal sets with same average energy is given in Fig.\ref{fig:rayleigh}. The end to end BER for XOR network code for 16-QAM is also given. It can be observed that the LS Scheme for 16-QAM outperforms LS Scheme for 16-PSK as well as XOR network code. 

Consider the case when $h_A , h_B , h_{A}^{\prime}$ and $h_B^{\prime}$ are distributed according to Rician distribution, with the Rician factor of 5 dB and the variances of all the fading links are assumed to be 0 dB. In Fig.\ref{fig:rician} the end to end BER as a function of SNR in dB for LS scheme for 16-PSK, 16-QAM and XOR network coding for 16-QAM is given. It is observed that the LS scheme gives large gain over the XOR network coding scheme. The LS scheme for QAM is  better in end to end BER performance in comparison with the LS scheme for PSK.  
%%%%%%%%%%%%%%%%%%%%%%%%%%%%%%%%%%%%%%%%%%%%%%%%%%%%%%%%%%%%%%%%%%%%%%%%%%%%%%%%%%%%%%%%%%%%
\section{Conclusion}
In this work, the design of modulation schemes for the physical layer network-coded two way relaying scenario when the end nodes use square QAM constellation is studied. We show that there are many advantages of using square QAM constellation over PSK signal set. Construction of the standard Latin square for removing the singular fade state $z$=1 for $M$-QAM is described and this is shown to be different from that used for $M$-PSK. Using the relation between exclusive law satisfying clusterings and Latin Squares, a method to remove all the other singular fade states is proposed and a means to derive the corresponding Latin squares is presented. This gives us all the maps to be used at the relay when square QAM constellation is used at the end nodes. The channel partition for QAM signal set is obtained analytically. Simulation results showing the end to end BER performance when the end nodes use PSK constellation as well as QAM constellations are obtained to support our claim.
%%%%%%%%%%%%%%%%%%%%%%%%%%%%%%%%%%%%%%%%%%%%%%%%%%%%%%%%%%%%%%%%%%%%%%%%%%%%%%%%%%%%%%%%%%%%%%%%%%%%%%%%%%%%%%%%%55  

%%%%%%%%%%%%%%%%%%%%%%%%%%%%%%%%%%%%%%%%%%%%%%%%%%%%%%%%%%%%%%%%%%%%%%%%%%%%%%%%%%%%%%

%%%%%%%%%%%%%%%%%%%%%%%%%%%%%%%%%%%%%%%%%%%%%%%%%%%%%%%%%%%%%%%%%%%%%%%%%%%%%%%%%%%%%%%%%55
\appendices

%%%%%%%%%%%%%%%%%%%%%%%%%%%%%%%%%%%%%%%%%%%%%%%%%%%%%%%%%%%%%%%%%%%%%%%%%%%%%%%%%%%%%%%%%%%%%

\section{PROOF OF LEMMA \ref{no_sing_pam}}
\label{app2}
There are $2(\sqrt M-1)$ non-zero signal points in the difference constellation $\Delta\mathcal{S}$ and since $ \Delta\mathcal{S}$ is symmetric about zero there are $\sqrt M-1$ signal points in $\Delta{S}^+$. All these are scaled version of nonzero elements of $\mathbb{Z}_{\sqrt M}$. %The number of singular fade states becomes a function of the number of different relatively prime pairs available in $\mathbb{Z}_{\sqrt M}$. 

The number of positive integers less than or equal to $n$ that are relatively prime to $n$ is given by, 
\begin{equation}
\label{app_euler}
 \psi(n)= n \prod_{p|n} \left(1-\frac{1}{p}\right)
\end{equation}
\noindent
where the product is taken over distinct prime numbers $p$ dividing $n$.  To get the total number of  relatively prime pairs in $\mathbb{Z}_{\sqrt M}$, we take the sum over all nonzero $n \in \mathbb{Z}_{\sqrt M}$  which gives $ \sum_{n=1}^{\sqrt M -1}n \prod_{p\vert n} \left(1-\frac{1}{p}\right).$ One relatively prime pair $(a,b)$ gives two singular fade states, $a/b$ and $b/a$. The multiplication factor $4$ in \eqref{sum_euler} accounts for the  negative side of the in-phase axis as well as the inverses. Finally, 2 is added to count the singular fade state $z=1$ and $z=-1$. \begin{flushright}
\vspace{-0.68 cm}
$\blacksquare$
\end{flushright}

\section{PROOF OF LEMMA \ref{refl_sym}}
\label{app7}
Let  $\gamma e^{j \theta}$ = $\gamma_{I}+j \gamma_{Q}$, such that $\gamma_{I}^2 + \gamma_{Q}^2 = \gamma^2$ and $tan^{-1}(\frac{\gamma_{Q}}{\gamma_{I}})=\theta$. Then $\gamma e^{j (90-\theta)}$ = $\gamma_{Q}+j \gamma_{I}$. We have,
\begin{align*}
\gamma e^{j \theta}= \dfrac{[x_A-x_A^{\prime}]}{[x_B^{\prime}-x_B]}.
\end{align*}

\noindent
Let $x_A-x_A^{\prime}= x_{d_{A}I}+j x_{d_{A}Q}$ and $x_B^{\prime}-x_B= x_{d_{B}I}+j x_{d_{B}Q}$. Then,
\begin{align*}
\gamma_{I}+j \gamma_{Q}= \dfrac{x_{d_{A}I}+j x_{d_{A}Q}}{x_{d_{B}I}+j x_{d_{B}Q}}.
 \end{align*}

\noindent
After expansion, we get
\begin{align}
\label{before_swap}
\nonumber
\gamma_{I}+j \gamma_{Q}&=\dfrac{(x_{d_{A}I} x_{d_{B}I} + x_{d_{A}Q} x_{d_{B}Q})}{(x_{d_{B}I}^{2}+x_{d_{B}Q}^{2})} +  \\ 
&~~~~~~~~~~~~~~~~~~\frac{j (x_{d_{A}Q} x_{d_{B}I}- x_{d_{A}I}x_{d_{B}Q})}{(x_{d_{B}I}^{2}+x_{d_{B}Q}^{2})}
\end{align}
%\noindent
and
\begin{align}
\label{after_swap}
\nonumber
\gamma_{Q}+j \gamma_{I}&=\dfrac{(x_{d_{A}Q} x_{d_{B}I} - x_{d_{A}I} x_{d_{B}Q})}{(x_{d_{B}I}^{2}+x_{d_{B}Q}^{2})} + 
\\
&~~~~~~~~~~~~~~~~~~\frac{j (x_{d_{A}I} x_{d_{B}I} + x_{d_{A}Q} x_{d_{B}Q})}{(x_{d_{B}I}^{2}+x_{d_{B}Q}^{2})}.
\end{align}

%\noindent
By comparing above expressions we can say that $x_{d_{A}I}$, $x_{d_{A}Q}$ ,$x_{d_{B}I}$ and $x_{d_{B}Q}$ in \eqref{before_swap} are changed to $x_{d_{A}Q}$, $x_{d_{A}I}$, $x_{d_{B}I}$ and $-x_{d_{B}Q}$ in \eqref{after_swap}. i.e., the difference constellation points $x_{d_{A}I}+j x_{d_{A}Q}$ and $x_{d_{B}I}+j x_{d_{B}Q}$ whose ratio gives singular fade state ($\gamma$, $\theta$) are converted to $x_{d_{A}Q}+j x_{d_{A}I}$ and $x_{d_{B}I}-j x_{d_{B}Q}$ and whose ratio gives singular fade state ($\gamma$, $(90-\theta)$). Let $x_k=x_{{k}I}+j x_{{k}Q}$, where $k \in \{ A,B,A^{\prime}, B^{\prime}\}$ and let the singularity removal constraint for $(\gamma, \theta)$,be $(x_A,x_B)(x_A^{\prime},x_B^{\prime})= (x_{{A}I}+j x_{{A}Q},x_{{B}I}+j x_{{B}Q})(x_{{A}I}^{\prime}+j x_{{A}Q}^{\prime},x_{{B}I}^{\prime}+j x_{{B}Q}^{\prime})$. The singularity removal constraint for $(\gamma, (90-\theta))$, can be written as $(x_{{A}Q}+j x_{{A}I},x_{{B}I}-j x_{{B}Q})(x_{{A}Q}^{\prime}+j x_{{A}I}^{\prime},x_{{B}I}^{\prime}-j x_{{B}Q}^{\prime})$. Since constellation points used by node A are indexed by rows in the Latin Square, the rows corresponding to constellation point $x_{{A}I}+j x_{{A}Q}$ are permuted to the rows denoted by $x_{{A}Q}+j x_{{A}I}$ and since the signal points used by node B are indexed by columns in the Latin Square, the columns corresponding to constellation point $x_{{B}I}+j x_{{B}Q}$ are permuted to the columns denoted by $x_{{B}I}-j x_{{B}Q}$, to get the Latin Square to remove $(\gamma,(90-\theta))$.
\begin{flushright}
\vspace{-0.68 cm}
$\blacksquare$
\end{flushright}

\section{PROOF OF LEMMA \ref{pam_std_lemma}}
\label{app8}
Consider the $\sqrt M$-PAM signal set with the signal points labelled from left to right as discussed in Section \ref{sec2}. Let $\{(k_1,l_1)(k_2,l_2)\}$ be a singularity removal constraint. To get the same point in the received constellation at the relay R, when $z=1,$ we have $k_1+l_1=k_2+l_2.$ Consider the following two cases satisfying this equality.
\noindent
Case (i): $k_2=l_1, l_2=k_1$ In this case the constraint becomes $\{(k_1,l_1)(l_1,k_1)\},$ i.e., the Latin Square which removes $z=1$ should be symmetric about main diagonal.\\
\noindent
Case (ii): $k_2=k_1+m, l_2=l_1-m$ for any $m \leq \sqrt M.$, The constraint now becomes $\{(k_1,l_1)(k_1+m,l_1-m)\}$ which means the symbol in $k_1$-th row and $l_1$-th column should be repeated in the $k_1+1$-th row and the $l_1-1$-th column. \\
It is easily seen that a left-cyclic Latin Square satisfies both this requirements.
\begin{flushright}
\vspace{-0.68 cm}
$\blacksquare$
\end{flushright}

\section{PROOF OF LEMMA \ref{qam_ls}}
\label{app9}
Note that the matrix in Fig. \ref{fig:qam_const} is a $M \times M$  matrix, which is also a $\sqrt M \times \sqrt M$ block left-cyclic  matrix where each block is a $\sqrt M \times \sqrt M$  left-cyclic matrix $L_{PAM(i)}$ for some $i.$      

Let $a_{1}+jb_{1}, a_{2}+jb_{2},a_{1}^{\prime}+jb_{1}^{\prime}$ and $a_{2}^{\prime}+jb_{2}^{\prime}$, where $a_{i},a_{i}^{\prime},b_{i}$ and $b_{i}^{\prime} \in \{-(\sqrt M-1),-(\sqrt M-3),\cdots,(\sqrt M-3),(\sqrt M-1)\}$ for $i \in \{1,2\}$ be four $M$-QAM constellation points such that $a_{1}+jb_{1}$ and $a_{1}^{\prime}+jb_{1}^{\prime}$ are used by node A and $a_{2}+jb_{2}$ and $a_{2}^{\prime}+jb_{2}^{\prime}$ are used by end node B, and result in a same point in the effective received constellation at the relay node for singular fade state $z=1$, i.e.,
\begin{align*}
a_{1}+jb_{1}+a_{2}+jb_{2}=a_{1}^{\prime}+jb_{1}^{\prime}+a_{2}^{\prime}+jb_{2}^{\prime}.
\end{align*}

Let $a_{1}^{\prime}=a_1+m_1$ and $b_{1}^{\prime}=b_1+m_2$ where $m_1,m_2 \in \{-2(\sqrt M -1), -2(\sqrt M -2),\cdots,2(\sqrt M -2),2(\sqrt M -1)\}$. Then, $a_{2}^{\prime}=a_2-m_1$ and $b_{2}^{\prime}=b_2-m_2.$ Then, using the map defined in \eqref{mumap}, let
%It can be seen that for a $a_1+jb_1$ and $a_2+jb_2$, we need to consider values of $m_1$ and $m_2$ only when $a_1+m_1+j(b_1+m_2)$ and $a_2-m_1+j(b_2-m_2)$ are $M$-QAM constellation points, such $m_1, m_2$ values are called valid. Let $\mu$ be tdenote the complex number corresponding to a M-QAM constellation point to symbol(from $\mathbb{Z}_M$) mapping. Let
\begin{align*}
k_1&=\mu(a_{1}+jb_{1}),\\ 
l_1&=\mu(a_{2}+jb_{2}),\\
k_2=\mu(a_{1}^{\prime}+jb_{1}^{\prime})&=\mu(a_1+m_1+j(b_1+m_2))~~\mathrm{and}\\
l_2=\mu(a_{2}^{\prime}+jb_{2}^{\prime})&=\mu(a_2-m_1+j(b_2-m_2)).
\end{align*}
Since, for $z=1,$ the four complex numbers result in the same point in the effective constellation at the relay, $\{(k_1,l_1)(k_2,l_2)\}$ is a singularity removal constraint for $z=1.$ From the 
above equations it follows that
%mapping $\mu$ defined in section \ref{sfs_of_qam} $\mu(A_{mI}+jA_{mQ})=\frac{1}{2}[(\sqrt M -1 +A_{mI})\sqrt M + (\sqrt M -1 +A_{mQ})]$, it is obtained that for valid $m_1$ and $m_2$,
\begin{align*}
k_2=k_1+\frac{1}{2} (m_1 \sqrt M+m_2)\\
l_2=l_1-\frac{1}{2}(m_1 \sqrt M+m_2),
\end{align*}
The above equations precisely mean the construction shown in Fig.\ref{fig:qam_const}. This completes the proof. 
\begin{flushright}
\vspace{-0.25 in}
$\blacksquare$
\end{flushright}

\section{PROOF OF THEOREM $\ref{Thrm1}$ }
\label{appen1}
From ($\ref{eqnGCI}$) and ($\ref{eqnGCIext}$) we have
\begin{align}
\nonumber
\label{appen1eqn1}
\Gamma_{CI}^{ext}({\cal S}) = \{ \gamma e^{j\theta} : \left| d_k + \gamma e^{j\theta} d_l \right| \geq \min	( d_{min}({\cal S}), \gamma d_{min}({\cal S}) ) \\
\forall ( d_k, d_l )\neq ( 0, 0 ) \in \Delta {\cal S} \times \Delta {\cal S}, \gamma > 1, - \pi \leq \theta < \pi \}.
\end{align}
Let $-d_k/d_l = z \in {\cal H}$ with $|z|>1$. Then, $\forall \gamma e^{j\theta} \in \Gamma_{CI}^{ext}\left({\cal S}\right)$, we have
\begin{align}
\label{appen1eqn2}
\nonumber
&|d_l|^2 |-z+\gamma e^{j\theta}|^2 \geq \left(d_{min}\left( {\cal S}\right)\right)^2  \\ &~~~~~~~~~~~~~\forall d_l \neq 0 \in {\Delta\cal S} , \forall z=-d_k/d_l \in {\cal H} ~\mathrm{with}  ~|z|>1.
\end{align}
In particular, $\forall \gamma e^{j\theta} \in \Gamma^{ext}_{CI}\left({\cal S}\right)$
\begin{align}
\label{eqnextineq}
\min_{d_l \neq 0 \in {\Delta\cal S}} \left \{ |d_l|^2 |-z+\gamma e^{j\theta}|^2 \right \}\geq \left(d_{min}\left( {\cal S}\right)\right)^2.
\end{align}
Now $\min_{d_l \neq 0 \in {\Delta\cal S}}\left \{ |d_l|^2 \right \}= \left(d_{min}\left( {\cal S}\right)\right)^2= 4/\rho$, from ($\ref{eqndmin}$) for $M$-QAM. Using ($\ref{eqndifc}$), $|d_l|=2/\sqrt{\rho}$ implies that $d_l$ is of the form $\pm 2/\sqrt{\rho}$ or $ \pm j  2/\sqrt{\rho}$. For such $d_l$, $z=-d_k/d_l$ is of the form $\alpha + j\beta$, where $\alpha, \beta \in {\cal S''} \triangleq {\left\lbrace {-\left(\sqrt{M}-1 \right)}, {-\left(\sqrt{M}-2 \right)},..., {\left(\sqrt{M}-2 \right)}, {\left(\sqrt{M}-1 \right)} \right\rbrace.}$
So ($\ref{eqnextineq}$) reduces to the form
\begin{align}
\label{eqnoutercircles}
|-\alpha -j\beta + \gamma e^{j\theta}|^2\geq 1,~\forall \alpha, \beta \in {\cal S''}.
\end{align}
This is the exterior of the unit circle with center ($\alpha,\beta$). It must be noted here that $\forall \gamma e^{j\theta} \in \Gamma^{ext}_{CI}({\cal S})$ and $\forall \alpha, \beta \in {\cal S''}$, ($\ref{eqnoutercircles}$) must hold.
If ${(\alpha,\beta) \in \left({\cal S''}-\{\pm (\sqrt{M}-1 )\} \right) \times \left({\cal S''} -\{\pm (\sqrt{M}-1 )\} \right)}$, ($\ref{eqnoutercircles}$) does not hold for ($\alpha,\beta$) $\in \left\lbrace\pm \left(\sqrt{M}-1 \right)\right\rbrace \times \left\lbrace\pm \left(\sqrt{M}-1 \right)\right\rbrace $.
So either one or both of $\alpha$ and $\beta$ must belong to the set $\left\lbrace\pm \left(\sqrt{M}-1 \right)\right\rbrace $ and for such a choice, ($\ref{eqnoutercircles}$) holds for the outer envelop region of the unit circles with centers ($\alpha,\beta$). This means
\begin{align}
\label{eqncentres}
\nonumber
\alpha+j\beta \in &\left\lbrace \pm\left(\sqrt{M}-1 \right)+ jx, x \pm j\left(\sqrt{M}-1 \right)\right\rbrace,\\ &~~~~~~~~~~~~~~~~~~~~~~~~~~~~~~~~ \mathrm{where} ~x \in {\cal S''}.
\end{align}
From ($\ref{eqncentres}$) the number of such circles is $2\times 2\left( 2\sqrt{M}-1\right)-4 = 8\left( \sqrt{M}-1\right)$.
\begin{flushright}
\vspace{-0.25 in}
$\blacksquare$
\end{flushright}

\section{PROOF OF THEOREM $\ref{Thrm2}$ }
\label{appen3}
At the pair wise transition boundary corresponding to the pair of clusterings ${\cal C}^{\left\lbrace -d_{k_1}/d_{l_1} \right\rbrace}$ and ${\cal C}^{\left\lbrace -d_{k_2}/d_{l_2} \right\rbrace}$, denoting $-d_{k_1}/d_{l_1}$ as $z_1$ and $-d_{k_2}/d_{l_2}$ as $z_2$, the set of values of $\gamma e^{j\theta}$ satisfies
\begin{align}
\label{appn2eqn1}
\min_{\substack{{d_{l_1}:} \\ {\frac{-d_{k_1}}{d_{l_1}}=z_1}}} \left\lbrace \vert d_{k_1} + \gamma e^{j\theta} d_{l_1} \vert \right\rbrace = \min_{\substack{{d_{l_2}:} \\ {\frac{-d_{k_2}}{d_{l_2}}=z_2 \neq z_1}}} \left\lbrace \vert d_{k_2} + \gamma e^{j\theta} d_{l_2} \vert \right\rbrace.
\end{align}

Say the $min$ in the LHS and RHS of ($\ref{appn2eqn1}$) are achieved at $\check{d_l}$ and $\check{d_{l'}}$ respectively. Using the notation used in ($\ref{eqndifc}$), ($\ref{appn2eqn1}$) becomes,
%\begin{scriptsize}
\begin{align}
\label{appen2eqn2}
\nonumber
\vert n_{k_1} +jm_{k_1} + \gamma e^{j\theta} &\left(\check{n_{l}} +j\check{m_{l}} \right) \vert = \\
&\vert n_{k_2} +jm_{k_2} + \gamma e^{j\theta} \left( \check{n_{l'}} +j\check{m_{l'}} \right) \vert.
\end{align}
%\end{scriptsize}
With $\gamma e^{j\theta} = \gamma_R + j\gamma_I$, squaring and rearranging ($\ref{appen2eqn2}$) gives
\begin{align}
\nonumber
\label{appen2eqn3}
&{( {\check{n_l}^2+\check{m_l}^2-{\check{n_{l'}}}^2-{\check{m_{l'}}}^2})\gamma_R^2+({\check{ n_l}^2+\check{m_l}^2-{\check{n_{l'}}}^2-{\check{m_{l'}}}^2})\gamma_I^2}
\nonumber
\\ &~~~~~+ 2\left( n_{k_1}\check{n_l}+m_{k_1}\check{m_l}-n_{k_2}\check{n_{l'}}- m_{k_2}\check{m_{l'}}\right)\gamma_R  
\nonumber
\\ &~~~~~~~~~~+ 2\left( m_{k_1}\check{n_l}-n_{k_1}\check{m_l}-m_{k_2}\check{n_{l'}}+n_{k_2}\check{m_{l'}}\right)\gamma_I 
\nonumber
\\&~~~~~~~~~~~~~~~~~~= {n_{k_2}}^2+{m_{k_2}}^2-{n_{k_1}}^2-{m_{k_1}}^2
\end{align}
Now, since 
\begin{align}
\nonumber
\label{appen2eqn4}
\Re \left( -d_k/d_l\right) &= \Re \left( -\frac{n_k+jm_k}{n_l+jm_l}\right)\\
&=-\frac{n_{k}n_l+m_{k}m_l}{n_l^2+m_l^2}\\
\nonumber
\mathrm{and}\\
\nonumber
\Im \left( -d_k/d_l\right) &= \Im \left( -\frac{n_k+jm_k}{n_l+jm_l}\right)\\
&=-\frac{m_{k}n_l-n_{k}m_l}{n_l^2+m_l^2}
\end{align}
($\ref{appen2eqn3}$) becomes
\begin{align}
\nonumber
\label{appen2eqn5}
&\left( |\check{d_l}|^2- |\check{d_{l'}}|^2\right)\left( {\gamma_R}^2+{\gamma_I}^2\right)-2\left( \Re\left( z_1\right)|\check{d_l}|^2- \Re\left( z_2\right)|\check{d_{l'}}|^2\right)\\ &~~~-2\left( \Im\left( z_1\right)|\check{d_l}|^2- \Im\left( z_2\right)|\check{d_{l'}}|^2\right)
=|z_2|^2 |\check{d_{l'}}|^2-|z_1|^2 |\check{d_{l}}|^2
\end{align}
If $|\check{d_l}|^2 \neq |\check{d_{l'}}|^2$, ($\ref{appen2eqn5}$) is the equation of a circle of the form 
\begin{align*}
{\left( \gamma_R-x\right)}^2 + {\left( \gamma_I-y\right)}^2 = r^2
\end{align*}
with $x$, $y$ and $r$ as given in Theorem $\ref{Thrm2}$. \\
If $|\check{d_l}|^2 = |\check{d_{l'}}|^2$, ($\ref{appen2eqn5}$) is a linear equation. Since $z_1 \neq z_2$, the coefficients of both $\gamma_R$ and $\gamma_I$ cannot be simultaneously zero, so that we get the equation of a straight line.
\begin{flushright}
\vspace{-0.68 cm}
$\blacksquare$
\end{flushright}
%%%%%%%%%%%%%%%%%%%%%%%%%%%%%%%%%%%%%%%%%%%%%%%%%%%

%%%%%%%%%%%%%%%%%%%%%%%%%%%%%%%%%%%%%%%%%%%%%%%%%%

\begin{thebibliography}{} 
\bibitem{ZLL}
S. Zhang, S. C. Liew and P. P. Lam, "Hot topic: Physical-layer Network Coding", {\it Proc. ACM Annual Int. Conf. Mobile Computing and Networking}, Los Angeles, Sept. 2006, pp. 358-365.

\bibitem{KMT}
S. J. Kim, P. Mitran and V. Tarokh, "Performance Bounds for Bidirectional Coded Cooperation Protocols", {\it IEEE Trans. Inf. Theory}, Vol. 54, pp. 5235-5241, Nov. 2008.


\bibitem{PoY}
P. Popovski and H.Yomo, "Physical Network Coding in Two-Way Wireless Relay Channels", {\it Proc. IEEE Int. Conf. Communications}, Glasgow, June 2007, pp. 707-712

\bibitem{APT1}
T.Koike-Akino, P.Popovski and V.Tarokh, "Optimized constellation for two-way wireless relaying with physical network coding", {\it IEEE J. Sel. Areas Commun.}, vol.27, pp. 773-787, June 2009.

\bibitem{APT2}
T.Koike-Akino, P.Popovski and V.Tarokh, "Denoising strategy for convolutionally-coded bidirectional relaying", {\it Proc. IEEE Int. Conf. Communications}, Dresden, June 2009.


\bibitem{HeN}
B.Hern and K.Narayanan , "Multilevel Coding Schemes for Compute-and-Forward", {\it Proc. IEEE Int. Symp. Information Theory}, St. Petersburg, July 2011, pp. 1713-1717.


\bibitem{NVR}
Vishnu Namboodiri, Vijayvaradharaj Muralidharan and B. Sundar Rajan, "Wireless Bidirectional Relaying and Latin Squares", {\it Proc. IEEE Wireless Communications and Networking Conf}, Paris, April 2012 (a detailed version of this paper is available in arXiv: 1110.0084v2 [cs.IT], 16 Nov. 2011).

\bibitem{VNR}
Vijayvaradharaj Muralidharan, Vishnu Namboodiri, and B. Sundar Rajan, "Channel Quantization for Physical Layer Network-Coded Two-Way Relaying", {\it Proc. IEEE Wireless Communications and Networking Conf}, Paris, April 2012 (a detailed version of this paper is available in arXiv: 1109.6101v2 [cs.IT], 16 Nov.2011).

\bibitem{VNRarX}
Vijayvaradharaj T Muralidharan, Vishnu Namboodiri and B. Sundar Rajan, "Channel Quantization for Physical Layer Network-Coded Two- way Relaying", available online at arXiv: 1109.6101v2 [cs.IT], 16 Nov.2011.

\bibitem{Rod}
Chris A. Rodger, "Recent Results on The Embedding of Latin Squares and Related Structures, Cycle Systems and Graph Designs", Le Matematiche, Vol. XLVII (1992)- Fasc. II, pp. 295-311.

\bibitem{Sto}
Douglas S. Stones, "On the Number of Latin Rectangles", Ph.D. Thesis, Monash University, November 2009.

\bibitem{Bu}
B. Burton, "Completion of partial Latin squares", Honours Thesis, University of Queensland, 1997.

\bibitem{LB}
Lee A. Butler, "A Classification of Gaussian Primes". Available online at the URL, "www.maths.bris.ac.uk/$\sim$malab/PDFs/2ndYearEssay.pdf".

\bibitem{KR}
Sudipta Kundu and B. Sundar Rajan, "An adaptive modulation scheme for two-user fading MAC with quantized fade state feedback", {\it 2012 IEEE 23rd International Symposium on Personal Indoor and Mobile Radio Communications}, Sept. 2012, pp.512-518. Also to appear in {\it IEEE Trans. Wireless Communications}.

\bibitem{VR}
Vijayvaradharaj T Muralidharan, B. Sundar Rajan, "Performance analysis of adaptive physical layer network coding for wireless two-way Relaying", {\it 2012 IEEE 23rd International Symposium on Personal Indoor and Mobile Radio Communications}, Sept. 2012, pp.596-602. Also to appear in {\it IEEE Trans. Wireless Communications}.

% \bibitem{SR}
% Srishti Shukla and B. Sundar Rajan, "Wireless Network-Coded Three-Way Relaying Using Latin Cubes," available online at arxiv:arXiv:1112.1584v2[cs.IT],10 Jan.2012.
% 
% \bibitem{VNR3}
% Vijayvaradharaj T Muralidharan, Vishnu Namboodiri and B. Sundar Rajan,``Wireless Network- Coded Bidirectional Relaying using Latin Squares for $M$-PSK Modulation,`` Communicated to IEEE Transactions on Information Theory.

\end{thebibliography}
\end{document}